\newcommand{\yy}[1]{{\color{magenta} #1}\normalcolor}
\newcommand{\commentnote}[1]{\todo[color=white,inline=true]{{ICDE comment: \footnotesize \textcolor{blue}{#1}\par}}\xspace}
\newcommand{\remove}[1]{}
\newcommand{\ignore}[1]{}
\newtheorem{observation}{Observation}
\begin{document}
	
\title{Fast Data Series Indexing for In-Memory Data}
%
%\author{Botao PENG}
%\affiliation{%
%  \institution{LIPADE, Paris Descartes University}
% \streetaddress{45 rue des saints peres}
%  \city{Paris}
%  \state{France}
%  \postcode{75006}
%}
%\email{botao.peng@parisdescartes.fr}

%\author{Panagiota Fatourou}
%\affiliation{%
%  \institution{Department of Computer Science, University of Crete \& FORTH ICS}
 % \streetaddress{45 rue des saints peres}
  %\city{Paris}
  %\state{France}
  %\postcode{75006}
%}
%\email{faturu@csd.uoc.gr}
%\author{Themis Palpanas}
%\affiliation{%
% \institution{LIPADE, Paris Descartes University}
% \streetaddress{45 rue des saints peres}
 %\city{Paris}  
 %\state{France}
 % \postcode{75006}
%}
%\email{themis@mi.parisdescartes.fr}

%\numberofauthors {3}

%author{
%	\alignauthor 
%	Botao PENG\\
%	\affaddr{LIPADE, Universit{\'e} de Paris}\\
	%\affaddr{45 rue des saints peres}\\
	%\affaddr{Paris, France}\\
%	\email {\large botao.peng@hotmail.com}
%	\alignauthor 
%	Panagiota Fatourou\\
%	\affaddr{Department of Computer Science, University of Crete \& FORTH ICS}\\
	%\affaddr{45 rue des saints peres}\\
	%\affaddr{Paris, France}\\
%	\email {\large faturu@csd.uoc.gr}
%	\alignauthor
%	Themis Palpanas\\
%	\affaddr{LIPADE, Universit{\'e} de Paris}\\
	%\affaddr{45 rue des saints peres}\\
	%\affaddr{Paris, France}\\
%	\email{\large themis@mi.parisdescartes.fr}
%}

\author{Botao Peng         \and
	Panagiota Fatourou\and
	Themis Palpanas
}

%\authorrunning{Short form of author list} % if too long for running head

\institute{B. Peng \at
	Institute of Computing Technology, Chinese Academy of Sciences \\
	\email{ pengbotao@ict.ac.cn}           %  \\
	%             \emph{Present address:} of F. Author  %  if needed
	\and
	P. Fatourou \at
	FORTH ICS
	\email { faturu@csd.uoc.gr}
	\and
	T. Palpanas \at
		LIPADE, Universit{\'e} de Paris \& French University Institute (IUF) \\
	\email{ themis@mi.parisdescartes.fr} 
}

\date{Received: date / Accepted: date}

\maketitle
% The default list of authors is too long for headers.

\begin{abstract}
Data series similarity search is a core operation for several data series analysis applications across many different domains. 
However, the state-of-the-art techniques fail to deliver the time performance required for interactive exploration, 
or analysis of large data series collections.
In this work, we propose MESSI, the first data series index designed for in-memory operation on modern hardware.
Our index takes advantage of the modern hardware parallelization opportunities (i.e., SIMD instructions, multi-socket and multi-core architectures), in order to accelerate both index construction and similarity search processing times. 
Moreover, it benefits from a careful design in the setup and coordination of the parallel workers and data structures, so that it maximizes its performance for in-memory operations.
MESSI supports similarity search %on both Z-normalized and non Z-normalized data, 
using both the Euclidean and Dynamic Time Warping (DTW) distances.
Our experiments with synthetic and real datasets demonstrate that overall MESSI is up to 4x faster at index construction, and up to 11x faster at query answering than the state-of-the-art parallel approach. 
MESSI is the first to answer exact similarity search queries on 100GB datasets in $\sim$50msec (30-75msec across diverse datasets), which enables real-time, interactive data exploration on very large data series collections.
% (that was designed for disk-resident data). 
\end{abstract}

%
% The code below should be generated by the tool at
% http://dl.acm.org/ccs.cfm
% Please copy and paste the code instead of the example below.

%\begin{CCSXML}
%<ccs2012>
% <concept>
%%  <concept_id>10010520.10010553.10010562</concept_id>
 % <concept_desc>Computer systems organization~Embedded systems</concept_desc>
 %% <concept_significance>500</concept_significance>
 %</concept>
% <concept>
%  <concept_id>10010520.10010575.10010755</concept_id>
%  <concept_desc>Computer systems organization~Redundancy</concept_desc>
%  <concept_significance>300</concept_significance>
% </concept>
% <concept>
%  <concept_id>10010520.10010553.10010554</concept_id>
%  <concept_desc>Computer systems organization~Robotics</concept_desc>
%  <concept_significance>100</concept_significance>
% </concept>
% <concept>
%  <concept_id>10003033.10003083.10003095</concept_id>
%  <concept_desc>Networks~Network reliability</concept_desc>
%  <concept_significance>100</concept_significance>
% </concept>
%</ccs2012>
%\end{CCSXML}

%%\ccsdesc[500]{Computer systems organization~Embedded systems}
%\ccsdesc[300]{Computer systems organization~Redundancy}
%\ccsdesc{Computer systems organization~Robotics}
%\ccsdesc[100]{Networks~Network reliability}

%\keywords{Time Series}

\section{Introduction}

%\noindent{\bf{[Motivation]}}
Several applications across many diverse domains (e.g., finance, astrophysics, etc.)%~\cite{DBLP:journals/sigmod/Palpanas15, fulfillingtheneed,DBLP:journals/dagstuhl-reports/BagnallCPZ19,Palpanas2019}, 
, such as in finance, astrophysics, neuroscience, engineering, multimedia, and others~\cite{DBLP:journals/sigmod/Palpanas15, fulfillingtheneed,DBLP:journals/dagstuhl-reports/BagnallCPZ19,Palpanas2019}, 
continuously produce big collections of data series\footnote{A data series, or data sequence, 
is an ordered sequence of data points. If the ordering dimension is time then we talk 
about time series, though, series can be ordered over other measures (e.g., angle in astronomical radial profiles, frequency in infrared spectroscopy, mass in mass spectroscopy, position in genome sequences, 
etc.).}
%~\cite{kashino1999time,ye2009time,shasha1999tuning,huijse2014computational,raza2015practical}), 
which need to be processed and analyzed.
Often times, this is part of an exploratory process, where users ask a query, review the results, and then decide what their subsequent queries, or analysis tasks should be.
The most common type of query that different analysis applications need to answer on these collections of data series 
is similarity search~\cite{DBLP:journals/sigmod/Palpanas15, lernaeanhydra, lernaeanhydra2}, which is at the core of several data series analysis tasks, such as classification and anomaly detection~\cite{lernaeanhydra, DBLP:conf/sdm/TanWP17, norma, normajournal, series2graph, valmodjournal, sand, sanddemo, eenergy21}.

The continued increase in the rate and volume 
of data series production %~\cite{DBLP:journals/sigmod/Palpanas15} 
with collections that grow to several terabytes in size~\cite{DBLP:journals/sigmod/Palpanas15,url:adhd,url:sds,DBLP:journals/pvldb/PelkonenFCHMTV15}, 
renders existing data series indexing technologies inadequate. 
For example, ADS+~\cite{zoumpatianos2016ads}, the state-of-the-art sequential (i.e., non-parallel) indexing technique, 
requires more than 2min to answer exactly a single 1-NN (Nearest Neighbor) query on a (moderately sized) 100GB sequence dataset.

Given the evolution of CPU performance, 
where the processor clock speed is not increasing due to the power wall constraint, 
algorithmic speedups can now mainly come by exploiting parallelism~\cite{zhou2002implementing,ZeuchFH14,gepner2006multi,LometN15,GowanlockC16,tang2016exploit,xiao2013parallelizing,ailamaki2015databases,DBLP:conf/ieeehpcs/Palpanas17,DBLP:conf/cidr/Blanas17,DBLP:conf/sigmod/PolychroniouRR15,DBLP:conf/damon/PolychroniouR14}. 
This involves (i) parallelism across compute nodes (e.g., using Spark)~\cite{dpisaxjournal,DBLP:journals/kais/LevchenkoKYAMPS21}, 
where the main goal is to scale to datasets that cannot be easily handled by a single node, 
and (ii) parallelism inside a single compute node (e.g., exploiting the Multi-Socket and Multi-Core (MSMC) architectures)~\cite{peng2018paris,parisplus,sing}, 
where the main goal is to minimize latency. 

In this study, we focus on parallelization inside a single node. 
The state-of-the-art approach, ParIS+~\cite{parisplus}, is a disk-based data series parallel indexing scheme that exploits the parallelism capabilities 
provided by MSMC architectures. %, as well as the Single Instruction Multiple Data (SIMD) instruction capabilities of modern CPUs, in order to further parallelize the execution of individual instructions inside each core. 
Experiments showed that ParIS+ answers queries $10$x faster 
than ADS+, and more than $1000$x faster
than the optimized serial scan method. 
Still, ParIS+ is designed for disk-resident data and its performance is dominated by the I/O cost. %s it encounters. 
For instance, ParIS+ answers a 1-NN (Nearest Neighbor) exact query on a 100GB dataset in 15sec, 
which is above the limit for keeping the user's attention (i.e., 10sec), 
and for supporting interactive analysis (i.e., 100msec)~\cite{Fekete:2016}.

\noindent{\bf{[Application Scenario]}}
In this work, we focus on designing an efficient parallel indexing and query answering scheme 
for \emph{in-memory} data series processing. Our work is motivated and inspired by the following
real scenario. Airbus\footnote{\scriptsize\url{http://www.airbus.com/}}, 
currently stores petabytes of data series, describing the behavior 
over time of various aircraft components (e.g., the vibrations of the bearings in the engines), 
as well as that of pilots (e.g., the way they maneuver the plane through the fly-by-wire system)~\cite{Airbus}. 
The experts need to access these data in order to run different analytics algorithms. 
However, these algorithms usually operate on a subset of the data (e.g., only the data relevant to landings 
from Air France pilots), which fit in memory. 
In order to perform complex analytics operations (such as searching for similar patterns, or classification) fast, in-memory data series indices must be built. % for efficient data series query processing. 
Thus, the time cost of both index creation and query answering become important factors. % in this process.
Apart from engineering, similar needs appear in other domains and applications, as well~\cite{Palpanas2019,DBLP:journals/dagstuhl-reports/BagnallCPZ19}, such as astrophysics and neuroscience, where different, \emph{adhoc} subsets of data need to be analyzed, and for which we need to build indexes and then perform similarity search operations.

\noindent{\bf{[MESSI Approach]}}
We present MESSI, an in-MEmory data SerieS Index that
incorporates the state-of-the-art techniques in sequence indexing\footnote{A preliminary version of this work has appeared elsewhere~\cite{peng2020messi}.}, and inherently takes advantage of modern hardware parallelization in order to accelerate processing times.
MESSI supports similarity search queries on both z-normalized and non z-normalized data, using both the Euclidean and the Dynamic Time Warping (DTW) distance measures.

MESSI %effectively 
uses MSMC architectures 
in order to concurrently perform both index construction 
and query answering, and it exploits the Single Instruction Multiple Data (SIMD) capabilities of modern CPUs, in order to further parallelize the execution of individual instructions (mainly distance computations) 
inside each core. 
%
%We first built and experimentatlly analyzed an optimized in-memory version of ParIS,
%and saw that it answers exact 1-NN queries on the same dataset in \texttildelow 200 msec.
More importantly though, MESSI features a novel solution for answering exact 1-NN queries
which is 6-11x faster than an in-memory version of ParIS+ across the datasets (of size 100GB) we tested, achieving for the first time interactive exact query answering times, at $\sim$50msec.
It also provides redesigned algorithms that lead to a further $\sim$4x speedup in index construction time, in comparison to (in-memory) ParIS+.
%for index creation on random data set
%ParIS need 114.90 seconds and 55.14 seconds to create the index for SALD and Seismic data set. 
%MESSI only need 31.52 seconds and 14.75 seconds for the same dataset. 
%The improvement is 3.64x on SALD,  3.73x for Seismic. 
%within \texttildelow 40.3msec (whereas the in-memory version of ParIS on the same dataset requires \texttildelow 256msec). 
%For real datasets, MESSI can answer a query in 71.4 msec on SALD and 30.6 msec on Seismic. 
%ParIS need 797 msec to answer a query on SALD and 335 msec on Seismic.

The design decisions in ParIS+ were heavily influenced by the fact that 
the cost was mainly I/O bounded. %Of course, this cost goes away in its in-memory version. 
Since MESSI copes with in-memory data series, no CPU cost can be hidden under I/O.
Therefore, MESSI required more careful design choices %in the setup 
and coordination of the parallel workers. % when accessing the required data structures, in order to improve its performance. 
This led to the development of a more subtle design for the index construction 
and new algorithms for answering similarity search queries on this index. 

For query answering in particular, we showed that adaptations of alternative solutions,
which have proven to perform the best in other settings (i.e., disk-resident data~\cite{parisplus}), 
are not optimal in our case, so we designed a novel solution that achieves a good balance 
between the amount of communication among the parallel worker threads, and the effectiveness of each individual worker.

For instance, the new scheme uses concurrent priority queues for storing the data series 
that cannot be pruned, and for processing these series in order, starting from those whose iSAX representations have the smallest distance
to the iSAX representation of the query data series. 
In this way, the parallel query answering threads achieve better pruning on the data series they process.
Moreover, the new scheme uses the index tree to decide which data series to insert 
into the priority queues for further processing. In this way, the number of distance calculations performed
between the iSAX summaries of the query and data series is significantly reduced (ParIS+
performs this calculation for all data series in the collection).

To achieve load balancing, we had to come up with a scheme where
all priority queues had about the same number of elements.
%of data series for which MESSI has to calculate the distance
%between their iSAX summaries and the query summary. 
ParIS+ had to perform this calculation for the entire collection of data series. 
We also experimented with several designs
for reducing the synchronization cost among different workers that access 
the priority queues and for achieving load balancing.
We ended up with a scheme where 
workers use randomization to choose the priority queues they will work on.
Consequently, MESSI answers exact 1-NN queries on 100GB datasets within 30-70msec across diverse synthetic and real datasets.

The index construction phase of MESSI differentiates from ParIS+ in several ways.
%a lot of experimentation was needed to come up with the
%correct setup and coordination scheme for the different workers. 
For instance, ParIS+ was using a number of buffers to temporarily store pointers to the iSAX summaries
of the raw data series before constructing the tree index~\cite{parisplus}.
MESSI allocates smaller such buffers per thread and stores in them the iSAX summaries themselves.
In this way, it completely eliminates the synchronization cost in accessing the iSAX buffers.
%in order to reduce the synchronization cost among the threads. %that access the same buffers. 
To achieve load balancing, MESSI splits the array storing the raw data series into small blocks, and assigns blocks to threads
in a dynamic %round-robin 
fashion. 
We applied the same technique when assigning to threads the buffers containing the iSAX summary of the data series.
Overall, the new design and algorithms 
of MESSI led to $\sim$4x improvement in index construction time when compared to (in-memory) ParIS+.
Still, the main contribution of the paper is our novel query answering scheme,
which results in up to 11x better performance than ParIS+.
This scheme supports similarity search on both Z-normalized and non Z-normalized data, 
and can be used with either the Euclidean, or the Dynamic Time Warping (DTW) distance.
%\CR{The experimental analysis for MESSI was much more delicate and fine-grained than that of ParIS.} 

\noindent{\bf [Contributions]} 
Our contributions are summarized below. 
\begin{itemize}

\item
%\noindent\textbullet
We propose MESSI, the first in-memory data series index designed for modern hardware, 
which can answer similarity search queries in a highly efficient manner.

\item
%\noindent\textbullet
We implement a novel, tree-based exact query answering algorithm for both the Euclidean and Dynamic Time Warping (DTW) distances, 
which minimizes the number of %required 
distance calculations 
(both lower bound distance calculations for pruning true negatives, 
and real distance calculations for pruning false positives).

\item
%\noindent\textbullet
We also design an index construction algorithm that effectively balances 
the workload among the index creation workers by using 
a parallel-friendly index framework with low synchronization cost.

\item 
We provide proofs of correctness for our parallel algorithms. 
These proofs guarantee that both the index creation and query answering algorithms will always produce correct results.

\item
%\noindent\textbullet
We conduct an experimental evaluation with several synthetic and real datasets, 
which demonstrates the efficiency of the proposed solution. 
The results show that MESSI is up to 4.2x faster at index construction and up to 11.2x faster at query answering than the state-of-the-art parallel index-based competitor, up to 109x faster at query answering than the state-of-the-art parallel serial scan algorithm, and thus can significantly reduce the execution time of complex analytics algorithms (e.g., \emph{k-NN} classification by more than 1 order of magnitude). 
% \vspace*{-.2cm}
\end{itemize}

\noindent{\bf [Paper Structure]} 
The rest of this paper\footnote{A preliminary version of this paper has appeared elsewhere~\cite{peng2020messi}.} is organized as follows.
In Section~\ref{sec:prelim}, we provide the necessary background material. 
The MESSI approach is described in Section~\ref{sec:parmis}. 
Section~\ref{secproof} is the proof of correctness of our index creation and query answering algorithms.
Section~\ref{sec:experiments} contains our experimental analysis.
We review the related work in Section~\ref{sec:rel},
and conclude in Section~\ref{sec:conclusions}.

%We assume Euclidean Distance (ED)~\cite{Agrawal1993} as the similarity measure, but other measures can also be used, such as Dynamic Time Warping (DTW)~\cite{rakthanmanon2012searching}.

\ignore{
The brute-force approach for evaluating similarity search queries is by performing a sequential pass over the complete dataset.
However, as data series collections grow larger, scanning the complete dataset becomes a performance bottleneck, taking hours or more to complete~\cite{zoumpatianos2016ads}. 
This is especially problematic in exploratory search scenarios, where 
%batch execution of queries is impossible because the 
every next query depends on the results of previous queries.
Consequently, we have witnessed an increased interest in developing indexing techniques 
and algorithms 
%that can efficiently support 
for similarity %search~\cite{ding2008querying,rakthanmanon2012searching,wang2013data,dallachiesa2014top,isax2plus}.
search~\cite{shieh2008sax,rakthanmanon2012searching,wang2013data,isax2plus,zoumpatianos2016ads,DBLP:conf/icdm/YagoubiAMP17,DBLP:journals/pvldb/KondylakisDZP18,ulisseicde,DBLP:journals/vldb/ZoumpatianosLIP18,ulissevldb,dpisaxjournal,lernaeanhydra}.

Specifically, ParIS constructed the index in two phases. In the first phase, a coordinator worker
placed data series from the file in the raw data buffer (using a double buffering technique
to increase paralellism). Concurrently and in a pipelined way to the coordinator work, 
a number of IndexBulkLoading workers was working on the one part of the raw data buffer.
Each one of them was computing a summary, namely the iSAX representation~\cite{}, for each data series in this part of 
the raw data buffer and placing it in a Receiving buffer and in an array (called SAX). Periodically, other workers were building
the index tree and they were performing the materialization of the data at the leaves of the tree index. 
This process was performed repeatedly until the entire collection
of the data series in the file were processed and the construction of the tree index was complete. 
There was one synchronization buffer for each subtree of the index tree and synchronization
among the IndexBulkLoading workers was achievewd by using a lock for each buffer. 

The index construction in MESSI borrows several ideas from ParIS. However, now the data are in-memory
and no I/O cost is encountered. Thus, the design must be much more subtle in order to achieve good performance. 
For instance, experiments showed that it is now better to have one Receiving Buffer per thread for each of
the subtrees of the index tree, thus completely avoiding synchronization among the threads in accessing
the receiving buffers. Load balancing is now achieved by cutting the array containing the in-memory 
data into small blocks and using fetch\&add to assign such blocks to the different threads.
Moreover, fetch\&add is used to assign receiving buffers to threads during index construction. 

In whatever concerns query answering, ParIS searched the index tree to find a Best-So-Far (BSF) value.
It was then traversing the entire SAX array and for those iSAX representations in it that their
distance from the iSAX representation of the query data series was smaller than BSF,
it was calculating the actual distance between the corresponding raw data series and the query data series. 
The minimum between these distances was eventualloy reported. BSF was updated whenever needed during this process. 

As ParIS, MESSI starts by searching for the iSAX representation of the query data series in the index tree
to find the initial best-so-far value. However, MESSI does not maintain the array SAX. Threads working
on query answering traverse the tree, with each thread traversing different subtrees of the root (using 
fetch\&add) and they construct a set of concurrent lock-based priorities queues containing those
data series that was not pruned. Once this process is over, a set of workers calculate actual distances
between data series in the priority queues and the query data series vy repeatedly calling
DeleteMin() from the priority queue. Whenever a thread finishes the work it has to do on the priority queue
that was assigned to it, it chooses unioformly at random a different priority queue and helps
the thread that has been assigned this priority queue to finish its processing.  The BSF is 
dynamically updated during this process. 

Because of pruning, the sizes of different queues may turn out to be heavily divergent. 
This results in load balancing problems which we had to solve. We did so by having
each worked adding elements in different queues in a round-robin fashion. This ensures
that all queues have more or less the same number of elements. 

}

%\noindent{\bf [Scalability problem]}

%\noindent{\bf [Parallel Indexing]}

\section{Background}
\label{sec:prelim}

We now provide some necessary definitions, and introduce 
%the related work 
background knowledge on state-of-the-art data series indexing.

\subsection{Data Series and Similarity Search}

\noindent{\bf [Data Series]}
A data series, $S=\{p_1, ..., p_n\}$, is defined as a sequence of points, 
where each point $p_i=(v_i,t_i)$, $1 \le i \le n$\yy{,} 
is associated to a real value $v_i$ and a position $t_i$.
The position corresponds to the order of this value in the sequence (in the case of time series, positions are expressed in terms of time, i.e., they are timestamps).
We call $n$ the \emph{size}, or \emph{length} of the data series.
All discussions in this work are applicable to general high-dimensional vectors, too.

\noindent{\bf [Similarity Search]}
Analysts perform a wide range of data mining tasks on data series 
including clustering~\cite{keogh1998,liao2005,rodrigues2008,rakthanmanon2011}, %~\cite{rakthanmanon2011},
classification and deviation detection~\cite{Shieh2009,Shandola2009}, 
and frequent pattern mining~\cite{DBLP:journals/datamine/MueenKZCWS11,DBLP:journals/tkdd/GrabockaSS16}. %~\cite{DBLP:journals/datamine/MueenKZCWS11}.
Existing algorithms for executing these tasks rely on performing fast similarity search 
across the different series.
Thus, efficiently processing Nearest Neighbor (NN) queries is crucial 
for speeding up the above tasks.
NN queries are defined as follows: given a query series $S_q$ of length $n$,  
and a collection $\mathcal{S}$ of sequences of the same length, $n$, 
we want to identify the series $S_c \in \mathcal{S}$ 
that has the smallest distance to $S_q$ among all the series in the collection $\mathcal{S}$.
(In the case of streaming series, we first create subsequences of length $n$ using a sliding window, and then index those.)

Common distance measures for comparing data series are Euclidean Distance (ED)~\cite{Agrawal1993} 
and Dynamic Time Warping (DTW)~\cite{rakthanmanon2012searching}, which performs better for data mining tasks (e.g., classification~\cite{DBLP:journals/datamine/BagnallLBLK17}).
%the error rate using ED converges to that of DTW 
%as the dataset size grows~\cite{shieh2008sax}.
%%~\cite{Ratanamahatana2005, Xiaopeng2006, shieh2008sax}.
%Therefore, data series indexes for massive datasets use ED 
%as a distance metric~\cite{shieh2008sax,rakthanmanon2012searching,wang2013data,isax2plus,zoumpatianos2016ads}, 
%though simple modifications can be applied to make them compatible with DTW~\cite{shieh2008sax}.
%%~\cite{shieh2008sax,Kate2016}. 
Euclidean distance is computed as the sum of distances between the pairs of corresponding points in the two sequences.
%, where normalizing the sequences for alignment and length is a pre-processing step~\cite{shieh2008sax,Shieh2009,Zoumpatianos2014,Zoumpatianos2015rinse,ZoumpatianosIP16}.
%In all cases, data are z-normalized by subtracting the mean and dividing by the standard deviation (
Note that minimizing ED on z-normalized data (i.e., a series whose values have mean 0 and standard deviation 1) is equivalent to maximizing their Pearson's correlation coefficient~\cite{MueenNL10}.

%\subsection{Distance calculation in SIMD}
%\textbf{Distance calculation in SIMD: }
\noindent{\bf [Distance calculation in SIMD]}
Single-Instruction Multiple-Data (SIMD) refers to a parallel architecture that allows the execution of the same operation on multiple data simultaneously~\cite{lomont2011introduction}. 
Using SIMD, we can reduce the latency of an operation, because the corresponding instructions are fetched once, and then applied in parallel to multiple data.
All 
%Intel
modern CPUs support 256-bit wide SIMD vectors, 
%(AVX and AVX2) and 
%while some of them (e.g., Intel Xeon Phi) support 512-bit wide SIMD~\cite{intel2011advanced,firasta2008intel}. 
which means that certain floating point (or other 32-bit data) computations can be up to 8 times faster. % when executed using SIMD.
%Figure~\ref{fig:SIMDexplain} is an example of how SIMD works for computing the Euclidean distance between two vectors (e.g., data series).
%
%\textcolor{red}{Obviously, we have a strong motivation to accelerate the typical functions by SIMD instruction. }
%Even though no SIMD solutions have been proposed so far for data series indices, this idea has been exploited for the computation of distance functions~\cite{tang2016exploit}.
%% (namely, Euclidean Distance, and Dynamic Time Warping)~\cite{tang2016exploit}, which are at the core of data series similarity search.
%%\textcolor{red}{Last year, Bo Tang et al. used SIMD instruction optimize some typical function in time series (like lower-bound and distance functions)~\cite{tang2016exploit}. }
%%\textcolor{red}{By the way, their implementation also support to the early abandon. 
%%They calculate the accumulated distance sum every bloc and compare it with BSF. 
%%It means that we do the early abandoning every 8 cycles. }
%This is a natural use of SIMD, since these distance functions are in their essence vectorial computations between two subsequences.
%%Figure~\ref{fig:SIMDexplain} illustrates how SIMD works for computing the Euclidean distance between two vectors (i.e., data series). 
%In our study, we take an extra step, 
%%in addition to using SIMD for this purpose, 
%and we also use SIMD for operations related to the proposed data series index structure (i.e., for conditional branch calculations during the computation of the lower bound distances\cite{peng2018paris}).

In the data series context, SIMD has been employed for the computation of the Euclidean distance functions~\cite{tang2016exploit}, 
as well as in the ParIS+ index, for the conditional branch calculations during the computation of the lower bound distances~\cite{parisplus}.

%\begin{figure}[tb]
%\centering
%\hspace*{-0.5cm}
%\includegraphics[width=1.07\columnwidth]{picture/SIMDexplain}
%\caption{ Using SIMD for distance computation}
%\label{fig:SIMDexplain}
%\end{figure}

\begin{figure}[tb]

%\subfigure[index tree of iSAX.\label{fig:saxtree}] {
%    \hspace{-1em}
%    \includegraphics[width=0.7\columnwidth]{}
%}
\begin{minipage}[b]{0.41\columnwidth}
\subfigure[raw data series\label{fig:saxa}] {
	\hspace{-0.5em}
	\includegraphics[page=1,width=0.92\columnwidth]{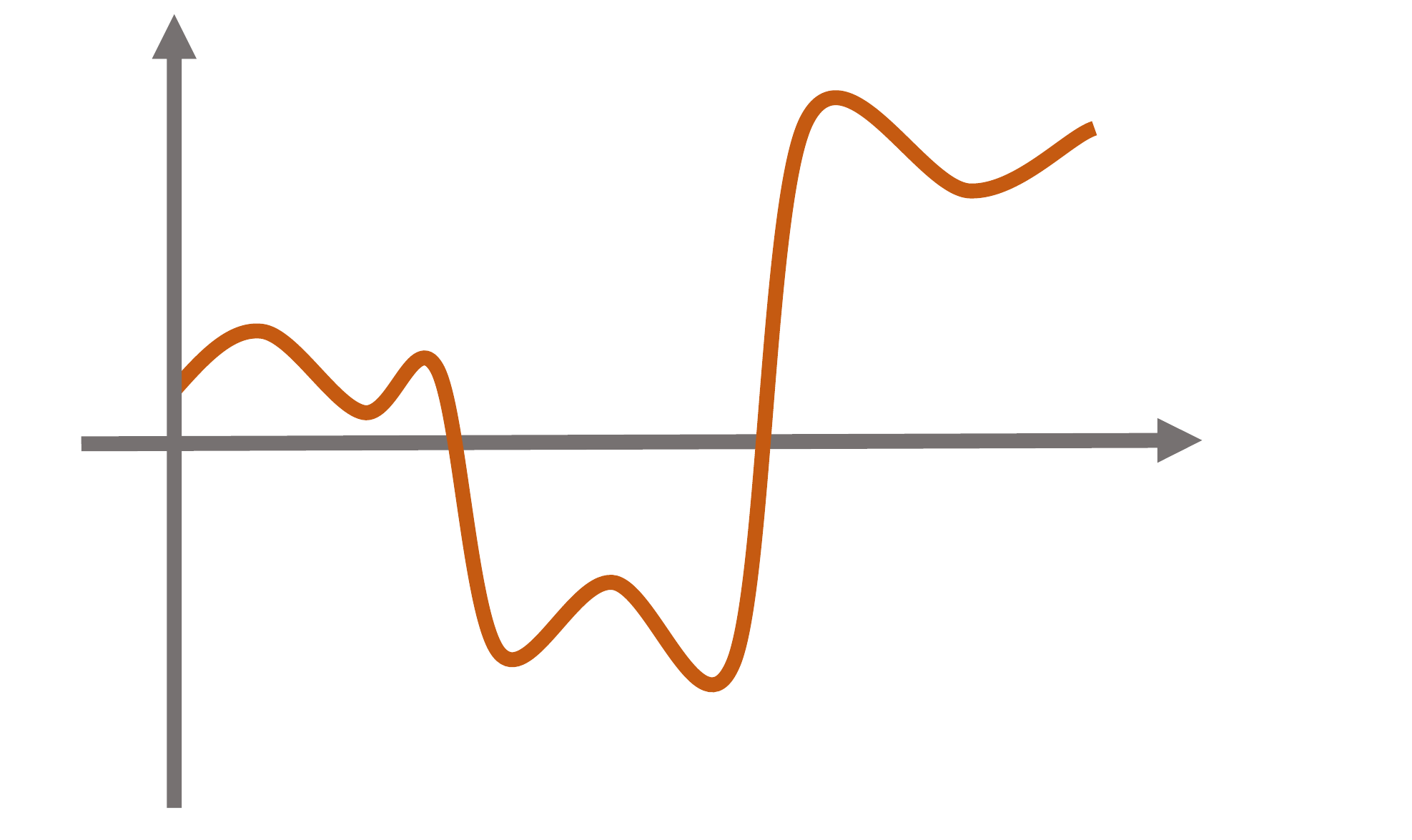}
}

\subfigure[PAA representation\label{fig:saxb}] {
	\hspace{-0.5em}
	\includegraphics[page=2,width=0.95\columnwidth]{picture/sax2}
}
\subfigure[iSAX representation\label{fig:saxc}]{
	\hspace{-0.5em}
	\includegraphics[page=3,width=0.97\columnwidth]{picture/sax2}
}
\end{minipage}
\hspace*{0.2cm}
\subfigure[ParIS+ index\label{fig:ads}] {
	\hspace{-1em}
	\includegraphics[width=0.57\columnwidth]{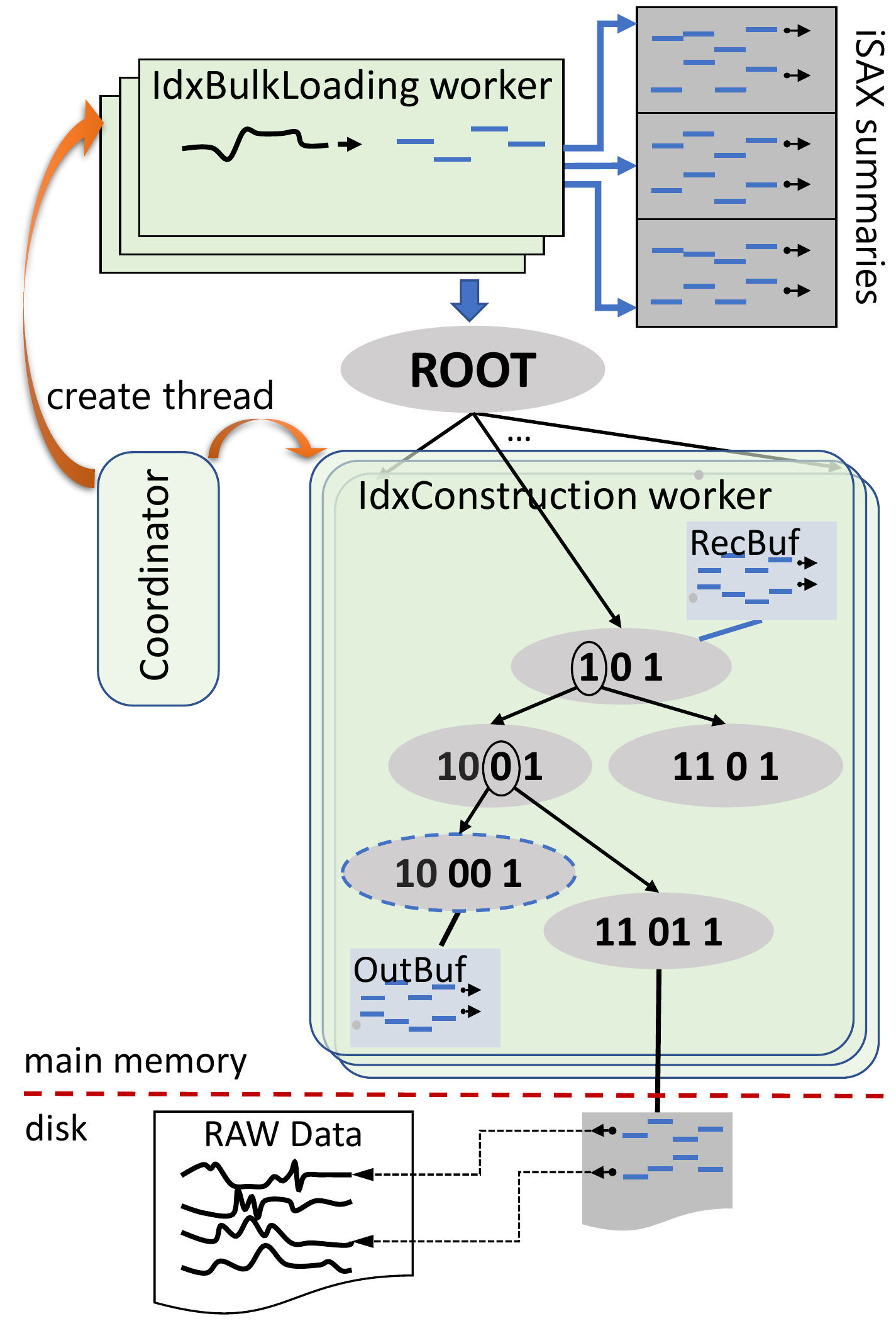}
}
\caption{The iSAX representation, and the ParIS+ index}
\end{figure}

\subsection{iSAX Representation and the ParIS+ Index}

\noindent{\bf [iSAX Representation]}
The iSAX representation (or summary) is based on the 
%idea of segmented means, or Piecewise Aggregate Approximation (PAA)~\cite{yi2000fast,keogh2001dimensionality}. The 
Piecewise Aggregate Approximation (PAA) representation~\cite{keogh2001dimensionality}, which divides the data series 
in $w$ segments of equal length, and uses the mean value of the points in each segment in order to summarize a data series. 
Figure~\ref{fig:saxb} illustrates an example of PAA representation with three segments (shown with the black horizontal lines), for the data series depicted in Figure~\ref{fig:saxa}. 
%\here{Youla: The word "depict" appears three times in this sentence.}

Based on PAA, the indexable Symbolic Aggregate approXimation (iSAX) representation was proposed~\cite{shieh2008sax} (and later used in several different data series indices~\cite{Shieh2009, zoumpatianos2016ads, DBLP:journals/pvldb/KondylakisDZP18, peng2018paris, ulissevldb}).
%~\cite{lin2003symbolic,shieh2008sax}. 
This method first divides the (y-axis) space in different regions, and assigns a bit-wise symbol to each region.
In practice, the number of symbols is small: 
previous work has shown that 
iSAX achieves very good approximations with just 256 symbols, the maximum alphabet cardinality, $|alphabet|$ represented by $8$ bits~\cite{isax2plus}.
It then represents each of the $w$ segments of the series
not by the real value of the PAA, but 
with the symbol of the region the PAA falls into, forming the word $10_2 00_2 11_2$ shown in Figure~\ref{fig:saxc} (subscripts denote the number of bits used to represent the symbol of each segment). 

Therefore, iSAX further reduces the size of the data series summarization, and more importantly it leads to a bit-wise representation. 
Note that, even though the iSAX representation was invented several years ago, it remains one of the most popular data series summarization methods.
Recent studies have shown that data series indices based on iSAX achieve state-of-the-art performance in various similarity search tasks~\cite{lernaeanhydra,lernaeanhydra2}.
{For an overview of iSAX-based indices, see~\cite{evolutionofanindex}.
%\begin{figure}[tb]
%\centering
%\hspace*{-0.5cm}
%\includegraphics[width=1.1\columnwidth]{picture/sax}
%\caption{Example of PAA and iSAX representations}
%\label{fig:sax}
%\end{figure}

%\subsection{The ADS+ Index}

\noindent{\bf [ParIS+ Index]}
Based on the iSAX representation, the ParIS+ index was developed~\cite{parisplus}, which proposed techniques and algorithms specifically designed for modern hardware and disk-based data. 

ParIS+ makes use of variable cardinalities for the iSAX summaries (i.e., variable degrees of precision for the symbol of each segment) 
in order to build a hierarchical tree index (see Figure~\ref{fig:ads}), consisting of three types of nodes:
(i) the root node points to several children nodes, $2^w$ in the worst case (when the series in the collection 
cover all possible iSAX summaries); (ii) each inner node contains the iSAX summary of all the series
below it, and has two children; and (iii) each leaf node contains the iSAX summaries of all the series inside it, and pointers to the raw data (in order to be able to prune false positives and produce exact, correct answers), which reside on disk. 
When the number of series in a leaf node becomes greater than the maximum leaf capacity, the leaf splits: 
it becomes an inner node and creates two new leaves, by increasing the cardinality of the iSAX summary 
of one of the segments (the one that will result in the most balanced split of the contents of the node 
to its two new children~\cite{isax2plus,zoumpatianos2016ads}). 
The two refined iSAX summaries (new bit set to \textit{0} and \textit{1}) are assigned to the two new leaves. 
In our example, the series of Figure~\ref{fig:saxc} will be placed in the outlined node of the index (Figure~\ref{fig:ads}).
The distance of a query %series 
to a node is the distance between the query (raw values, or iSAX summary) and the node's iSAX summary.

In the index construction phase (see Figure~\ref{fig:ads}), ParIS+ uses a coordinator worker that reads raw data series from disk and transfers them into 
a raw data buffer in memory. 
A number of index bulk loading workers compute the iSAX summaries of these series, 
and insert $<$iSAX summary, file position$>$ pairs in an array. 
They also insert a pointer to the appropriate element of this array 
in the receiving buffer of the corresponding subtree of the index root.
When main memory is exhausted, the coordinator worker creates a number of index construction worker threads, each one assigned to one subtree of the root and responsible for further building that subtree (by processing the iSAX summaries stored in the corresponding receiving buffer). 
This process results in each iSAX summary being moved to the output buffer of the leaf it belongs to. 
When all iSAX summaries in the receiving buffer of an index construction worker have been processed, the output buffers of all leaves in that subtree are flushed to disk. 

For query answering, ParIS+ offers a parallel implementation of the SIMS exact search  algorithm~\cite{zoumpatianos2016ads}. 
It first computes an approximate answer by calculating the real distance between the query and the best candidate series, 
which is in the leaf with the smallest lower bound distance to the query. 
ParIS+ uses the index tree only for computing this approximate answer.
Then, a number of lower bound calculation workers compute the lower bound distances 
between the query and the iSAX summary of each data series in the dataset, which are stored in the SAX array,
and prune the series whose lower bound distance is larger than the approximate real distance computed earlier.
The data series that are not pruned, are stored in a candidate list for further processing.
Subsequently, a number of real distance calculation workers operate on different parts of this array
to compute the real distances between the query and the series stored in it
(for which the raw values need to be read from disk).
For details see~\cite{parisplus}.

In the in-memory version of ParIS+, 
the raw data series are stored in an in-memory array.
Thus, there is no need for a coordinator worker. % building the raw data buffer. 
The bulk loading workers operate directly on this array (split to as many chunks as the workers). 
In the rest of the paper, we use ParIS+ to refer to this in-memory version. % of the algorithm.

\section{The MESSI Solution}
\label{sec:parmis}

\begin{figure*}[tb]
	\centering
	%		\hspace*{-0.3cm}
	\includegraphics[page=1,width=0.65\textwidth]{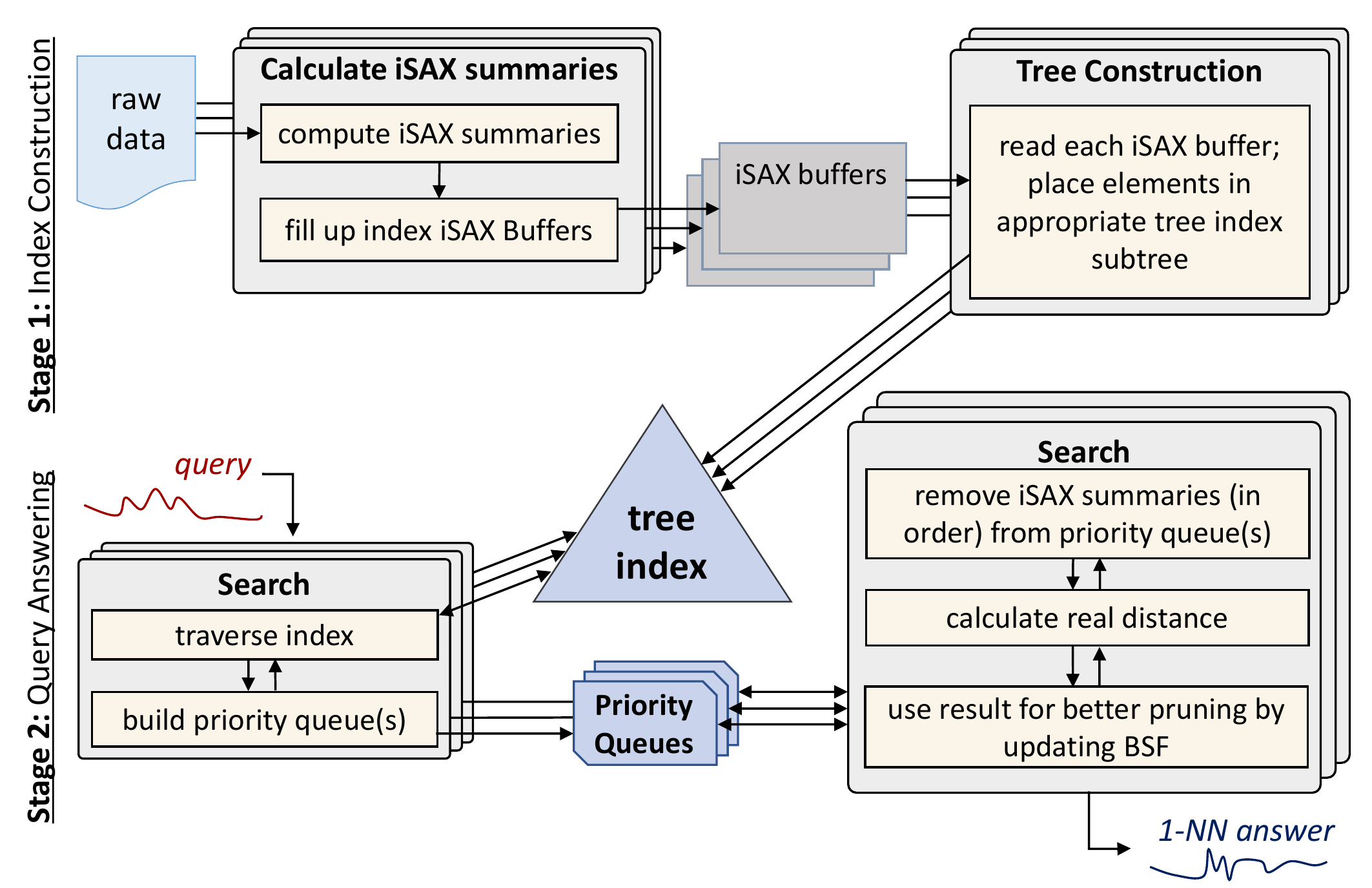}
	%	\vspace*{-0.8cm}
	\caption{MESSI index construction and query answering}
	%	\vspace*{-0.6cm}
	\label{fig:workflow}
	
\end{figure*}

The parallelism approach we employ in MESSI is governed by two main principles : 
1) eliminate synchronization overheads as much as possible, 
%thus keeping the concurrency mechanisms of MESSI as close as possible to trivial embarrasignly parallel solutions, 
and 2) balance the load of the index workers. These two principles
often require contradicting design choices, 
so the design of MESSI is based on extensive experimentation to find the best compromise whenever needed.

We first outline the main ideas 
of MESSI. % index construction and query answering pipelines. 
Figure~\ref{fig:workflow} depicts the MESSI index construction and query answering pipeline.
MESSI uses an index tree, comprised of several root subtrees. 
A number of index workers are responsible to construct the index tree. 
To avoid synchronization overheads and exploit locality, 
each subtree is built by a distinct worker.
To achieve load balancing, workers are assigned subtrees on the fly, 
with different threads possibly processing different numbers of subtrees
(depending on the work necessary on each subtree),
so that they are all busy most of the time.
%For each of these subtrees, MESSI employs an {\em iSAX buffer} 
%to store the iSAX summaries that will be stored in that subtree. 
%The index workers start by processing the raw data to compute the
%iSAX summaries and store them in the appropriate buffers.
%Each iSAX buffer may be accessed by more than one workers.
%%as the iSAX buffer to store a series is determined by the series iSAX summary. 
To avoid synchronization overheads, 
%we split each buffer into parts, one for each thread, and have different threads adding information about the data series they process in different parts of the buffer. 
workers are assigned to work on disjoint data subsets.
This way 
workers never interfere with one another. % (but careful allocation of the buffers' parts is needed to avoid memory overheads). 
%To achieve load balancing, MESSI splits the raw data array into chunks and have different workers process different chunks. 

For query answering, %MESSI first performs an approximate search to determine a first estimate of the Best-So-Far (BSF) value. 
%% Specifically, when building the index tree, MESSI computes an iSAX summary for each node. 
%%To populate the queues, 
%The 
workers traverse the tree pruning nodes
whenever possible. 
MESSI uses a number of shared priority queues
to store leaf nodes that are not pruned. 
%If the distance between the iSAX summary of a node from the query is bigger than the BSF, the entire node is pruned. 
%Nodes that cannot be pruned, are added in the priority queues, and the BSF is dynamically updated. 
For reducing the synchronization cost and exploit locality, 
different workers traverse different subtrees of the index tree. 
For ensuring load balancing, each worker 
adds elements in the queues in a round-robin fashion; this way,
all queues end up having approximately the same number of elements.
After the queues have been populated, the workers  process the nodes in the priority queues. 
The priority of a queue node is its lower bound distance from the query series,
so if a DeleteMin operation returns a node whose distance is larger than the current best distance,
all nodes in the queue can be pruned (i.e., the worker gives up the entire queue).
This scheme allows MESSI to perform additional pruning when processing queue nodes, %: it first computes the lower-bound distance between each series of the node and the query series, and only if this distance is less than the BSF, the real distance calculation is performed. 
%This 
and results in a reduced number of real-distance computations. 
In our implementation, more than one threads work on each queue, so that the real distance calculations
on a node's series (which is a time-consuming task) overlaps with the deletion of additional nodes from 
the queue. 
However, we have chosen the number of threads to work on each queue with care to avoid high synchronization overheads.

\subsection{Preliminaries}

We proceed with the details of MESSI. 
The raw data are stored 
%in memory into an array, called $RawData$. 
into the $RawData$ array,
%This array 
which is split into a predetermined number of chunks. 
A number, $N_w$, of {\em index worker} threads process the chunks
to calculate the iSAX summaries of the raw data series they store.
The number of chunks is not necessarily the same as $N_w$.
Chunks are assigned to index workers the one after the other using Fetch and Increment (Fetch\&Inc).
Based on the iSAX representation, we compute 
in which subtree of the index an iSAX summary will be stored.
%A number of {\em iSAX buffers}, one for each root subtree of the index tree, 
%contain the iSAX summaries to be stored in that subtree. 

Each index worker stores the iSAX summaries it computes in the appropriate iSAX buffers. 
%To reduce synchronization cost, 
Each iSAX buffer is split into $N_w$ parts and each worker works on its own part\footnote{
	We also tried an alternative design, where buffers were not split, 
	so many threads could try to update each element of a buffer concurrently. 
	Therefore, each buffer had to be protected by a lock.
        This design resulted in worse performance due to 
	the contention in accessing the iSAX buffers. 
	}.
%\commentnote{R2D4:Section 3, footnote 5. If each iSAX buffer can be simplied splitted into parts and there is no conflicts amnong workers, then adding a lock is needed. So, it is a little trivial for this design. By the way, how many parts the buffer will be split?}
The number of iSAX buffers is usually a few tens of thousands and at most $2^w$, 
where $w$ is the number of segments in the iSAX summaries 
of each data series ($w$ is fixed to $16$ in this paper, 
as in previous studies~\cite{zoumpatianos2016ads,peng2018paris}).
%\here{Y:  Botao, please add citation to Paris here as well.\textcolor{red}{ Botao:Done.}}

When the iSAX summaries for all data series have been computed, 
the index workers proceed in the construction of the tree index. 
Each worker is assigned an iSAX buffer to work on (this is done
again using Fetch\&Inc). 
Each worker reads the data stored in (all parts of) 
its assigned buffer and builds the corresponding index subtree. 
Therefore, all index workers process distinct subtrees of the index, 
and work in parallel and independently from one another\footnote{
Parallelizing the processing inside each one of the index root subtrees would require 
a lot of synchronization due to node splitting. 
When a node is split, two new leaf nodes are created and the data of the original leaf are moved to the new leaves.
	}.
When an index worker finishes with the current iSAX buffer it works on, 
it continues with the next iSAX buffer that has not yet been processed.
%This is repeated until all iSAX buffers have been processed. 

When the series in all iSAX buffers have been processed, 
the tree index has been built and can be used to answer similarity search queries, as depicted in 
the query answering phase of Figure~\ref{fig:workflow}. 
To answer a query, we first perform a search for the query iSAX summary in the tree index.
This returns a leaf whose iSAX summary has the closest distance
to the iSAX summary of the query.
We calculate the real distance of the (raw) data series pointed to by the elements
of this leaf to the query series, and store the minimum of these distances 
into a shared variable, called BSF (Best-So-Far). 
Then, the index workers start traversing the index subtrees
(the one after the other) using BSF to decide 
which subtrees will be pruned. 
The leaves of the subtrees that cannot be pruned are placed 
into (a fixed number of) minimum priority queues, using the lower bound distance between the raw values of the query series and the iSAX summary of the leaf node, in order to be further examined. 
Each thread inserts elements in the priority queues in a round-robin fashion
so that load balancing is achieved (i.e., all queues contain about 
the same number of elements). 

As soon as the necessary elements have been placed in the priority queues,
each index worker chooses a priority queue to work on, and repeatedly calls DeleteMin() on it to get a leaf node, on which it performs the following operations. 
It first checks whether the lower bound distance stored in the priority queue is larger than the current BSF: if it is then we are certain that the leaf node does not contain any series that can be part of the answer, and we can prune it; otherwise, the worker needs to examine the series contained in the leaf node, by first computing lower bound distances using the iSAX summaries, and if necessary also the real distances using the raw values. 
During this process, we may discover a series with a smaller distance to the query, in which case we also update the BSF.
When a worker reaches a node whose distance
is bigger than the BSF, it gives up this priority queue
and starts working on another, since all other elements in the abandoned queue have a higher distance to the query. 
This process is repeated until
all priority queues have been processed, 
%During this process, the BSF value is updated to %%always 
%reflect the minimum distance seen so far. 
and the BSF is updated along the way.
At the end of the calculation,
the value of BSF is returned as the query answer. 

Note that, similarly to ParIS+, MESSI uses SIMD (Single-Instruction Multiple-Data) 
for calculating the distances of both the index iSAX summaries 
from the query iSAX summary ({\em lower bound distance calculations}), 
and the raw data series from the query data series ({\em real distance calculations})~\cite{parisplus}.

	\begin{figure*}[tb]
	\centering
	\subfigure[CalculateiSAXSummaries\label{fig:inc2a}]{
		\includegraphics[page=1,width=0.65\columnwidth]{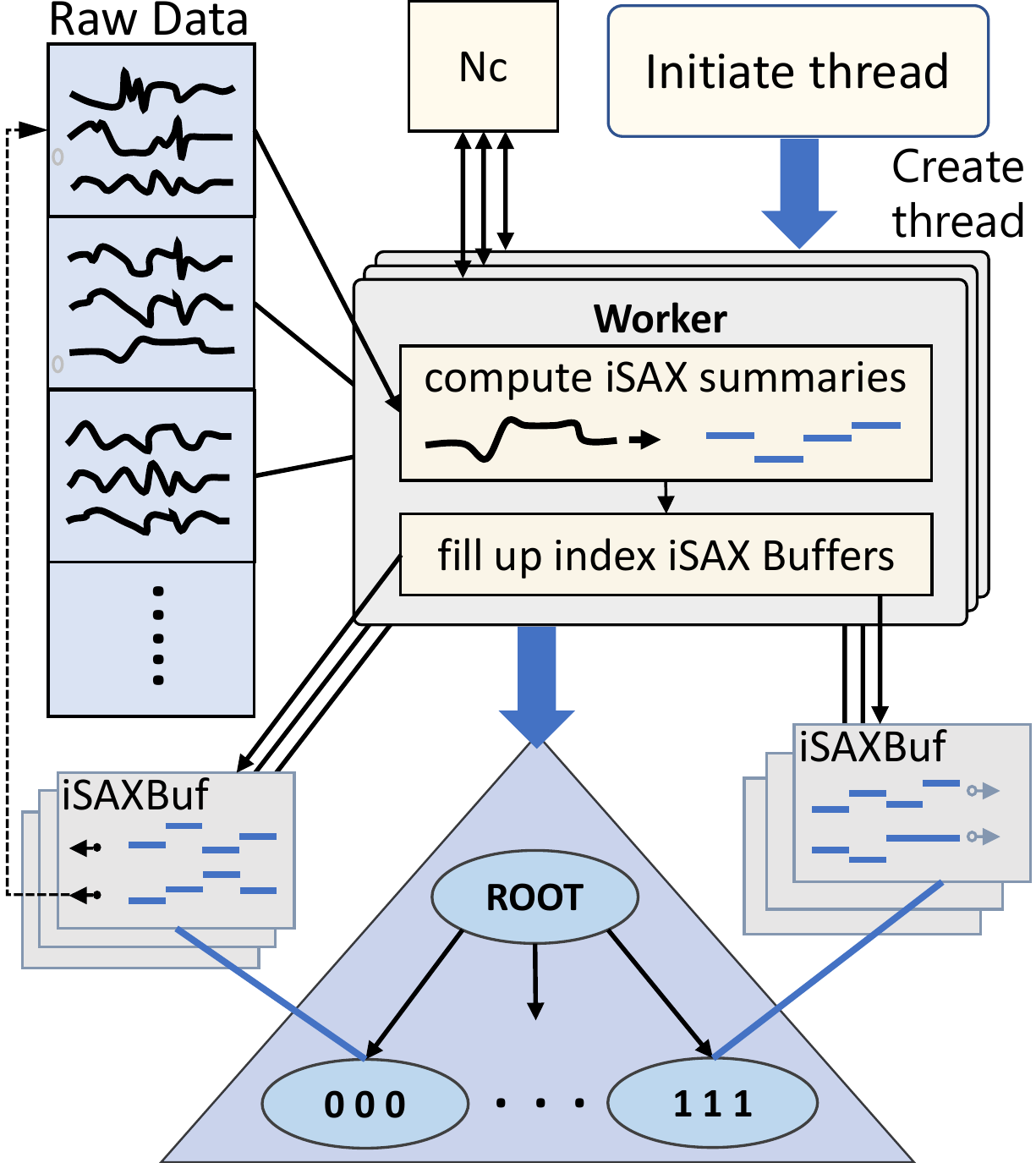}
	}
\hspace*{2cm}
	\subfigure[TreeConstruction\label{fig:inc2b}]{
		\includegraphics[page=1,width=0.73\columnwidth]{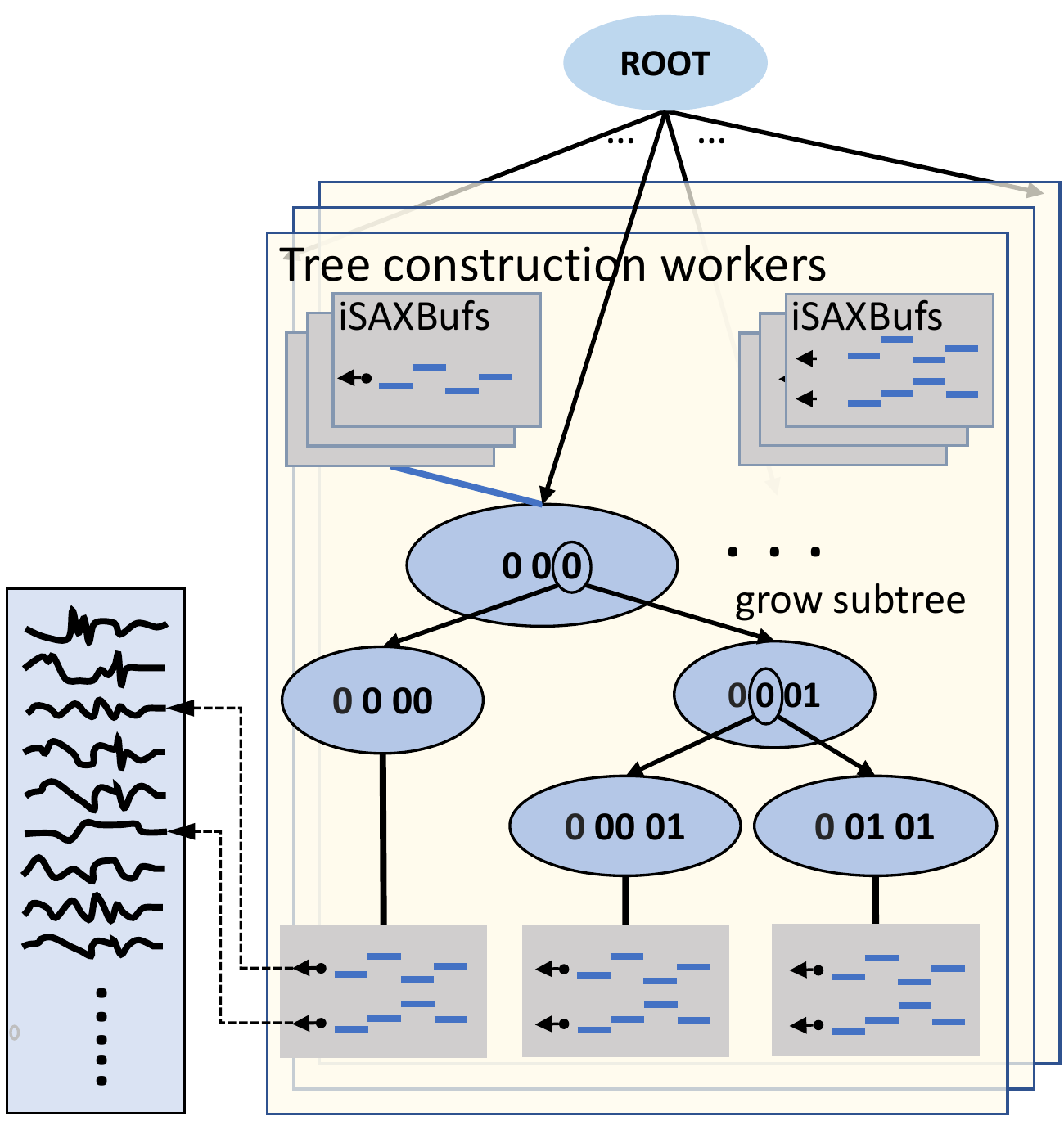}
	}
	\caption{Workflow and algorithms for MESSI index creation}
	\label{fig:inc2}
\end{figure*}

\subsection{Index Construction}
Algorithm~\ref{creatindex-bp} presents the pseudocode for the {\em initiator} thread.
The initiator creates $N_w$ index worker threads 
to execute the index construction phase (line~\ref{crb:worker}). 
As soon as these workers finish their execution, 
the initiator returns (line~\ref{crb:return}). 
We fix $N_w$ to be $24$ threads (Figure~\ref{fig:pRecBuf} in \textsection~\ref{sec:experiments} justifies this choice).
We assume that the $index$ variable is a structure (struct) containing the $RawData$ array,
all iSAX buffers, and a pointer to the root of the tree index.
Recall that MESSI splits $RawData$ into chunks of size $chunk\_size$. 
We assume that the size of $RawData$ is a multiple of
$chunk\_size$ (if not, standard padding techniques can be applied).
%, and the priority queues that are build in the query answering phase. 
%some arrays of locks
%that are needed for synchronizing access to RecBufs, and SAX.

\begin{algorithm}[tb]
	{
		\SetAlgoLined
		\KwIn{\textbf{Index} $index$, \textbf{Integer} $N_w$, \textbf{Integer} $chunk\_size$	}	\vspace*{.1cm}
		\vspace*{.1cm}
		
		\For{$i$ $\leftarrow$ $0$ \emph{\KwTo} $N_w-1$} {
			create a thread to execute an instance of IndexWorker($index$, $chunk\_size$,$i$, $N_w$)\label{crb:worker}\; 		
		}
%		Have each of these threads execute an instance of CalculateiSAXSummaries($index$, $RawData$, $offset + i * chunk\_size$, $chunk\_size$,$i$)\;\label{cr-b:blk}
%		wait for all these threads to finish the execution of CalculateiSAXSummaries\; 
%		Have each of these threads execute an instance of TreeConstruction($index$, $N_w$)\;\label{cr-b:blk}
               wait for all these threads to finish their execution\;\label{crb:return}

%		\For{$i$ $\leftarrow$ $0$ \emph{\KwTo} $N_w-1$}
%		{
%			create a thread to execute an instance of TreeConstruction($index$,$N_w$)\; \label{cr-b:flushfbl}
%		}
	}
	\caption{$CreateIndex$}
	\label{creatindex-bp}
\end{algorithm}

\begin{algorithm}[tb]
	{
		\SetAlgoLined
		\KwIn{\textbf{Index} $index$, \textbf{Integer} $chunk\_size$, \textbf{Integer} $pid$, \textbf{Integer} $N_w$}
		\vspace*{.1cm}
		CalculateiSAXSummaries($index$, $chunk\_size$,$pid$)\;\label{iw:callbl}
		barrier to synchronize the IndexWorkers with one another\;\label{iw:barrier}
		TreeConstruction($index$, $N_w$)\;\label{iw:calltc}
		exit()\;
	}
\caption{$IndexWorker$}
\label{indexworker}
\end{algorithm}

The pseudocode for the index workers is in Algorithm~\ref{indexworker}. 
The workers first call the $CalculateiSAXSummaries$ function (line~\ref{iw:callbl})
to calculate the iSAX summaries of the raw data series and store
them in the appropriate iSAX buffers. %(i.e., the buffer of the subtree of the root, where the series belongs to). 
As soon as the iSAX summaries of all the raw data series have been computed
(line~\ref{iw:barrier}), 
the workers call $TreeConstruction$  
to construct the index tree. 

The pseudocode of $CalculateiSAXSummaries$ is shown in Algorithm~\ref{bulkloading-bp} 
and is schematically illustrated in Figure~\ref{fig:inc2a}. 
Each index worker repeatedly does the following. 
It first performs a Fetch\&Inc to get assigned a chunk of raw data series to work on
(line~\ref{al2b:part}). Then, it calculates the offset in the $RawData$ array that this
chunk resides (line~\ref{al2b:initb}) and starts processing the relevant data series (line~\ref{al2b:for}). 
For each of them, it computes its iSAX summary by calling the ConvertToiSAX function (line~\ref{al2b:cover}),
and stores the result in the appropriate iSAX buffer of $index$ (lines~\ref{al2b:mask}-\ref{al2b:insert}). 
Recall that each iSAX buffer is split into $N_w$ parts, one for each thread;
therefore,\\ $index.iSAXbuffer$ is a two dimensional array. 

Each part of an iSAX buffer
is allocated dynamically when the first element to be stored in it 
is produced.  
The size of each part has an initial small value (5 series in this work, as we discuss in the experimental evaluation) 
and it is adjusted dynamically based on how many elements are inserted in it 
(by doubling its size each time). 
%Figure~\ref{fig:initialisaxbuffersize} illustrates the best choice for the initial buffer size. 
%In the experimental evaluation, we discuss how to set the initial buffer size.
%
We note that we also tried a design of MESSI with no iSAX buffers, but this led to slower performance (due to the worse cache locality). 
Thus, we do not discuss this alternative further.

\begin{algorithm}[tb]
	{
		\SetAlgoLined
		\KwIn{\textbf{Index} $index$, \textbf{Integer} $chunk\_size$, \textbf{Integer} $pid$}
		\textbf{Shared integer} $F_c=0$\;  
		%\textbf{Integer} $b$\;  
		\vspace*{.1cm}
		\While{(TRUE)\label{al2b:loop}}
		{
			$b\leftarrow${\em Atomically} fetch and increment $F_c$\;\label{al2b:part}
			$b$ = $b*chunk\_size$\;\label{al2b:initb}
			\textbf{if} ($b \geq$ size of the $index.RawData$ array) \textbf{then} break\label{al2b:if} \;
			\For{$j$ $\leftarrow$ $b$ \emph{\KwTo} $b + chunk\_size$\label{al2b:for}}
			{
				$isax$ = $ConvertToiSAX$($index.RawData[j]$)\;\label{al2b:cover}
%				\tcp*[f]{Insert $\langle isaxi, j \rangle$ into the layer of $N_w$}\\
    			        $\ell$ = find appropriate root subtree where $isax$ must be stored\;\label{al2b:mask}
				$index.iSAXbuf[\ell][pid] = \langle isax, j \rangle$\;\label{al2b:insert}
			}
		}
	}
\caption{$CalculateiSAXSummaries$}
\label{bulkloading-bp}
\end{algorithm}

\begin{algorithm}[tb]
	{
		\SetAlgoLined
		\KwIn{\textbf{Index} $index$, \textbf{Integer} $N_w$} 
		\vspace*{.1cm}
		\textbf{Shared integer} $F_{b}=0$\;  
		\vspace*{.1cm}
		\While{(TRUE)\label{conp:loop} }
		{
			$b\leftarrow${\em Atomically} fetch and increment $F_{b}$\;\label{al3:gotnode}
			\textbf{if} ($b \geq 2^w$) \textbf{then} break \label{conp:break} \tcp*{\footnotesize{root has <= $2^w$ children}} 
%			\tcp*[f]{ Pass each layer of iSAXbuf[i]}\\
			\For{$j$ $\leftarrow$  $0$ \emph{\KwTo} $N_w$\label{conp:lay}}
			{
				\For{\textbf{{\em every}} $\langle isax, pos \rangle$ {\em pair} $\in index.iSAXbuf[b][j]$\label{conp:passts}}
				{
					$targetLeaf \leftarrow$ Leaf of $index$ tree to insert $\langle isax, pos \rangle$\;\label{conp:insertinleaf}
					\While{$targetLeaf$ {\em is full}}
					{
						SplitNode($targetLeaf$)\;\label{al3:split}
						$targetLeaf \leftarrow$ New leaf to insert $\langle isax, pos \rangle$\;
					}
					Insert $\langle isax, pos \rangle$ in $targetLeaf$\;\label{al3:output}
				}
			}
		}
	}
\caption{$TreeConstruction$}
\label{construction-p}
\end{algorithm}

\remove{
\textcolor{red}{In ParIS indexing system, we use lock to make the access of different root node buffer (RecBuf) atomic.
Not only it result in the stall time when 2 worker access the same root node which rarely happen, 
but also the overhead of lock can't be ignored now.
We propose the Parallel Receive buffer (iSAXbuf). 
Be different from ParIS's Receive buffer, 
each index bulkloading worker have its own Receive buffer layer for one root node initialized. 
We can escape from the lock cost no matter different worker access the same node or not.
Finally, it stores this iSAX summarization in the appropriate iSAXbuf (line~\ref{al2b:insert}). }
%\subsubsection{Parallel Receive Buffers}
%\textcolor{red}{Be compare with ParIS~\cite{peng2018paris},there are no I/O cost can total shade the in memory calculation. 
%	Therefore, the load balance between different worker is important now. In ParIS indexing system, we use lock to make the access of different root node atomic. 
%	Not only it result in the stall time when 2 worker access the same root node which rarely happen, but also the overhead of lock can't be ignored now.}

%\textcolor{red}{Therefore we propose the Parallel Receive buffer. 
%	Be different from ParIS's Receive buffer, each index bulkloading worker have its Receive buffer layer for one root node initialized. 
%	We can escape from the lock cost no matter different worker access the same node or not. }
%\subsubsection{Index Bulkloading Workload Balance}
%\textcolor{red}{After we use Parallel Receive Buffers, we can significantly reduce the amount time of index creation. However, the execution time for index single time series isn't stable. 
%	Because the space of root node is dynamic and the root node isn't initialized at the beginning. 
%	This will be result in the imbalance execution time between different worker as the increase of data size.}
}

As soon as the computation of the iSAX summaries is over, 
each index worker starts executing the $TreeConstruction$ function.
Algorithm~\ref{construction-p} shows the pseudocode for this function and 
Figure~\ref{fig:inc2b} schematically describes how it works.
In $TreeConstruction$, a worker repeatedly executes the following actions.
It accesses $F_b$ (using Fetch\&Inc) to get assigned an iSAX buffer to work on (line~\ref{al3:gotnode}). 
Then, it traverses all parts of the assigned buffer (lines~\ref{conp:lay}-\ref{conp:passts}) 
and inserts every pair $\langle \mbox{iSAX summary}, \mbox{pointer to relevant data series} \rangle$ 
stored there in the index tree (line~\ref{conp:insertinleaf}-\ref{al3:output}). 
Recall that the iSAX summaries contained
in the same iSAX buffer will be stored in the same subtree of the index tree.
So, no synchronization is needed among the index workers during this process. 
If a tree worker finishes its work on a subtree, 
a new iSAX buffer is (repeatedly) assigned to it, until all iSAX buffers have been processed. 
%In order to maintain the scheme simple and efficient, 
%we have chosen not to parallelize processing inside each one of the index root subtrees 
%since that would require a lot of synchronization (due to node splitting). 

\remove{
\textcolor{red}{The pseudocode that the IndexConstruction workers execute is shown in Algorithm~\ref{construction-p}. 
An IndexConstruction worker first selects one of the iSAXbufs to process in an atomic way (line~\ref{al3:gotnode}). This can be done by using either an atomic fetch and add primitive, or a lock. 
Then, it moves the data to the appropriate \textcolor{red}{leaf's buffer in the index} (line~\ref{al3:output}), 
and if necessary (i.e., if the leaf node is full), it (repeatedly) performs node splitting (line~\ref{al3:split}). 
When node splitting is performed, 
the leaf node is split by creating two new leaf nodes and the data of the original leaf are moved to the new leaves.}
}

%%%%%%%%%%%%%%%%%%%%%%%%%%%%%%%%%%%%%%%%%%%%%%%%%%%%%%%%%%%%%%%%%%%%%%%%%%%%
\remove{

\begin{algorithm}[ht]
{	
	\SetAlgoLined
	\KwIn{\textbf{Index} $index$, \textbf{Integer} $n_t$, \textbf{Raw Data} $RawData$}
	\vspace*{.1cm}
	\vspace*{.1cm}

		\For{$i$ $\leftarrow$ $0$ \emph{\KwTo} $n_t-1$} 
		{
			create a thread to execute an instance of\\
 IndexBulkLoading($index$, appropriate part of $RawData$, offset of $Rawdata$ Part $p$)\;\label{cr:blk}
		}

		\For{$i$ $\leftarrow$ $0$ \emph{\KwTo} $n_t-1$}
		{
			create a thread to execute an instance of IndexConstruction($index$)\; \label{cr:con}
		}
}
\caption{$\textcolor{red}{CreateIndex}$}
\label{creatindex}
\end{algorithm}

\begin{algorithm}[ht]
	%\DontPrintSemicolon
	{	
		\SetAlgoLined
		\KwIn{\textbf{Index} $index$, \textbf{Raw Data} $RawData\_part$, \textbf{Integer} $p$}
		\vspace*{.1cm}
		\vspace*{.1cm}
		\For{$i$ $\leftarrow$  $0$ \emph{\KwTo} size of $RawData\_part - 1$ \label{blk:part}}
		{
			$index.SAX[p+i]$ = ConvertToSAX ($RawData\_part[i]$)\;\label{blk:cov}
			acquire appropriate lock from $index.RecBufLock[]$\;
			InsertIntoBuf ($\langle index.SAX[p + i], p+i \rangle$)\;\label{blk:ins}
			release the acquired lock; \\
		}
	}
	\caption{$\textcolor{red}{IndexBulkLoading}$}
	\label{bulkloading}
\end{algorithm}

\begin{algorithm}[ht]
	{
		\SetAlgoLined
		\KwIn{\textbf{Index} $index$} 
		\vspace*{.1cm}
		\textbf{Shared integer} $n_{b}=0$\;  
		\vspace*{.1cm}
		\While{(TRUE) }
		{
			$i\leftarrow${\em Atomically} fetch and increment $n_{b}$ \;\label{con:getnode}

			\textbf{if} ($i \geq 2^w$) \textbf{then} break \; 
			\For{\textbf{{\em every}} $\langle isax, pos \rangle$ {\em pair} $\in index.RecBuf[i]$}
			{
				$targetLeaf \leftarrow$ Leaf of $index$ tree to insert $\langle isax, pos \rangle$\;
				\While{$targetLeaf$ {\em is full}}
				{
					SplitNode($targetLeaf$)\;\label{con:split}
					$targetLeaf \leftarrow$ New leaf to insert $\langle isax, pos \rangle$\;
				}
				Insert $\langle isax, pos \rangle$ in $targetLeaf$'s Buffer\;\label{con:output}
			}			
		}
	}
	\caption{$\textcolor{red}{IndexConstruction}$}
	\label{construction}
\end{algorithm}

\begin{figure*}[tb]
	\centering
	\subfigure[IndexBulkLoading\label{fig:inca}]{
		\includegraphics[page=1,width=0.97\columnwidth]{picture/indexcreationinmemory.pdf}
	}
	\hspace*{0.2cm}
	\subfigure[IndexConstruction\label{fig:incb}] {
		\includegraphics[page=4,width=0.97\columnwidth]{picture/indexcreationinmemory.pdf}
	}
	\caption{Workflow and algorithms relevant to ParIS in memory index creation.}
	\label{fig:inc}
\end{figure*}
}

\remove{
\begin{algorithm}[ht]
	{	
		\SetAlgoLined
		\KwIn{\textbf{Index} $index$, \textbf{Integer} $n_t$, \textbf{Raw Data} $RawData$}
		\vspace*{.1cm}
		\vspace*{.1cm}
		
		\For{$i$ $\leftarrow$ $1$ \emph{\KwTo} $n_t$} 
		{
			create a thread to execute an instance of\\
			IndexBulkLoading($index$, appropriate part of $RawData$, offset of Rawdata Part $p$,$i$)\;
		}
		
		\For{$i$ $\leftarrow$ $1$ \emph{\KwTo} $n_t$}
		{
			create a thread to execute an instance of IndexConstruction($index$,$n_t$)\; \label{cr-p:flushfbl}
		}
	}
	\caption{$\textcolor{red}{CreateIndex-parallel Receive Buffers}$}
	\label{creatindex-p}
\end{algorithm}

\begin{algorithm}[ht]
	%\DontPrintSemicolon
	{	
		\SetAlgoLined
		\KwIn{\textbf{Index} $index$, \textbf{Raw Data} $RawData\_part$, \textbf{Integer} $p$, \textbf{Integer} $n_r$}
		\vspace*{.1cm}
		\vspace*{.1cm}
		\For{$i$ $\leftarrow$  $1$ \emph{\KwTo} size of $RawData\_part$}
		{
			$index.SAX[p+i]$ = ConvertToSAX ($RawData\_part[i]$)\;\label{al2:cover}
			InsertIntoRecBuf ($\langle index.SAX[p + i], p+i \rangle$, $n_r$)\;\label{blkp:insert}
		}
	}
	\caption{$\textcolor{red}{IndexBulkLoading-MESSI}$}
	\label{bulkloading-p}
\end{algorithm}
}

%%%%%%%%%%%%%%%%%%%%%%%%%%%%%%%%%%%%%%%%%%%%%%%%%%%%%%%%%%%%%%%%%%%%%%%%%%%%

\subsection{Query Answering with Euclidean Distance}

\begin{algorithm}[tb]
	{	
		\SetAlgoLined
		\textbf{Shared float} $BSF$\;
		\KwIn{\textbf{QuerySeries} $QDS$, \textbf{Index} $index$, \textbf{Integer} $N_q$ }
		QDS\_iSAX = calculate iSAX summary for QDS\;\label{eshq:calcuqisax}
		BSF = approxSearch($QDS\_iSAX$, $index$)\;\label{eshq:appro}
		\For{$i$ $\leftarrow$ $0$ \emph{\KwTo} $N_q-1$\label{eshq:scq}} {
			$queue[i]$ = Initialize the $i$th priority queue\;
		}\label{eshq:ecq}

		\For{$i$ $\leftarrow$ $0$ \emph{\KwTo} $N_s-1$\label{eshq:scw}} {
			create a thread to execute an instance of  		
			SearchWorker($QDS$, $index$, $queue[]$, $i$, $N_q$)\;\label{eshq:cw}
		}\label{eshq:ecw}
		Wait for all threads to finish\;\label{eshq:finish}
		\Return ($BSF$)\;\label{eshq:return}
	}
\caption{$Exact Search$}
\label{eshq}
\end{algorithm}

%We now describe the query answering process for MESSI, using the Euclidean distance measure.
The pseudocode for executing an exact search query with Euclidean distance is shown 
in Algorithm~\ref{eshq}. 
We first calculate the iSAX summary of the query (line~\ref{eshq:calcuqisax}),
and execute an approximate search (line~\ref{eshq:appro}) to find 
the initial value of BSF, i.e., a first upper bound
on the actual distance between the query and the series
indexed by the tree. This process is illustrated in Figure~\ref{fig:MESSIappro}.  

\begin{figure*}[t]
	\subfigure[Approximate search and first BSF\label{fig:MESSIappro}]
	{
		\includegraphics[page=1,height=5.3cm]{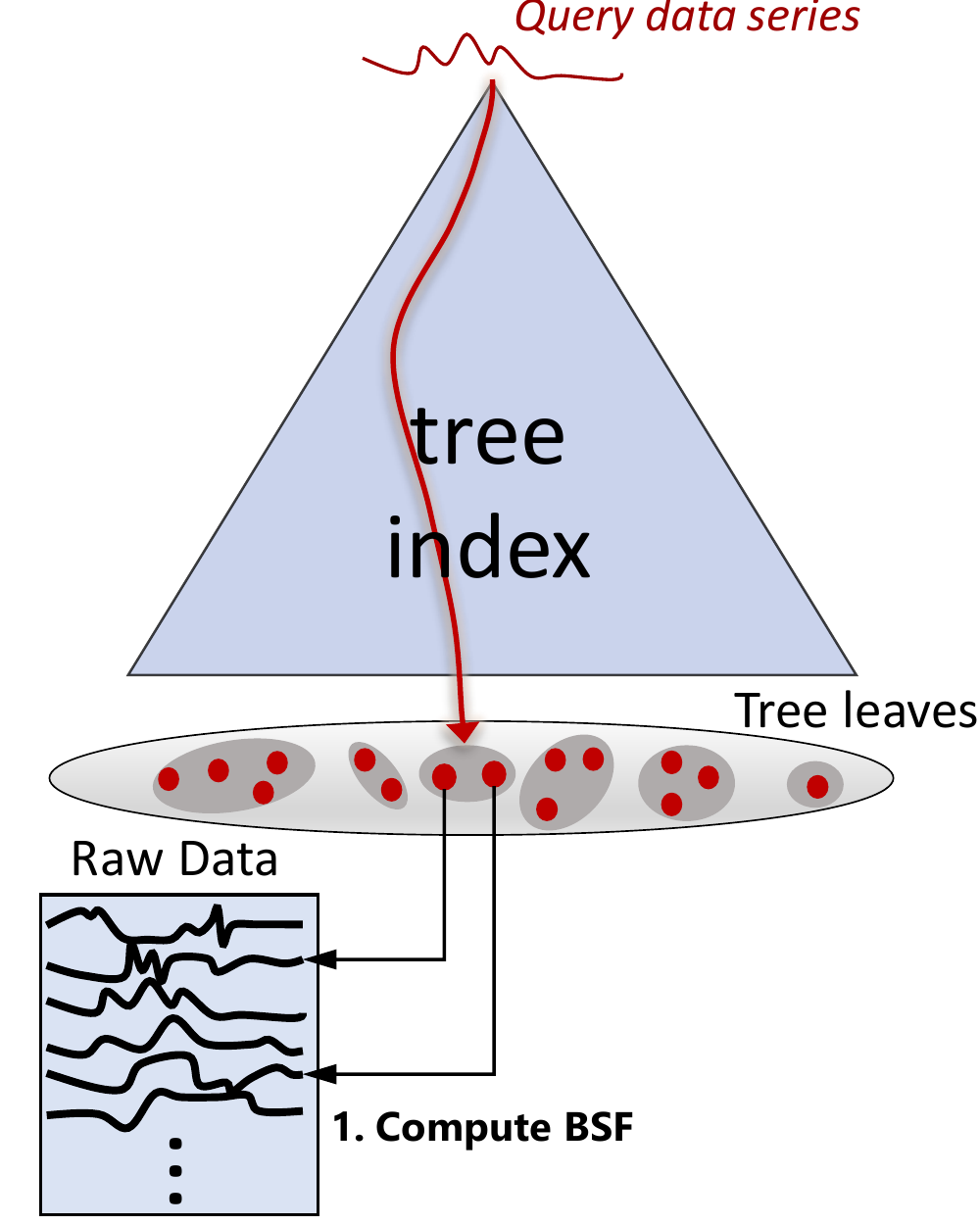}
		%\caption{Workflow and algorithms relevant to MESSI shared queue query answering.}
		%	\label{fig:MESSIsq}
		%	\end{minipage}
	}
	%\subfigure[\textcolor{red}{Prepare shared queue}\label{fig:MESSIpsq}]
	%{
	%	\includegraphics[page=1,height=6.1cm]{picture/preparespq.pdf}
	%}	
	%\subfigure[\textcolor{red}{Node distance calculation from shared queue}\label{fig:MESSIpopsq}]
	%{
	%	\includegraphics[page=1,height=6.1cm]{picture/popspq.pdf}
	%}
	\hspace*{0.7cm}
	\subfigure[Tree traversal and priority queues\label{fig:MESSIphq}]
	{
		\includegraphics[page=1,height=5.3cm]{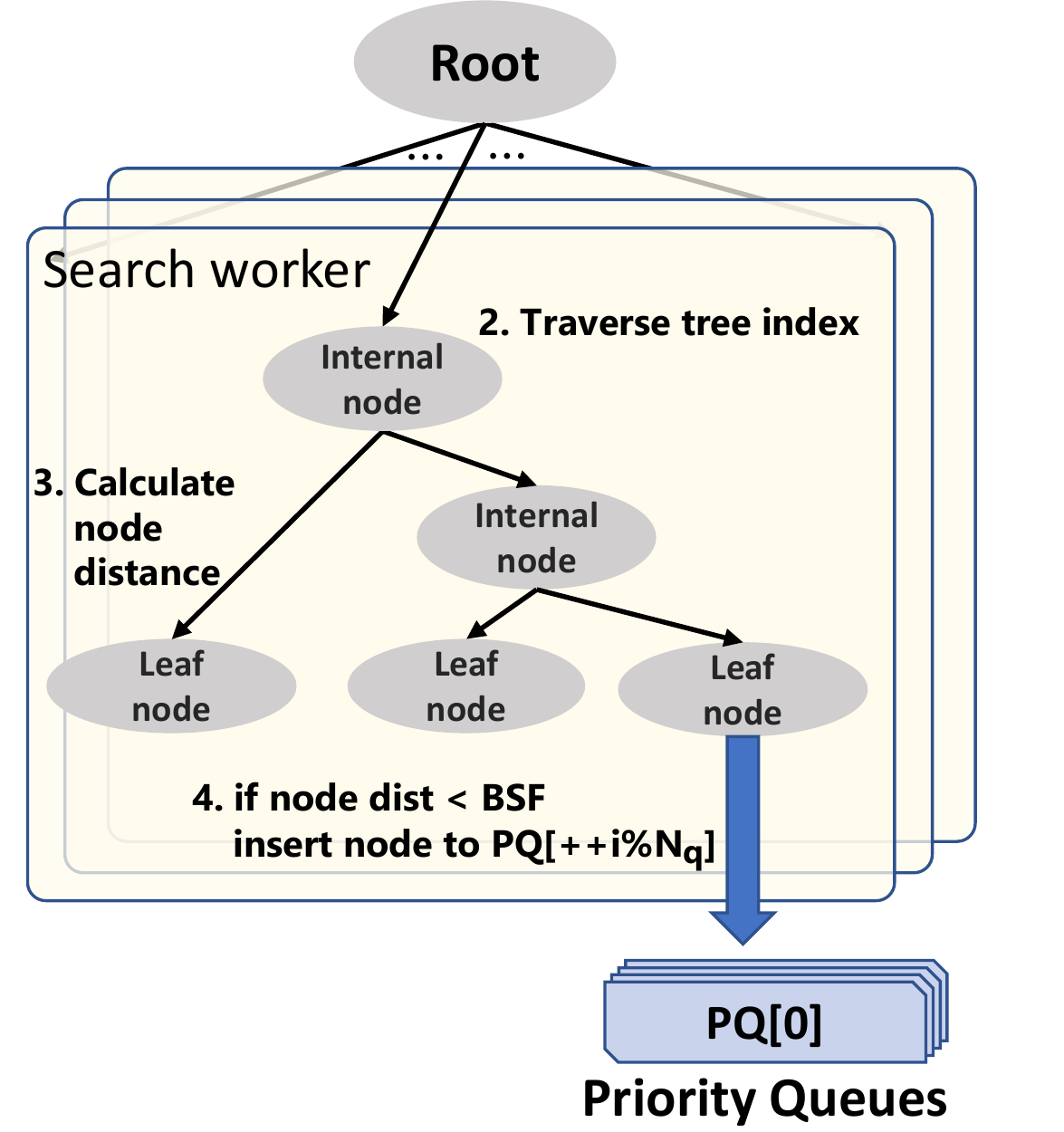}
	}
	\subfigure[Node distance calculation from priority queues\label{fig:MESSIpophq}]
	{
		\includegraphics[page=1,height=5.3cm]{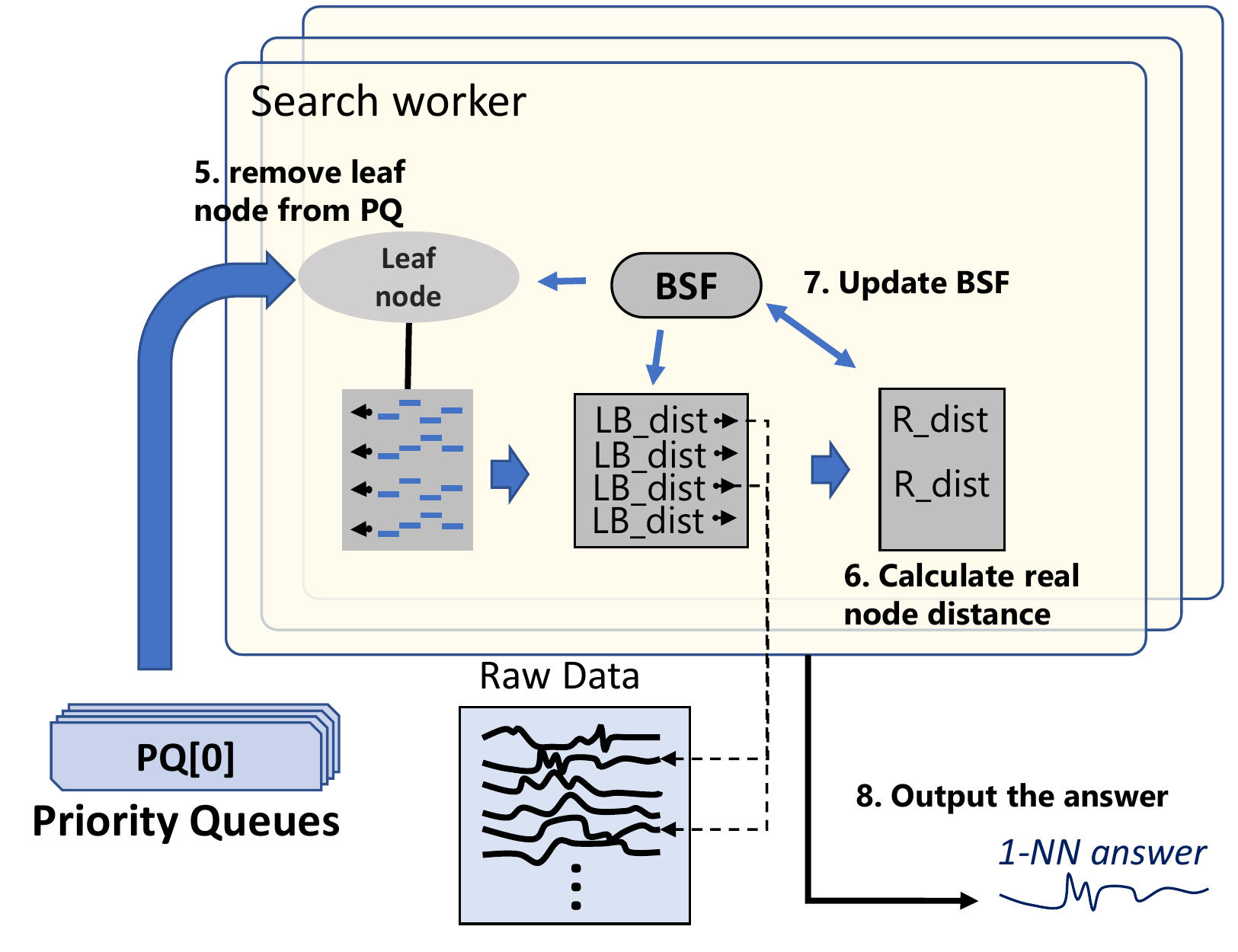}
	}
	\caption{Workflow and algorithms for MESSI query answering}
	\label{fig:MESSIquerychart}
\end{figure*}

During a search query, the index tree is traversed and the distance of the iSAX summary
of each of the visited nodes to the iSAX summary of the query is calculated. 
%We call this calculation the {\em lower bound distance calculation}.
%The node iSAX summaries are so that 
If the distance of 
the iSAX summary of a node, $nd$, to the query iSAX summary is higher than BSF, then we are certain that the distances
of all data series indexed by the subtree rooted at $nd$
are higher than BSF. 
So, the entire subtree can be pruned. 
Otherwise, we go down the subtree, and the leaves with a distance to the query smaller 
than the BSF, are inserted in the priority queue.
%with the potential to be further examined in the future.

The technique of using priority queues 
maximizes the pruning degree, thus resulting in a relatively 
small number of raw data series whose real distance to the query series 
must be calculated. As a side effect, BSF converges fast
to the correct value. Thus, the number of iSAX summaries 
that are tested against the iSAX summary of the query series 
is also reduced. 
%We call the calculation of the distance of
%a (raw) data series to the query series {\em real distance calculation}.

Algorithm~\ref{eshq} creates $N_s = 48$ threads, called the {\em search workers} (lines~\ref{eshq:scw}-\ref{eshq:ecw}), 
which perform the computation described above by calling $SearchWorker$. 
It also creates $N_q \geq 1$ priority queues (lines~\ref{eshq:scq}-\ref{eshq:ecq}),
where the search workers place those data series that are potential candidates
for real distance calculation. After all search workers have finished (line~\ref{eshq:finish}), 
$ExactSearch$ returns the current value of $BSF$ (line~\ref{eshq:return}). 

We have experimented with two different settings regarding the number of 
priority queues, $N_q$, that the search workers use. 
The first, called {\em Single Queue} ($SQ$), 
refers to $N_q = 1$, whereas the second focuses in the Multiple-Queue ($MQ$) case where $N_q > 1$.
Using a single shared queue imposes a high synchronization overhead, 
whereas using a local queue per thread results in load imbalance, 
since, depending on the workload, the size of the different queues may vary significantly. 
Thus, we choose to use $N_q$ shared queues, 
where $N_q$ is a fixed number
(in our analysis $N_q=24$, as experiments showed that this is the
best choice).

The pseudocode of search workers is shown in Algorithm~\ref{eshqw},
and the work they perform %is illustrated 
in Figures~\ref{fig:MESSIphq} and~\ref{fig:MESSIpophq}.
At each point in time, each thread works on a single queue. 
Initially, each queue is shared 
by two threads. 
Each search worker first identifies the queue where it will 
perform its first insertion (line~\ref{eshqw:startq}). 
Then, it repeatedly %(lines~\ref{+++3}-\ref{+++7}) 
chooses (using Fetch\&Inc)
a root subtree of the index tree to work on 
by calling $TraverseRootSubtree$ (line~\ref{eshqw:traverse}). 
After all root subtrees have been processed (line~\ref{eshqw:barrier}), 
it repeatedly chooses 
%(lines~\ref{+++}-\ref{+++}) 
a priority queue
(lines~\ref{eshqw:sgetqueue},~\ref{eshqw:egetqueue})
and works on it by calling\\ $ProcessQueue$ (line~\ref{eshqw:getqueue}).  
%and performs the lower bound distance calculation
%on the iSAX summaries it contains. 
Each element of the $queue$ array has a $finished$ field indicating if the processing of the corresponding priority queue has 
finished. As soon as a search worker determines that all priority queues
have been processed (line~\ref{eshqw:finish}), it terminates.

\begin{algorithm}[tb]
	{
		\SetAlgoLined
		\KwIn{\textbf{QuerySeries} $QDS$, \textbf{Index} $index$, \textbf{Queue} $queue[]$, \textbf{Integer} $pid$, \textbf{Integer} $N_q$}
		\textbf{Shared integer} $N_{b}=0$\;  
		\vspace*{.1cm}
		$q = pid \mod  N_q$\;\label{eshqw:startq}
		\vspace*{.1cm}
		\While{(TRUE)\label{eshqw:loop}}
		{
			$i\leftarrow${\em Atomically} fetch and increment $N_{b}$\label{eshqw:ato}\;
			\textbf{if} ($i \geq 2^w$) \textbf{then} break\label{eshqw:break}\;
			$TraverseRootSubtree$($QDS$, $index.rootnode[i]$, $queue[]$, $\&q$, $N_q$)\;\label{eshqw:traverse}
			
		}
		\vspace*{.2cm}
		Barrier to synchronize the search workers with one another;\label{eshqw:barrier}\\
                $q = pid \mod N_q$\;
		\vspace*{.2cm}
	%	$ProcessQueue(QDS, index, queue[q])$\;
	%	\vspace*{.2cm}
		
		\While{(true)\label{eshqw:sgetqueue}}
		{
			$ProcessQueue(QDS, index, queue[q])$\label{eshqw:getqueue}\;
			\If{all queue[].finished=true}
			{
			break\;	\label{eshqw:finish}
			}
			$q \leftarrow$ index such that $queue[q]$ has not been processed yet\;
		}\label{eshqw:egetqueue}
	}
\caption{$SearchWorker$}
\label{eshqw}
\end{algorithm}

We continue to describe the pseudocode for \\
$TraverseRootSubtree$,
which is presented in Algorithm~\ref{TraverseRootSubtree} and illustrated in Figure~\ref{fig:MESSIphq}. 
$TraverseRootSubtree$ is recursive. On each internal node, $nd$, it checks whether the (lower bound) 
distance of the iSAX summary of $nd$ to the raw values of the query (line~\ref{insrq:mindist})
is smaller than the current $BSF$, and if it is,
it examines the two subtrees of the node using recursion (lines~\ref{tr:insl}-\ref{tr:insr}). 
If the traversed node is a leaf node and its distance to the 
iSAX summary of the query series is smaller than the current BSF (lines~\ref{tr:sleaf}-\ref{tr:eleaf}), it places it in the appropriate
priority queue (line~\ref{tr:insert}). 
Recall that the priority queues
are accessed in a round-robin fashion (line~\ref{tr:rq}). 
This strategy maintains the size of the queues balanced,
and reduces the synchronization cost of 
node insertions to the queues. 
We implement this strategy by (1) passing a pointer 
to the local variable $q$ of $SearchWorker$
as an argument to \\$TraverseRootSubtree$, 
(2) using the current value of $q$ for choosing the next queue
to perform an insertion (line~\ref{tr:insert}), and (3) updating the value
of $q$ (line~\ref{tr:rq}). 
Each queue
may be accessed by more than one threads, so a lock per queue is
used to protect its concurrent access by multiple threads.

\begin{algorithm}[tb]
	{
		\SetAlgoLined
		\KwIn{\textbf{QuerySeries} $QDS$, \textbf{Node} $node$, \textbf{queue} $queue[]$, \textbf{Integer} $*pq$, \textbf{Integer} $N_q$}
		\vspace*{.2cm}
		$nodedist$ = FindDist($QDS$, $node$)\;\label{insrq:mindist}
		\uIf{$nodedist$ $>$ $BSF$\label{tr:ifnodedist}} 
		{
			break\;
		}
		\uElseIf{$node$ is a leaf\label{tr:sleaf}} 
		{
			acquire $queue[*pq]$ lock\label{tr:lq}\;
			Put $node$ in $queue[*pq]$ with priority $nodedist$\;\label{tr:insert}
			release $queue[*pq]$ lock\label{tr:ulq}\;
			\emph{// next time, insert in the subsequent queue\\}
			$*pq\leftarrow (*pq+1) \mod N_q$\;\label{tr:rq}
		}\label{tr:eleaf}
		\Else
		{
			TraverseRootSubtree($node.leftChild,queue[],pq,N_q$)\;\label{tr:insl}
			TraverseRootSubtree($node.rightChild,queue[],pq,N_q$)\label{tr:insr}
		}
	}
	\caption{$TraverseRootSubtree$}
	\label{TraverseRootSubtree}
\end{algorithm}

We next describe how $ProcessQueue$ works (see Algorithm~\ref{ProcessQueue} and Figure~\ref{fig:MESSIpophq}). 
The search worker repeatedly 
%(lines~\ref{+++1}-\ref{+++12}) 
removes the (leaf) node, $nd$, with the highest priority 
from the priority queue, 
and checks whether the corresponding distance stored in the queue 
%(between the iSAX summaries of the node
%and the search query) 
is still less than the BSF. 
We do so, because the
BSF may have changed since the time that the leaf node was inserted in the priority queue. 
If the distance is less than the BSF, then $CalculateRealDistance$ (line~\ref{process:rd1}) is called to identify if any series in the leaf node (pointed to by $nd$)
has a real distance to the query that is smaller than the current BSF. 
If we discover such a series (line~\ref{process:if}), 
$BSF$ is updated to the new value (line~\ref{process:upbsf}). 
We use a lock to protect BSF from concurrent update efforts (lines~\ref{process:loc},~\ref{process:unloc}).
Previous experiments showed that the initial value of BSF is very close to its final value~\cite{gogolou2019progressive,DBLP:conf/sigmod/GogolouTEBP20}. 
Indeed, in our experiments, the BSF is updated only
10-12 times (on average) per query. 
So, the synchronization cost for updating the BSF is negligible. 
%	\remove{
%			b) The use of priority queues in MESSI results in better pruning than in the other algorithms.
%	      As a consequence, less updates on BSF take place. We have updated the text accordingly.
%}

\begin{algorithm}[tb]
	{	
		\SetAlgoLined
		\KwIn{\textbf{QuerySeries} $QDS$, \textbf{Index} $index$, \textbf{Queue} $Q$}

		\While{TRUE)\label{process:popnode}}
		{
			acquire Q's lock\label{process:dloc}\;
			$node$ = DeleteMin($Q$)\label{process:dlmin}\;
			release Q's lock\label{process:duloc}\;
			\If{node == NULL} { return\;}
			\uIf{$node.dist$ $<$ $BSF$\label{process:1if}}
			{
				$realDist$ = CalculateRealDistance($QDS$, $index$, $node$)\label{process:rd1}\;
				\If{\textbf{$realDist$} $<$ $BSF$\label{process:if}} 
				{
					acquire $BSFLock$\;\label{process:loc}
					\If{$realDist < BSF$\label{process:2if}} {
						$BSF$ = $realDist$\;\label{process:upbsf} }
					release $BSFLock$\;\label{process:unloc}
				}
			}
			\Else
			{\label{process:else}
				$Q.finished$ = true\;
				return\;	
			}	
		}
	}
	\caption{$ProcessQueue$}
\label{ProcessQueue}
\end{algorithm}

$CalculateRealDistance$ is shown in Algorithm~\ref{realdist}. 
Note that both the lower bounding (line~\ref{rd:ld}) and the real (line~\ref{rd:rd}) distance calculations use SIMD~\cite{peng2018paris}. 
However, this does not lead to the same significant
impact in performance as in ParIS+. 
This is because pruning is much more effective in MESSI for two reasons: (i) MESSI performs much less lower bounding distance calculations since many of them are pruned during the traversal of the tree %and the population of the priority queue 
(see Algorithm~\ref{TraverseRootSubtree}); 
(ii) MESSI also performs a smaller number of real distance calculations since examining the raw data series in the order defined by the priority queue (see Algorithm~\ref{ProcessQueue}), rather than in the order of the raw file that ParIS+ uses, means that the \emph{BSF} gets updated earlier and converges earlier to the value of the nearest neighbor, leading to better pruning.
%of (raw) data series to be examined is limited in comparison
%to those examined in ParIS+ (we quantify the effect of this new design in our experiments).

\begin{algorithm}[tb]
{
	\SetAlgoLined
	\KwIn{\textbf{QuerySeries} $QDS$, \textbf{Index} $index$, \textbf{node} $node$, \textbf{float} $BSF$}

	\For{every ($isax$, $pos$) pair $\in$ $node$} 
	{
		\If{$LowerBound\_SIMD$($QDS$, $isax$) $<$ $BSF$\label{rd:ld}}
		{
			$dist = RealDist\_SIMD(index.RawData[pos],QDS)$\;\label{rd:rd}
			\If{$dist < BSF$}
			{
				 $BSF = dist$\;
			}
		}
	}
	\Return{($BSF$)}
}
\caption{$CalculateRealDistance$}
\label{realdist}
\end{algorithm}

%\subsection{\textcolor{red}{Parallel traditional exact search with local queue and work stealing}}
%\textcolor{red}{Base on reduce the overhead of priority queue and its lock, we keep the priority queue locally. However, it will lead to another problem, the imbalance of between priority queues in different workers. To resolve this problem, firstly, I use work stealing approach when we insert the node into the local priority queue, it can reduce the number of empty queues. And we let the worker which finished its work earlier randomly work with worker which not finished yet.}

\ignore{

\textcolor{red}{
Base on this design, we can reduce the time of handling a big priority queue by using several queues.
Be benefit by tree base indexing searching, 
the incident of update BSF often happen at the beginning of the query answering progress 
because of the node's distance at the beginning of the priority queue is less. 
It means that the incident of updating BSF rarely happen at the end of query answering. 
Therefore we can keep almost the same pruning proportion to be compare with MESSI Sq, 
Moreover, each such worker will help other workers 
when it finished it's own job so that keep the workload balance between different workers.  
}

\begin{algorithm}[tb]
	{	
		\SetAlgoLined
		\textbf{Shared float} $BSF$\;
		\KwIn{\textbf{QuerySeries} $QDS$, \textbf{Index} $index$,\textbf{Integer} $N_s$}
		$queue$ = Initialize the priority queue\;
		QDS\_iSAX = calculate iSAX summary for QDS\;
		BSF = approxSearch($QDS\_iSAX$, $index$)\;\label{esshareq:appro}
			
		\For{$i$ $\leftarrow$ $0$ \emph{\KwTo} $N_s-1$} {
			create a thread to execute an instance of  		
		SSQSearchWorker($QDS$, $index$, $queue$)\;\label{esshareq:cw}
		}
		Wait for all threads to finish\;
		\Return ($BSF$)\;
	}
	\caption{Exact Search with a Single Shared Queue (SSQ)}
\label{esshareq}
\end{algorithm}

\begin{algorithm}[tb]
	{
		\SetAlgoLined
		\KwIn{\textbf{QuerySeries} $QDS$, \textbf{node} $node$, \textbf{queue} $queue$}
		$nodedist$ = FindDist($QDS$, $node$)\;\label{isn:mindist}
		
		\uIf{$nodedist$ $>$ $BSF$} 
		{
			return\;
		}
		\uElseIf{$node$ is a leaf} 
		{
			acquire $queue$.lock\;\label{ins:loc}
			Put $node$ in $queue$ with priority $nodedist$\;\label{ins:ins}
			release $queue$.lock\;\label{ins:unloc}
		}
		\Else
		{
			SSQTraverseRootSubtree($QDS$,$node.leftChild$,$queue$)\;\label{ins:insl}
			SSQTraverseRootSubtree($QDS$,$node.rightChild$,$queue$)\;\label{ins:insr}
		}
	}
\caption{SSQTraverseRootSubtree}
\label{insertnode}
\end{algorithm}

\begin{algorithm}[tb]
	{	
		\SetAlgoLined
		\KwIn{\textbf{QuerySeries} $QDS$, \textbf{Index} $index$, \textbf{Queue} $queue$} 
		\textbf{Shared integer} $N_{b}=0$\;  
		\vspace*{.1cm}
		\While{(TRUE)}
		{
			$i\leftarrow${\em Atomically} fetch and increment $N_{b}$\;\label{psq:takenode}
			\textbf{if} ($i \geq 2^w$) \textbf{then} break\;
			SSQTraverseRootSubtree($QDS$,$index.root\rightarrow children[i]$, $queue$)\;		

		}
		Barrier to synchronize the search workers with one another;\label{psq:barrier}\\
		\While{$node$ =  DeleteMin($queue$)\label{psq:popnode}}
		{
			\uIf{$node.dist$ $<$ $BSF$\label{psq:lessbsf}}
			{
				$realDist$ = CalculateRealDistance($QDS$, $index$, $node$);\label{psq:realdist}\\
				\If{\textbf{$realDist$} $<$ $BSF$} 
				{
					acquire $BSFLock$\;
					$BSF$ = $realDist$\;\label{psq:ubsf}
					release $BSFLock$\;
				}
			}
			\Else
			{
				break\;	
			}	
		}
	}
\caption{SSQSearchWorker}
\label{MESSIsq}
\end{algorithm}

}

\subsection{Query Answering with Dynamic Time Warping}
Not only can MESSI accelerate similarity search based on Euclidean distance, 
but it also can be used to perform similarity search using the Dynamic Time Warping (DTW) distance.
No changes are required in the index structure for this: the index we build can answer both Euclidean and DTW similarity search queries. 
Supporting DTW queries requires modifying the query answering algorithm only, and using LB\_Keogh~\cite{keogh2005exact}, which is a tight lower bound of the DTW distance\footnote{We note that other lower bounds for DTW can be used as well, such as LB\_Improved~\cite{DBLP:journals/pr/Lemire09}. Even though LB\_Improved can produce tighter bounds, in our experiments it also resulted in higher query answering times due to the additional computations it involves.}.
Recall that a lower bound for the DTW distance between the query and a candidate series can be computed by considering the distances between the corresponding points of the candidate series and the points of the LB\_Keogh envelope of the query (see Figure~\ref{fig:envelope}; if some points of the candidate fall inside the query envelope, then their distance is zero). 

\begin{figure*}[tb]
	\centering
	\hspace*{-0.3cm}
	\includegraphics[page=1,width=0.64\textwidth]{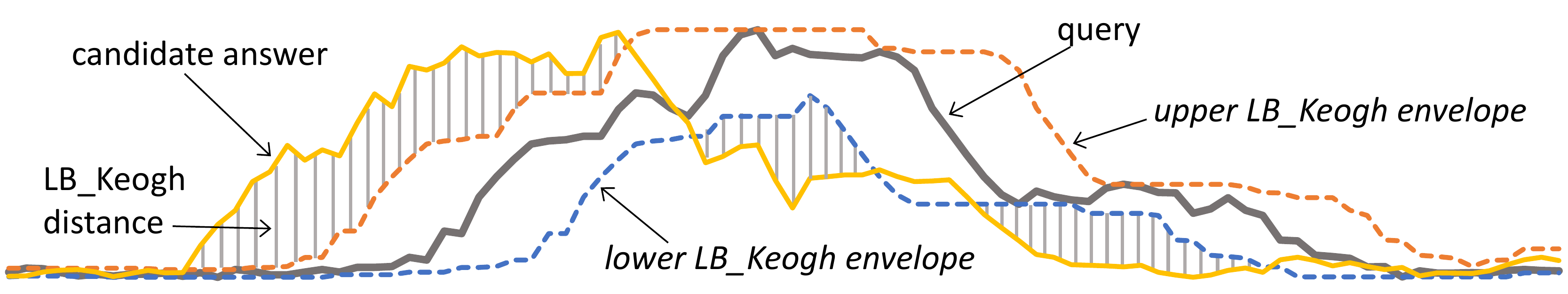}
	\caption{Envelope (dotted lines representing the upper, $U$, and lower, $L$, series that define the envelope) of query $Q$ (thick solid black line) with warping size 10\%. Vertical lines represent LB\_Keogh lower bound between query envelope and candidate answer $C$ (thin solid yellow line).}
	\label{fig:envelope}
\end{figure*}

Assuming the reach, or allowed range of (the constrained) warping, is $r$, we define two new sequences, $U$ and $L$, corresponding to the upper and lower parts of the LB\_Keogh envelope: 
$$
\small
\\U_i=\max \left(q_{i-r}:q_{i+r}\right)\\
$$
$$
\small
\\L_i=\min \left(q_{i-r}:q_{i+r}\right)\\
$$
Having defined $U$ and $L$, we now use them to define a lower bounding measure
for DTW between a query sequence $Q$ and a candidate answer $C$~\cite{keogh2005exact}:
$$
\small
\\LB\_Keogh(Q,C)=\sqrt{ \sum_{i=1}^{n}
	\left\{
	\begin{array}{lr}
		(c_i-U_i)^2 \;\; if \; c_i>U_i  \\
		(c_i-L_i)^2 \;\; if \; c_i<L_i\\
		0 \qquad\qquad otherwise &  
	\end{array}
\right.
}\\
$$

Intuitively, when the query series arrives, we compute the LB\_Keogh envelope of this series, as shown in Figure~\ref{fig:envelope}. 
We then probe the index using the envelope as the query (instead of the series itself). 
The distances we compute using the LB\_Keogh envelope are guaranteed to be lower bounds of the true DTW distances~\cite{keogh2005exact}. 
Therefore, this operation correctly prunes the search space, and (as in the case of Euclidean distance) we then simply need to remove the false positives by computing the DTW distance on the raw data values of the (small set of) candidate answers.
Overall, the process of query answering using DTW follows the same steps as those described in Algorithms~\ref{eshq}-\ref{ProcessQueue}, except that in line~\ref{eshq:calcuqisax} of Algorithm~\ref{eshq} we compute the PAA of the LB\_Keogh envelope of the query, in line~\ref{insrq:mindist} of Algorithm~\ref{TraverseRootSubtree} we compute the (lower bounding) distance between the query envelope PAA and the node iSAX summarization, and in line~\ref{process:rd1} of Algorithm~\ref{ProcessQueue} we call the function that computes the DTW real distance (Algorithm~\ref{realdtwdist}).
%We describe this process in more detail below.

\begin{algorithm}[tb]
	{		
		\SetAlgoLined
		\KwIn{\textbf{QuerySeries} $QDS$, \textbf{Envelope of QuerySeries} $EQDS$, \textbf{Index} $index$, \textbf{node} $node$, \textbf{float} $BSF$}
		\For{every ($isax$, $pos$) pair $\in$ $node$} 
		{
			\If{$LowerBound\_DTW\_SIMD$($EQDS$, $isax$) $<$ $BSF$\label{rdtwd:ld}}
			{
				\If{$LB\_Keogh(EQDS,index.RawData[pos])$ $<$ $BSF$\label{rdtwd:lbkd}}
				{
					$dist = RealDist\_SIMD(index.RawData[pos],QDS)$\;\label{rdtwd:rd}
					\If{$dist < BSF$}
					{
						$BSF = dist$\;
					}
				}
			}
		}
		\Return{($BSF$)}
	}
	\caption{$CalculateRealDistanceDTW$}
	\label{realdtwdist}
\end{algorithm}}

\begin{figure*}[tb]
	\centering
	\hspace*{-0.3cm}
	\includegraphics[page=1,width=0.7\textwidth]{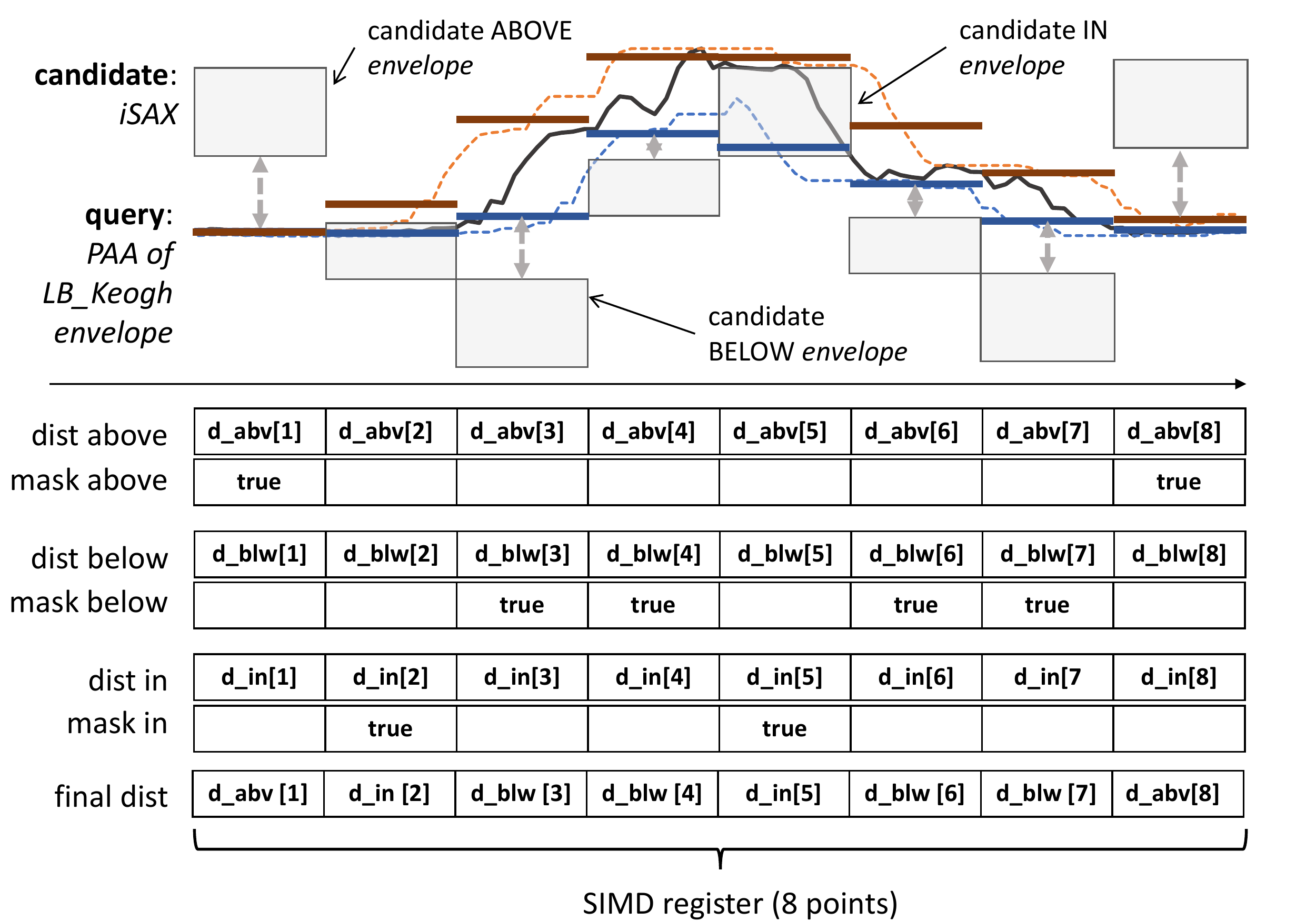}
	\caption{SIMD conditional branch DTW lower bound distance calculation.}
	\label{fig:dtwsimd}
\end{figure*}

More specifically, we first perform an approximate search in order to get an initial solution that is close to the actual answer, which will serve as our BSF.
In order to prune the index tree, we then calculate the lower bound distance between the query envelope PAA representation  and the iSAX summarization of the leaf nodes (see Figure~\ref{fig:dtwsimd} \emph{top}).
We insert the leaves that we cannot prune (i.e., the DTW lower bound is less than the BSF) in the priority queue. 
When MESSI pops a leaf from the priority queue, it calculates the DTW lower bound distance between the query envelope PAA representation and the iSAX summarization of each series in the leaf (Algorithm~\ref{realdtwdist}, line~\ref{rdtwd:ld}). 
For the series that survive this second filter, we have to access the raw data.
We start by computing the DTW lower bound distance between the query envelope raw values and the raw values of the series (Algorithm~\ref{realdtwdist}, line~\ref{rdtwd:lbkd}).
If this step cannot prune the series, either, then we finally compute the true DTW distance between the raw values of the query and the series (Algorithm~\ref{realdtwdist}, line~\ref{rdtwd:rd}).
%{\bf ??? if we need space, remove above para ???}

In order to speed up the execution of the DTW lower bound distance calculations, we develop a SIMD solution.
Note that, in contrast to the simple case of a Euclidean distance calculation on the real data, developing a SIMD solution for the LB\_Keogh lower bound is not straight-forward.
The Euclidean distance calculated on the real values of two series involves exactly the same operations (i.e., first a subtraction and then a power of two operation) for all the points of the series. 
This leads to a simple SIMD solution, where the entire SIMD register performs the same operations, all useful and necessary for the final result.
On the other hand, the algorithm for computing the DTW lower bound involves branching. 
As we discuss below, we need to perform different operations for the candidate series segments, depending on whether their values are larger, smaller, or lie within the the LB\_Keogh envelope values. 
We therefore need to translate these branches of the operation into an efficient SIMD implementation.

Our DTW lower bound SIMD solution is illustrated in Figure~\ref{fig:dtwsimd}.
In the top part of this figure, boxes represent the iSAX summary of each segment of the candidate series,
and red and blue horizontal lines represent the PAA representations of each segment
of the upper and the lower LB\_Keogh envelope of the query, respectively. 
The bottom part of Figure~\ref{fig:dtwsimd} illustrates how the lower bound DTW distance between the iSAX summary of the candidate series 
and the PAA representations of  the query LB\_Keogh envelopes, is calculated using SIMD. 
Our architecture supports the computation of the lower bound DTW distances for eight of these segments concurrently. 

This per-segment computation  needs to capture three different cases,  
namely, the PAA representation of the candidate being \emph{ABOVE}, \emph{BELOW}, or \emph{IN} the query envelope PAA representation.
The code to run is different in each of these cases.
The ABOVE distance is calculated between the lower edge of the candidate series segment's iSAX box and the corresponding segment of the PAA representation of the \emph{Upper} LB\_Keogh envelope (shown as a red line in Figure~\ref{fig:dtwsimd}).
The BELOW distance is calculated between the upper edge of the iSAX box of the candidate series and the the PAA representation of the \emph{Lower} LB\_Keogh envelope (shown as a blue line in Figure~\ref{fig:dtwsimd}).
Finally, the IN distance is simply zero. 

Since we cannot know beforehand which of the above three cases is true in each case, we use SIMD to calculate all three distances for each segment. 
These are depicted in Figure~\ref{fig:dtwsimd} as \emph{dist above}, \emph{dist below}, and \emph{dist in}. 
We then 
%executing a complex branch in SIMD, i.e. using
choose the correct distance among these three distances. 
%implementing three different conditional branches and making different branch assignments depending on the case. 
%Since we need to capture three different cases, 
To do so, we use SIMD to compute three masks, setting the appropriate positions to \emph{true} depending on the observed situation, that is, depending on the position (\emph{ABOVE}, \emph{BELOW}, or \emph{IN}) of a candidate series' segment with respect to the corresponding segments of the LB\_Keogh envelope.
In Figure~\ref{fig:dtwsimd} for example, the first candidate iSAX representation is above the corresponding query envelope PAA, which means that only the \emph{ABOVE} mask will be \emph{true} for this position; consequently we will choose
the ABOVE distance value for this position of the SIMD vector. 

The final result is then computed using SIMD by summing up the right distances for each segment, i.e., those for which the corresponding mask position was set to \emph{true} (shown as \emph{final dist} in Figure~\ref{fig:dtwsimd} bottom). 
This operation is efficiently executed by using the appropriate SIMD instructions (AVX, AVX2 and SSE3)~\cite{coorporation2009intel}. 
%As in MESSI, there are sixteen segments for each data series, whereas we have only eight SIMD registers, 
%the computation described above is performed twice.}
%we can efficiently calculate the value of the 3 branch masks. % as the result of \textcolor{red}{if statements.} 

\subsection{Complexity Analysis}

We now provide a best- and worst-case
time analysis, which contrasts the time needed in a concurrent setting with that required in a single-thread environment.
That is, we compare the performance of MESSI when multiple workers are
active with its performance when just a single thread is active. 
Note that the best- and worst-case scenaria are mainly (but not exclusively) driven by the data characteristics: as we detail below, different datasets may lead to index trees with significantly different properties.

\noindent{\bf [Index Construction]} 
	%We first analyze the 
	Index construction in MESSI is comprised of two phases.
	During the first phase (Algorithm~\ref{bulkloading-bp}, Figure~\ref{fig:inc2a}), 
	the index workers calculate the iSAX summaries of the raw data series
	and store them into the iSAX buffers. During the second phase (Algorithm~\ref{construction-p}, Figure~\ref{fig:inc2b}), 
	the index workers process the iSAX buffers and build the index tree. We analyze each of these phases separately. 
	Assume that the time needed by a single thread to execute phase 1 is $T_1$, the time needed to execute phase 2 is $T_2$,
	and the total sequential time for index creation is $T = T_1 + T_2$. 
	In MESSI, every index worker processes about the same number of data series.
	%as any other worker. 
	Note that processing data series of the same length takes the same amount of time. 
	Given that each index worker works on its own part of the iSAX buffers,
	the amount of time spent to allocate the buffers in the concurrent setting 
	is at most a factor of $2$ larger than that needed in the sequential case. 
	%These imply that %all workers do about the same amount of work in the first phase. 
	Therefore, 
	if we exclude the contention that a worker experiences when accessing the Fetch\&Increment object,
	all workers require about the same amount of time to finish the first phase.
	
	\noindent
	{\it \underline{Best Case}:} The best case scenario occurs when 
	threads experience no contention in accessing the Fetch\&Inc objects
	(i.e., no two threads access a Fetch\&Inc object at exactly the same time).
	Since the threads do not experience any contention when storing elements 
	in the iSAX buffers either, MESSI then requires $O(T_1/N_w)$ time for executing the first phase,
	which is optimal. 
	%Note that given that the processing time of each chunk is the same
	%for all workers, the best case can be easily achieved by applying a simple backoff
	%scheme at the beginning of the computation (where each worker starts after a small random backoff). 
	%
	We now focus on the second phase. In the best case scenario, it additionally holds that 
	the iSAX buffers are assigned to threads in a way that all threads finish the second stage
	at about the same time. 
	Note that %due to the way subtrees are assigned to workers in MESSI,
	this does not necessarily require that all subtrees contain the same number of nodes (e.g., if bigger subtrees are processed earlier than smaller ones, then the processing of big subtrees overlaps with that of smaller ones). 
	%Note that this can easily be implemented
	%if we maintain 1) a counter in the root of each subtree, which stores the number of elements in the subtree,
	%and 2) two pools containing pointers to the roots of the small and big subtrees, respectively
	%(based on some threashold), and have workers processing the pool of big subtrees first. 
	Then, MESSI requires $O(T_2/N_w)$ time to execute phase 2, which is also optimal. 
	Therefore, in the best case, MESSI exhibits optimal speedup 
	for index creation.
	
	\noindent
	{\it \underline{Worst Case}:} In the worst case, all $N_w$ workers access the Fetch\&Inc
	objects at the same time. We assume that the system serializes these accesses and the
	$i$th worker in this serialization order will get a response from the object after $i$ time units. 
	Therefore, each block of $N_w$ concurrent accesses to the Fetch\&Inc object (by all threads)
	adds a total of $N_w$ time units due to contention. 
	We will have $(N_c+ N_b)/N_w$ such blocks of accesses, where $N_c$ is the number of chunks
	and $N_b \leq 2^w$ is the number of root subtrees in the index tree.  In the worst case,
	MESSI will also experience the following: while a thread will be processing the last chunk  of the raw data array,
	all other threads will be sitting idle. The same will occur with the last subtree. So, if $T_c$ is the sequential time for processing a chunk
	and $T_{s}$ is the sequential time for processing the biggest subtree of the root, the 
	worst case execution time of MESSI will be $O((T-T_c-T_s)/N_w + T_c + T_{s} + N_c + N_b)$.
	In the extreme scenario, where all data series are stored in only one of the subtrees,
	this time may be no less than $T$ (the sequential time). 
	Note that this is a pathological case that would 
	happen when all series in the dataset are very similar to one another\footnote{In such a case, indexing and similarity search would not be useful anyways.}.
	%\here{Is the last sentence correct?}
	%We remark that the $N_c * N_w$ term can be saved by performing appropriate tuning
	%to ensure that the threads contend only the first time. This is so because the work
	%of each chunk is the same, and does not depend on which thread processes it. 

	\noindent
	{\bf [Query Answering]} 
	Query answering in MESSI is comprised of three phases.
	During the first phase, approximate search is executed. 
	During the second phase ({\em tree traversal}), 
	the search workers traverse the index tree and populate the priority queues. 
	During the third phase ({\em queue processing}), the search workers process the elements of the priority 
	queues to produce the final result.  We analyze each of these phases separately. 
	Assume that the time needed by a single thread to execute phase 1 is $T_1$, the time needed to execute phase 2 is $T_2$,
	the time needed to execute phase 3 is $T_3$, 
	and the total sequential time for the last two phases is $T = T_2 + T_3$. 
	%Assume that the time needed to perform a query using a single thread is $T_a+T_1$,
	%where $T_a$ is the time to perform the approximate search and $T_1$ is the rest of the time
	%for processing the query.
	The approximate search is executed by a single thread in MESSI and therefore
	this time is also $T_1$ in a multi-threaded environment. 
	We therefore focus on the other two phases.

	\noindent
	{\underline{Best case.}} For the tree traversal phase, the best case
	occurs when threads never access the Fetch\&Inc object concurrently
	and never find the lock of a queue taken. 
	Moreover, each thread must add the same number of nodes in the priority queues,
	so that all threads perform about the same amount of work. Then, 
	the time needed to execute phase 2 is $O(T_2/N_s)$. 
	For phase 3, the best case occurs when no thread has to ever wait on a lock
	and each thread performs about the same amount of computation. 
	Note that after acquiring a node from some queue, a thread has to perform computation
	(i.e., real distance calculations). These computations could be overlapping
	with the deletion of additional elements from the queue. 
	Thus, the time needed for phase 3 in the best case is $O(T_3/N_s)$. 
	Therefore, the total time is $O(T_1 + T/N_s)$.
	
	We observe that in the concurrent case, it may happen that the final value of
	BSF is reached faster than in the sequential case, since all threads update the value of BSF in parallel. 
	This may result in better pruning  
	than in the single-thread case, where the thread may process the subtree 
	(or the queue) that contains the node which results in the final value of BSF
	towards the end of the tree traversal (or the processing of the queues). 
	
	\noindent
	{\underline{Worst case.}}
	Let $T_a$ be the sequential time needed for performing 
	the insertions to the priority queues. Since both the cost for an insertion 
	and the cost for a deletion are logarithmic on the size of the priority queue,
	the sequential time needed for performing the deletions from the priority queue is 
	also in $\Theta(T_a)$. 
	Let $T_b$ be the sequential time needed for updating the BSF.
	Due to the use
	of locks, these times are still sequential in the concurrent setting. 
	Assuming queue locks~\cite{10.5555/2385452}, the steps needed to acquire or release a lock is $O(1)$.
	Note that the time a thread waits for the lock to be released is overlapping 
	with the critical sections of other threads, and therefore we do not count 
	waiting times on the locks.  
	%, and therefore we assume that 
	%the time needed to acquire or release the locks is in $O()$. 
	%. 
	
	Regarding the second phase, in the worst case, all $N_s$ workers access the Fetch\&Inc object
	at the same time. Thus, each block of $N_s$ concurrent accesses to the
	Fetch\&Inc object adds a total of $N_s$ time units (due to contention).
	We have $N_b/N_s$ such blocks of accesses, 
	thus resulting in a total cost of $O(N_b)$ time units. Therefore, the worst-case time
	for the second phase is $O((T_2-T_a)/N_s + T_a + N_b)$.
	Since $T_b \in O(T_a)$, the time to execute the third phase is $O((T_3-T_a - T_d)/N_s + T_a + T_d)$.
	Therefore, the total worst-case time is $O(T_1 + (T - T_a - T_d)/N_s + T_a + T_d + N_b)$.

\section{Proof of Correctness}
\label{secproof}
%\commentnote{R1D2:The correctness proofs are motivated by the need to provide confidence for the parallel, non-deterministic execution. Yet, many of the correctness proof are more or less an extended, somewhat informal discussion on the algorithm properties which are already fairly well described in the pseudocode. In particular, those concerning synchronization (Fetch \& Increment) and workload splitting (number of iSAX buffer/tree children) in index building do not seem to add much insight.}
We note that in concurrent algorithms, the non-deterministic nature of parallel execution may lead to errors that are not detected during testing.
In this section, we provide proofs that the proposed algorithms for index creation and query answering always produce correct results, irrespective of the peculiarities of parallel execution.

\noindent
{\bf [Index construction phase] }
MESSI builds the tree index with minor synchronization,
i.e., by using two Fetch\&Inc objects and a barrier. 
This makes the correctness proof for index creation relatively simple.
However, for completeness, we include it below.

We say that a data series $S$ of the \emph{RawData} array {\em is processed} 
whenever a thread calculates its iSAX summary (i.e., executes line~\ref{al2b:cover}
of Algorithm~\ref{bulkloading-bp}). 
We say that a chunk of the \emph{RawData} array {\em is processed}
if a thread processes data series stored in it.

%and places it in an 
%iSAX buffer. 

The use of the Fetch\&Inc object, $F_c$, in Algorithm~\ref{bulkloading-bp}, 
ensures that for each $0 \leq i \leq size/chunk\_size$,
(where $size$ is the size of the \emph{RawData} array and $chunk\_size$ is the size of each of its chunks),
there exists exactly one thread $p$ that gets number $i$ when accessing $F_c$ (line~\ref{al2b:loop} of Algorithm~\ref{bulkloading-bp}), and no thread other than $p$ processes chunk $i$. 
By inspection of the code of Algorithm~\ref{bulkloading-bp} (lines~\ref{al2b:loop} and~\ref{al2b:initb} and condition of 
the \emph{if} statement of line~\ref{al2b:if}), it follows that $F_c$ is accessed until its value becomes as large as the 
number of chunks of the \emph{RawData} array. Therefore, for every chunk of the \emph{RawData} array, 
there is exactly one thread to which this chunk is assigned.
Line~\ref{al2b:for} ensures that once a chunk is assigned to a thread, all the data series it contains are processed by this thread.
These (and the pseudocode) imply the following:

\begin{lemma}
\label{iSAX buffers}
For every data series, $S$, contained in the \emph{RawData} array, the following hold: (1) $S$ is processed exactly once, i.e. there is a single thread $p$ that calculates the iSAX summary for $S$;
(2) there exists exactly one iSAX buffer that contains an entry $e$ corresponding to $S$, 
and this entry appears in the part of the buffer that is assigned to $p$.
\end{lemma}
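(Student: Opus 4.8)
The plan is to leverage the properties of the Fetch\&Inc object $F_c$ already established in the paragraph preceding the lemma, and to reduce both claims to the single fact that the chunks form a partition of the \emph{RawData} array. First I would observe that, since the offsets $b \cdot chunk\_size$ computed at line~\ref{al2b:initb} tile the array (and padding guarantees the array size is a multiple of $chunk\_size$), every data series $S$ lies in exactly one chunk $i$. This lets me transfer all reasoning from the level of individual series to the level of chunks, where the Fetch\&Inc invariant already does the work.

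For part (1), I would combine the partition observation with the already-proven statement that each chunk $i$ is assigned to exactly one thread $p$ and is processed by no other thread. The existence direction (``$S$ is processed at least once'') follows because $F_c$ is incremented until its value reaches the number of chunks, so chunk $i$ is eventually assigned to some thread; the for-loop at line~\ref{al2b:for} then guarantees that this thread computes the iSAX summary for every series in chunk $i$, including $S$, via line~\ref{al2b:cover}. The uniqueness direction follows because no thread other than $p$ ever touches chunk $i$, and within $p$'s own execution the loop body runs exactly once per index $j$; hence line~\ref{al2b:cover} is executed exactly once with argument $S$.

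For part (2), I would argue that each invocation of line~\ref{al2b:cover} on $S$ is immediately followed by a deterministic computation of the target subtree index $\ell$ from the iSAX summary (line~\ref{al2b:mask}) and a single write into $index.iSAXbuf[\ell][pid]$ with $pid = p$ (line~\ref{al2b:insert}). Since $\ell$ is a function of the iSAX summary alone, it is uniquely determined, so the entry $e$ corresponding to $S$ is created in exactly one buffer, namely buffer $\ell$, and specifically in the part indexed by $p$. Uniqueness of the entry across \emph{all} buffers then reduces to part~(1): because $S$ is processed exactly once, line~\ref{al2b:insert} executes exactly once for $S$, producing precisely one entry.

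I do not anticipate a genuine obstacle here; the proof is essentially bookkeeping layered on top of the Fetch\&Inc invariant, which is why the authors flag it as ``relatively simple.'' The only point requiring care is keeping the two notions clearly separated --- ``being processed'' (executing line~\ref{al2b:cover}) versus ``having an entry in a buffer'' (executing line~\ref{al2b:insert}) --- and arguing each ``exactly once'' claim in both directions: existence through the termination of $F_c$ (every chunk is eventually assigned), and uniqueness through the disjointness of the chunk assignments (no chunk is assigned twice).
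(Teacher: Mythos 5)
Your proposal is correct and follows essentially the same route as the paper's argument: the Fetch\&Inc invariant on $F_c$ yields a one-to-one assignment of chunks to threads, the for-loop at line~\ref{al2b:for} lifts this to individual series (claim~1), and claim~2 then follows from lines~\ref{al2b:mask}--\ref{al2b:insert}. Your version is merely a bit more explicit in separating the existence and uniqueness directions and in noting that the target buffer index $\ell$ is determined by the iSAX summary alone.
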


\remove{
\begin{proof}
The use of the Fetch\&Increment object, $F_c$, ensures that for each $i \geq 0$, $0 \leq i \leq size/chunk\_size$,
(where $size$ is the size of the \emph{RawData} array and $chunk\_size$ is the size of each of its chunks),
there exists exactly one thread $p$ that gets number $i$ when accessing $F_c$ (line~\ref{al2b:loop} of Algorithm~\ref{bulkloading-bp}), and no thread other than $p$ processes chunk $i$. 
By inspection of the code of Algorithm~\ref{bulkloading-bp} (lines~\ref{al2b:loop} and~\ref{al2b:initb} and condition of 
the \emph{if} statement of line~\ref{al2b:if}), it follows that $F_c$ is accessed until its value becomes as large as the 
number of chunks of the \emph{RawData} array. Therefore, for every chunk of the \emph{RawData} array, 
there is exactly one thread to which this chunk is assigned.
Line~\ref{al2b:for} ensures that once a chunk is assigned to a thread, all the data series it contains are processed by this thread. 
Thus, claim~1 follows. 

Claim~1 and lines~\ref{al2b:mask},~\ref{al2b:insert} (Algorithm~\ref{bulkloading-bp}) imply that claim~2 holds.
\end{proof}
}

We use Lemma~\ref{iSAX buffers} to argue that the constructed tree index is correct.

\begin{theorem}
\label{tree construction}
The data structure $T$ constructed by executing the IndexConstruction phase (Algorithms~\ref{bulkloading-bp} and~\ref{construction-p})
is a tree that contains a distinct element for every data series of the \emph{RawData} array
and no more elements. 
%The tree index constructed by Algorithms~\ref{+++2} and~\ref{+++4}
%contains exactly one entry for each of the data series cotained in the RawData array.
\end{theorem}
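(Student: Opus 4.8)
The plan is to reduce the correctness of the concurrent tree construction to the (already understood) correctness of \emph{sequential} iSAX-tree insertion, by showing that the parallel workers operate on pairwise-disjoint parts of the structure and that every relevant entry is inserted exactly once. I would organize the argument around three claims: (i) the final structure $T$ is a tree; (ii) every data series of \emph{RawData} is represented by a distinct element of $T$; and (iii) $T$ contains no element that does not correspond to some data series.

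First I would apply the same Fetch\&Inc argument used for Lemma~\ref{iSAX buffers}, but now to the object $F_b$ of Algorithm~\ref{construction-p}: the loop repeats until $F_b$ reaches $2^w$ (line~\ref{conp:break}), and the atomic increment guarantees that each value $b \in \{0,\dots,2^w-1\}$ is handed out to exactly one worker. Hence each of the (at most $2^w$) iSAX buffers is assigned to exactly one worker, and no buffer is processed more than once or left unprocessed. Since, by construction, all entries placed in a given iSAX buffer belong to the same root subtree, and distinct buffers correspond to distinct root subtrees, the workers operate on \emph{pairwise-disjoint} subtrees of the root. This disjointness is the crux: it means that no two workers ever read or modify the same tree node, so although execution is concurrent, each individual subtree is built by a single worker in isolation, exactly as in a sequential execution.

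Given disjointness, claim (i) follows because each worker, while processing its buffer, performs the standard iSAX insertion routine (locate the target leaf, repeatedly \emph{SplitNode} while the leaf is full, then insert), which preserves the tree invariant on its subtree; gluing the pairwise-disjoint subtrees under the common root yields a tree. For claim (ii), I would combine the completeness just established with Lemma~\ref{iSAX buffers}: by part~(2) of that lemma, each data series $S$ contributes exactly one entry $e$ to exactly one iSAX buffer; since every buffer is processed, and the inner loops (lines~\ref{conp:lay}--\ref{al3:output}) visit every part of the buffer and every pair within it, $e$ is inserted into $T$ exactly once, giving $S$ a distinct element. For claim (iii), the converse, the only entries ever inserted are those already present in the iSAX buffers, and by Lemma~\ref{iSAX buffers} these are in one-to-one correspondence with the data series of \emph{RawData}; no worker fabricates entries, and no entry is inserted twice, so $T$ carries no surplus element.

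The main obstacle is establishing the disjointness and non-interference of the concurrent workers rigorously, i.e.\ arguing that the assignment of iSAX buffers to root subtrees is a function into pairwise-disjoint subtrees and that no shared state other than $F_b$ itself is touched during insertion (so that splits in one subtree cannot affect another). Once this is pinned down, the remainder is essentially bookkeeping layered on top of Lemma~\ref{iSAX buffers}, and the correctness of a single worker's insertions reduces to the already-understood sequential iSAX-tree case.
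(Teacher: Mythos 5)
Your overall strategy is the one the paper itself uses: the Fetch\&Inc object $F_b$ hands each iSAX buffer to exactly one worker, buffers are in one-to-one correspondence with pairwise-disjoint root subtrees, so the workers never touch a common tree node and each subtree is built as in a sequential execution; completeness and the ``no surplus elements'' claim then follow from Lemma~\ref{iSAX buffers}. That part of your argument matches the paper's proof.

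There is, however, one genuine gap: you never argue that a worker reads an iSAX buffer only \emph{after} that buffer has been fully populated. Lemma~\ref{iSAX buffers} guarantees that, by the end of the summary-computation phase, each series has exactly one entry in exactly one buffer; it does not by itself exclude an interleaving in which one thread finishes Algorithm~\ref{bulkloading-bp} early, begins Algorithm~\ref{construction-p}, and iterates over a buffer part into which another thread is still appending entries. In such an execution your step ``the inner loops visit every part of the buffer and every pair within it'' fails: a pair written after the read loop has passed that part is silently dropped, so the corresponding series gets no element in $T$ (and the concurrent read/append on a dynamically resized buffer part is itself a data race). The paper closes exactly this hole by invoking the barrier on line~\ref{iw:barrier} of Algorithm~\ref{indexworker}, which forces every worker to complete the summary phase before any worker starts tree construction. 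This phase-separation argument is the one missing ingredient; with it added, the rest of your proof goes through as written.
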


\begin{proof}
Initially, $T$ is a tree with a root node and $c\le2^w$ leaf children. 
By inspection of the code, additional elements can be added into $T$ 
by executing Algorithm~\ref{construction-p}.
The barrier on line~\ref{iw:barrier} of Algorithm~\ref{indexworker}
ensures that no thread starts inserting additional elements 
in $T$, as long as there exist threads that 
still process data series stored in the \emph{RawData} array (i.e., they still execute Algorithm~\ref{bulkloading-bp}). 
Thus, Lemma~\ref{iSAX buffers} implies that a thread
calls Algorithm~\ref{construction-p} only after the
iSAX summaries of all data series stored in the \emph{RawData} array 
have been placed in the iSAX buffers. 

Recall that the number of iSAX buffers is also $c$ (the same as the number of the root children of $T$).
The use of the Fetch\&Increment object, $F_b$, and lines~\ref{conp:loop} and~\ref{conp:break} ensure 
that for each $i$, $0 \leq i \leq c$,
exactly one thread $p$ gets number $i$ by accessing $F_b$
(i.e., by executing line~\ref{al3:gotnode} of Algorithm~\ref{construction-p}). 
Recall that every calculated iSAX summary is placed in 
the appropriate iSAX buffer, i.e., in  
the iSAX buffer that corresponds to the root subtree of $T$ in which the iSAX summary should be stored. 
Thus, iSAX buffer $i$ contains only those data series that 
are to be stored in $T$'s root subtree numbered $i$.
It follows that the task to build the 
entire subtree has been assigned solely to process $i$. 
So, different threads work on different
subtrees of $T$ (and no synchronization is needed between them).
%	building different subtrees of $T$). 
It follows that $T$ ends up to be an index tree.

	%Therefore, during the execution of Algorithm~\ref{+++4},
%the only shared variable that can be accessed by more than one threads is $F_b$, 
%and thus no synchronization is needed between threads enlarging $T$.

Lines~\ref{conp:loop}-\ref{conp:break}  ensure that all $c$ iSAX buffers will be examined.
The for loop of line~\ref{conp:lay} ensures that $p$ will examine all parts of iSAX buffer $i$,
and the for  loop of line~\ref{conp:passts} guarantees that all iSAX summaries stored in each of these parts
will be inserted in $T$ (lines~\ref{conp:insertinleaf}-\ref{al3:output}).
Therefore, Lemma~\ref{iSAX buffers} (claim~2) implies that 
the constructed tree contains a distinct element for every data series stored in the $RawData$ array
and no more elements.
\end{proof}

\noindent
{\bf Query Answering Phase.}
To argue that the response of a 1-NN query, $QR$, is correct, 
we need the following properties from~\cite{shieh2008sax}. 

\begin{property}
\label{iSAX property}
The distance between the PAA of $QR$ and the iSAX summary
of a node $nd$ of the index lower bounds the real distance 
between $QR$ and any of the series in the leaves of $nd$'s subtree.
\end{property}

\begin{property}
\label{iSAX property2}
Consider two leaf nodes $nd$ and $nd'$ of the index tree.
Let $d$ be the minimum real distance between $QR$ and any series in $nd$. 
If $d$ is smaller than the distance between the PAA of $QR$ and the iSAX summary of $nd'$,
then all real distances between $QR$ and every series in $nd'$ are greater than $d$. 
\end{property}

\begin{lemma}
\label{lem:traverse exactly once}
$TraverseRootSubtree$ is invoked exactly once for each of the root subtrees of the index tree.
\end{lemma}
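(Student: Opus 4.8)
The plan is to mirror, almost verbatim, the Fetch\&Inc argument already used in the proof of Theorem~\ref{tree construction}, but now applied to the shared counter $N_b$ of Algorithm~\ref{eshqw} ($SearchWorker$). First I would pin down the only program point at which $TraverseRootSubtree$ is invoked with a root subtree as its argument, namely line~\ref{eshqw:traverse} of $SearchWorker$, where it is called on $index.rootnode[i]$. The recursive calls in lines~\ref{tr:insl}--\ref{tr:insr} of Algorithm~\ref{TraverseRootSubtree} always pass $node.leftChild$ or $node.rightChild$, which are internal (non-root) nodes, so they do not contribute to the count of top-level invocations on root subtrees.

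Next I would invoke the atomicity of the Fetch\&Inc object. Since $N_b$ is initialized to $0$ and every access on line~\ref{eshqw:ato} atomically returns the current value and increments it, for each $i$ with $0 \le i < 2^w$ there is exactly one search worker that obtains the value $i$, and no two workers ever obtain the same value. By the guard on line~\ref{eshqw:break}, a worker that obtains $i$ proceeds to line~\ref{eshqw:traverse} if and only if $i < 2^w$, and in that case it invokes $TraverseRootSubtree$ exactly once on $index.rootnode[i]$ before returning to the top of the loop on line~\ref{eshqw:loop}. This yields the at-most-once half of the claim: each index $i$, and hence each corresponding root subtree, receives at most one top-level invocation.

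Then I would argue completeness and termination to obtain the at-least-once half: the loop keeps threads accessing $N_b$ until its value reaches $2^w$, so every value strictly below $2^w$ is eventually handed out to some thread and therefore processed. Combined with the at-most-once property, this gives exactly-once for every slot. Since the root of the index has at most $2^w$ children (exactly as established in Theorem~\ref{tree construction}), the set of genuine root subtrees is precisely the non-empty entries among $index.rootnode[0], \ldots, index.rootnode[2^w-1]$, each of which is therefore traversed exactly once.

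The main obstacle I expect is bookkeeping around the recursion rather than the counting itself: I must make the notion of ``invoked for a root subtree'' precise, so that the recursive descent inside a subtree is not miscounted as a fresh invocation on a root subtree, and I should confirm that absent root children (slots $i$ with no subtree) are treated as vacuous, so that the statement refers only to the genuine root subtrees. Once that distinction is fixed, the argument reduces entirely to the same exactly-once property of the Fetch\&Inc counter used for $F_b$ and $F_c$ in the index-construction proofs, and requires no reasoning about thread interleavings beyond the atomicity of the shared counter.
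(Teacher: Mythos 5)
Your proposal is correct and follows essentially the same route as the paper's own proof: both rest entirely on the atomicity of the Fetch\&Inc object $N_b$ together with the loop guard on line~\ref{eshqw:break} to conclude that each index $i$ is handed to exactly one search worker, which then invokes $TraverseRootSubtree$ once on the corresponding root child. Your additional care in distinguishing top-level invocations from the recursive descent, and in treating empty root slots as vacuous, only makes explicit what the paper leaves implicit.
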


\begin{proof}
The use of the Fetch\&Increment object, $N_b$, and lines~\ref{eshqw:loop} and~\ref{eshqw:break} 
of Algorithm~\ref{eshqw} ensure that for every $i$, $0 \leq i \leq c$,
exactly one thread $p$ gets number $i$ when accessing $N_b$
(by executing line~\ref{eshqw:ato} of Algorithm~\ref{eshqw}). 
It follows that the function\\ $TraverseRootSubtree$ is invoked (line~\ref{eshqw:traverse}, Algorithm~\ref{eshqw})
exactly once for each of the root children of the index tree
(i.e., $p$ is the only thread that traverses the 
subtree numbered $i$ of the index tree). 
	%, so different search workers traverse different
%subtrees).
%no additional synchronization is needed between threads
%traversing different subtrees of $T$, and 
\end{proof}

Let $t$ be the point in time when the last search worker meets the barrier at line~\ref{eshqw:barrier}
of Algorithm~\ref{eshqw}. 

\begin{lemma}
\label{lem:priority queues}
Consider any $i$, $0 \leq i < N_q$. 
The following hold at $t$:
(1) queue[i] is a heap (i.e., it implements a priority queue);
(2) every element of $queue[i]$ is a distinct leaf of the index tree; thus, 
for every $j$, $0 \leq j < N_q$, $j \neq i$, the set of elements stored in $queue[i]$ 
and the set of elements stored in $queue[j]$ are disjoint.
\end{lemma}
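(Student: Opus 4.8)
The plan is to exploit the barrier at line~\ref{eshqw:barrier} of Algorithm~\ref{eshqw} to obtain a clean separation between the phase that \emph{inserts} leaves into the queues and the phase that \emph{removes} them, and then reason about each queue as if its contents were produced by a sequential history. First I would observe that insertions into any $queue[i]$ occur only at line~\ref{tr:insert} of $TraverseRootSubtree$ (Algorithm~\ref{TraverseRootSubtree}), which every search worker invokes only before reaching the barrier, whereas deletions occur only via $DeleteMin$ at line~\ref{process:dlmin} of $ProcessQueue$, which is called only after the barrier. Since $t$ is the instant the last worker arrives at the barrier, at $t$ every worker has completed all its $TraverseRootSubtree$ calls and none has yet executed any code past line~\ref{eshqw:barrier}. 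Hence the sequence of operations applied to each queue up to $t$ consists solely of its initialization (lines~\ref{eshq:scq}-\ref{eshq:ecq} of Algorithm~\ref{eshq}) followed by a finite number of insertions, with no intervening deletions.

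For claim~(1), I would argue that each queue is lock-protected: every insertion into $queue[i]$ is bracketed by acquiring and releasing $queue[i]$'s lock (lines~\ref{tr:lq} and~\ref{tr:ulq}). Consequently, all insertions into a fixed queue are mutually exclusive and can be linearized into a sequential order. Each queue is initialized as an empty heap, and a single sequential heap insertion preserves the heap invariant; a straightforward induction on the number of insertions then shows that $queue[i]$ is still a heap at $t$.

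For claim~(2), I would first note that the insertion at line~\ref{tr:insert} lies inside the branch guarded by the test that $node$ is a leaf (line~\ref{tr:sleaf}), so only leaves are ever inserted. To obtain distinctness and disjointness simultaneously, I would show that each leaf is the argument of at most one insertion across the entire traversal phase. By Lemma~\ref{lem:traverse exactly once}, each root subtree is traversed by exactly one worker through exactly one top-level call to $TraverseRootSubtree$; since the root subtrees partition the leaves of the index tree, each leaf belongs to precisely one such subtree. Within that subtree the recursion is a depth-first traversal that reaches every node through its unique parent and therefore visits each leaf at most once. Combining these facts, each leaf is inserted at most once and into at most one queue, which yields both that every element of $queue[i]$ is a distinct leaf and that the element sets of $queue[i]$ and $queue[j]$ are disjoint for $i \neq j$.

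The main obstacle is the concurrency argument rather than any calculation: I must justify treating the queue contents at $t$ as if produced by a well-defined sequential history. This rests on two observations I would make explicit---the per-queue lock serializes all insertions to a given queue (so the heap invariant is meaningfully defined and maintained), and the barrier guarantees that at $t$ no deletion has interleaved with the insertions, so the state at $t$ is exactly the post-insertion heap. The round-robin update of the target queue (line~\ref{tr:rq}) is irrelevant to correctness here; it affects only load balancing, not the at-most-once or heap properties.
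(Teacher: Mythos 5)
Your proposal is correct and follows essentially the same route as the paper's own proof: claim~(1) via the per-queue lock serializing the heap insertions (with no deletions before the barrier), and claim~(2) via Lemma~\ref{lem:traverse exactly once} together with the facts that the recursion visits each node at most once and that line~\ref{tr:insert} is guarded by the leaf test. Your additional remarks (the barrier cleanly separating the insertion and deletion phases, and the root subtrees partitioning the leaves) only make explicit what the paper leaves implicit.
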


\begin{proof}
By inspection of the code, it follows that an insertion of an element in $queue[i]$ 
can be performed only when line~\ref{tr:insert} of Algorithm~\ref{TraverseRootSubtree} is executed,
whereas no deletions are performed on $queue[i]$ by $t$.
Concurrent insertions on $queue[i]$ (executed by multiple search workers)
are serialized by acquiring and releasing the lock for $queue[i]$
in lines~\ref{tr:lq} and~\ref{tr:ulq} of Algorithm~\ref{TraverseRootSubtree}.
Thus, line~\ref{tr:insert} (Algorithm~\ref{TraverseRootSubtree}), which performs
an insertion of a leaf node in $queue[i]$, is executed in mutual exclusion.
Specifically, line~\ref{tr:insert} executes the sequential code for a heap insertion 
with parameter a tree node that has as its priority the distance calculated in line~\ref{insrq:mindist}.
These imply that $queue[i]$ is a heap, so claim~1 holds. 

By Lemma~\ref{lem:traverse exactly once}, $TraverseRootSubtree$
is invoked exactly once for each subtree of the index tree. 
$TraverseRootSubtree$ is a recursive algorithm that visits each tree node at most once. 
In particular,
line~\ref{tr:insert} (of Algorithm~\ref{TraverseRootSubtree}) is executed at most once for each node.
The condition of the \emph{else if} statement of line~\ref{tr:sleaf} 
ensures that line~\ref{tr:insert} is executed only for leaf nodes, so 
only leaf nodes are inserted in $queue[i]$. 
These imply that claim~2 holds.
\end{proof}

We say that an instance of $TraverseRootSubtree$ (Algorithm~\ref{TraverseRootSubtree}) 
{\em visits} a leaf node $nd$ if it executes lines~\ref{tr:lq}-\ref{tr:rq} %of Algorithm~\ref{++7}
with $node$ being equal to $nd$.

\begin{lemma}
\label{lem: pruning}
Let $BSF_t$ be the value of shared variable $BSF$ at time $t$. 
For every leaf node, $nd$, of the index tree 
that is not stored in the heaps of array $queue$ at $t$,
it holds that the real distance between the query and each of the data series 
stored in $nd$ is larger than $BSF_t$.
\end{lemma}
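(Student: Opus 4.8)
The plan is to exploit the fact that the shared variable $BSF$ is frozen throughout the tree-traversal phase, so that every pruning decision taken inside $TraverseRootSubtree$ is made against exactly the value $BSF_t$. First I would observe that $approxSearch$ (line~\ref{eshq:appro} of Algorithm~\ref{eshq}) sets $BSF$ \emph{before} any search worker is created, and that the only code which writes $BSF$ is $ProcessQueue$/$CalculateRealDistance$ (line~\ref{process:upbsf} and the body of Algorithm~\ref{realdist}), which a worker executes only \emph{after} it has passed the barrier of line~\ref{eshqw:barrier} of Algorithm~\ref{eshqw}. Since $t$ is the instant at which the last worker reaches that barrier, no write to $BSF$ has occurred by time $t$; hence $BSF$ retains its initial value throughout the traversal phase, and $BSF_t$ equals the value read at line~\ref{tr:ifnodedist} by \emph{every} invocation of $TraverseRootSubtree$.

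Next I would fix a leaf $nd$ that is absent from all heaps at $t$ and locate the node at which its descent was cut. By Lemma~\ref{lem:traverse exactly once}, exactly one instance of $TraverseRootSubtree$ is responsible for the root subtree containing $nd$, and this procedure recurses from that subtree's root toward $nd$. Insertion of $nd$ into a queue (line~\ref{tr:insert} of Algorithm~\ref{TraverseRootSubtree}) happens iff the recursion actually reaches $nd$ \emph{and} the test of line~\ref{tr:ifnodedist} fails for $nd$. Therefore $nd \notin queue$ splits into two cases: either (a) the recursion reaches $nd$ but $FindDist(QDS,nd) > BSF_t$, or (b) the recursion never reaches $nd$ because it broke at some proper ancestor $na$ of $nd$ on the root-to-$nd$ path, where $FindDist(QDS,na) > BSF_t$. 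In either case there is a node $n^\ast$ on the path from the subtree root to $nd$ (namely $nd$ itself in case~(a), or $na$ in case~(b)) such that $nd$ lies in $n^\ast$'s subtree and $FindDist(QDS,n^\ast) > BSF_t$.

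Finally I would invoke the lower-bounding guarantee. By Property~\ref{iSAX property}, $FindDist(QDS,n^\ast)$ — the distance between the PAA of the query and the iSAX summary of $n^\ast$ — lower bounds the real distance between the query and any series in $n^\ast$'s subtree, hence in particular any series stored in $nd$. Chaining this with the strict pruning inequality yields that every such real distance exceeds $FindDist(QDS,n^\ast) > BSF_t$, which is precisely the claim.

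The one step needing care — and which I regard as the main obstacle — is the freezing argument for $BSF$ in the first paragraph: it is what licenses identifying the threshold used at each \emph{asynchronous} pruning test with the single value $BSF_t$. The barrier is essential here, since without the guarantee that no worker has entered the queue-processing phase by time $t$, one could not rule out an intervening decrease of $BSF$ between a given pruning decision and the instant $t$. (Such a decrease would only strengthen the conclusion, but the clean equality of $BSF_t$ with the initial value makes the bookkeeping trivial.) The subsequent case analysis and the application of Property~\ref{iSAX property} are then routine.
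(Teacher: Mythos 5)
Your proposal is correct and follows essentially the same route as the paper's proof: freeze $BSF$ at its initial value up to the barrier time $t$, locate the node $n^\ast$ (the leaf itself or a proper ancestor) at which the test of line~\ref{tr:ifnodedist} cut off the descent, and apply Property~\ref{iSAX property} to lower-bound the real distances of all series under $n^\ast$ by $FindDist(QDS,n^\ast) > BSF_t$. Your treatment is somewhat more explicit than the paper's on why $BSF$ cannot change before $t$ (the only writers sit past the barrier) and on the two-case split, but these are elaborations of the same argument rather than a different one.
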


\begin{proof}
By inspection of the code (Algorithms~\ref{eshqw} and~\ref{TraverseRootSubtree}), 
it is easy to see that the value of $BSF$ does not change 
from the point that it is first set (on line~\ref{eshq:appro} of Algorithm~\ref{eshq}) until $t$.
Therefore, the value of $BSF$ is equal to $BSF_t$ 
during the execution of every instance of $TraverseRootSubtree$ . 

Consider any leaf node $nd$ that is not stored in the heaps of the $queue$ array at $t$. This can happen only 
if no instance of $TraverseRootSubtree$ visits this node. 
By Lemma~\ref{lem:traverse exactly once}, $TraverseRootsubtree$
is invoked exactly once for each subtree of the index tree. 
By inspection of the code, it follows that $nd$ belongs to the subtree
of a node $nd'$ (that might be $nd$ or one of its proper ancestors) 
on which the condition of the \emph{if} statement of line~\ref{tr:ifnodedist}
(Algorithm~\ref{TraverseRootSubtree}) is evaluated to \emph{false}, so that $TraverseRootSubtree$
is not called recursively on the nodes of $nd'$'s subtree (including $nd$),
and therefore all these nodes are not visited. 
Lines~\ref{insrq:mindist}-\ref{tr:ifnodedist} (Algorithm~\ref{TraverseRootSubtree}) ensure that 
the distance between the PAA of the query and the iSAX representation
of $nd'$ is greater than $BSF_t$. Thus, Property~\ref{iSAX property}
implies that the real distance between the query and each of the
data series stored in $nd$ is larger than $BSF_t$, as needed.
\end{proof}

The following observation is a simple consequence of the fact that 
the BSF variable is protected by a distinct lock, and that the value
of $BSF$ is updated only if the \emph{if} statement of line~\ref{process:2if} (Algorithm~\ref{ProcessQueue}) is evaluated to \emph{true}. 

\begin{observation}
\label{obs:strictly dec}
The sequence of values stored in shared variable BSF is strictly decreasing.
\end{observation}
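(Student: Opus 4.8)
The plan is to track every write to the shared variable $BSF$ and show that, apart from its initial assignment, each write strictly decreases its value. First I would enumerate the write sites. By inspection of all the pseudocode, $BSF$ is written in exactly two places: it is initialized once on line~\ref{eshq:appro} of Algorithm~\ref{eshq} to the approximate-search result, and it is subsequently modified only on line~\ref{process:upbsf} of Algorithm~\ref{ProcessQueue}. Thus the sequence of values taken by $BSF$ is the initial value followed by the values produced by the assignments on line~\ref{process:upbsf}, and it suffices to reason about the latter.

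Next I would argue that these assignments are serialized and mutually exclusive. Every execution of line~\ref{process:upbsf} is enclosed between the acquisition and release of $BSFLock$ (lines~\ref{process:loc} and~\ref{process:unloc}), so at most one search worker can be inside this critical section at any time. Consequently the executions of line~\ref{process:upbsf} are totally ordered, and we may speak unambiguously of the value of $BSF$ immediately before and immediately after each such execution.

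Then I would establish the strict-decrease property for a single update. Consider any execution of line~\ref{process:upbsf} that sets $BSF$ to some value $realDist$. The guard of the \emph{if} statement on line~\ref{process:2if}, evaluated inside the same critical section, guarantees that $realDist < BSF$ at the moment the guard is tested. Because the lock is held continuously from line~\ref{process:loc} through line~\ref{process:unloc}, no other thread can write $BSF$ between this test and the assignment; hence the value read by the guard coincides with the value overwritten on line~\ref{process:upbsf}. Therefore the newly stored value is strictly smaller than the previously stored one, and chaining this over the serialized sequence of updates yields that the entire sequence of values stored in $BSF$ is strictly decreasing.

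The main obstacle, and the reason the double-checked pattern is used, is precisely the gap between the unsynchronized outer test on line~\ref{process:if} and the synchronized assignment. The outer test may read a stale value of $BSF$, so it cannot by itself justify strict decrease; it merely decides whether a worker bothers to acquire the lock. The argument must therefore rest entirely on the re-check on line~\ref{process:2if} inside the critical section, and the only nontrivial point is verifying that mutual exclusion makes the value observed by that re-check identical to the value overwritten on line~\ref{process:upbsf}.
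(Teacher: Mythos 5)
Your proposal is correct and follows essentially the same reasoning as the paper, which justifies the observation as a consequence of the $BSFLock$ protection together with the guarded re-check on line~\ref{process:2if} before the assignment on line~\ref{process:upbsf}. Your elaboration of the double-checked-locking subtlety (that the outer test on line~\ref{process:if} may be stale and the argument must rest on the re-check inside the critical section) is a faithful, more detailed spelling-out of the same argument.
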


\begin{theorem}
\label{thm:query correctness}
The response of $QR$ is correct. 
\end{theorem}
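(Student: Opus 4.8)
The plan is to show that the value of $BSF$ returned on line~\ref{eshq:return} of Algorithm~\ref{eshq} equals $d^\ast$, the smallest real distance between $QR$ and any data series of the collection. Writing $BSF_f$ for the final value, I would establish the two inequalities $BSF_f \ge d^\ast$ and $BSF_f \le d^\ast$ separately.

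First I would dispose of the easy direction. Every value ever written to $BSF$ is the real distance between $QR$ and an actual series of the collection: the initial value comes from $approxSearch$ (line~\ref{eshq:appro}), and every later update is performed in $ProcessQueue$ (lines~\ref{process:2if}--\ref{process:upbsf}) using the output of $CalculateRealDistance$, which is the real distance to some series (Algorithm~\ref{realdist}). Any such distance is at least the minimum $d^\ast$, so every value of $BSF$, and in particular $BSF_f$, satisfies $BSF \ge d^\ast$; applied at time $t$ this also gives $BSF_t \ge d^\ast$. By Observation~\ref{obs:strictly dec} the sequence of $BSF$ values is strictly decreasing, so it suffices to exhibit a \emph{single} point in the execution at which $BSF \le d^\ast$; monotonicity then forces $BSF_f \le d^\ast$, and combined with the above we obtain $BSF_f = d^\ast$.

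The heart of the argument is therefore locating such a point. Fix a series $S^\ast$ attaining $d^\ast$ and let $nd^\ast$ be the leaf containing it. By Property~\ref{iSAX property}, the lower-bound distance stored for $nd^\ast$ (the distance between the PAA of $QR$ and the iSAX summary of $nd^\ast$, computed on line~\ref{insrq:mindist}) is at most $d^\ast$. I would first argue that $nd^\ast$ is inserted into some $queue[i]$ by time $t$: otherwise Lemma~\ref{lem: pruning} would give $d^\ast > BSF_t$, contradicting $BSF_t \ge d^\ast$. Then I examine how $nd^\ast$ is treated during queue processing (Algorithm~\ref{ProcessQueue}), using the heap property of Lemma~\ref{lem:priority queues}. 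There are three cases. (i) If the queue holding $nd^\ast$ is abandoned (the \emph{else} branch on line~\ref{process:else} fires) while $nd^\ast$ is still present, then the node returned by $DeleteMin$ had stored distance $\ge$ the current $BSF$; since that node is the minimum of the remaining heap, $nd^\ast.dist$ is at least as large, so $BSF \le nd^\ast.dist \le d^\ast$ at that moment. (ii) If $nd^\ast$ is itself returned by $DeleteMin$ with $nd^\ast.dist \ge BSF$, the same bound $nd^\ast.dist \le d^\ast$ forces $BSF \le d^\ast$. (iii) Otherwise $CalculateRealDistance$ is invoked on $nd^\ast$; since the per-series lower bound for $S^\ast$ (line~\ref{rd:ld}) is also at most $d^\ast$, either $BSF$ is already $\le d^\ast$, or this lower bound is strictly below $BSF$, the real distance $d^\ast$ is computed, and $BSF$ is updated to $d^\ast$. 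In every case $BSF \le d^\ast$ at some point, which completes the proof.

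I expect the main obstacle to be the bookkeeping around concurrency and the several independent pruning tests. The subtle part is verifying that in each branch where $nd^\ast$ or $S^\ast$ is discarded without its real distance being computed, namely queue abandonment, the \emph{else} branch after a $DeleteMin$, and the inner lower-bound filter of Algorithm~\ref{realdist}, the discard is \emph{justified} by $BSF$ having already dropped to at most $d^\ast$, rather than the true nearest neighbor being silently lost. This requires combining the heap invariant of Lemma~\ref{lem:priority queues} with the fact that all intermediate $BSF$ values remain $\ge d^\ast$ and are read and written atomically under $BSFLock$, so that no comparison against a stale $BSF$ can wrongly prune $S^\ast$ while $BSF > d^\ast$.
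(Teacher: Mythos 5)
Your proposal is correct, and it rests on the same foundations as the paper's proof --- Lemma~\ref{lem:traverse exactly once}, Lemma~\ref{lem:priority queues}, Lemma~\ref{lem: pruning}, Observation~\ref{obs:strictly dec}, and the iSAX lower-bounding properties --- but it organizes the final argument differently. The paper argues \emph{globally} that every discard is justified: leaves never inserted into a queue are handled by Lemma~\ref{lem: pruning}, and leaves abandoned when a worker gives up a queue are handled by applying Property~\ref{iSAX property2} to the last node returned by DeleteMin before $Q.finished$ is set; it then notes that every queue is processed by at least one worker, leaving implicit both the direction $BSF_f \ge d^\ast$ and the treatment of the surviving nodes. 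You instead \emph{track the witness}: you fix the optimal leaf $nd^\ast$, prove $BSF \ge d^\ast$ as an invariant (every value ever written to $BSF$ is a real distance to some series), and then do an exhaustive case analysis on the fate of $nd^\ast$, showing that each branch either computes $d^\ast$ or already certifies $BSF \le d^\ast$; monotonicity then closes the argument. Your queue-abandonment cases (i) and (ii) derive from Property~\ref{iSAX property} plus the heap invariant what the paper obtains from Property~\ref{iSAX property2}. Your version buys two things the paper's proof glosses over: the explicit decomposition into the two inequalities $BSF_f \ge d^\ast$ and $BSF_f \le d^\ast$, and an explicit justification of the per-series lower-bound filter on line~\ref{rd:ld} of Algorithm~\ref{realdist}, which could in principle skip $S^\ast$ itself and which the paper's proof never mentions. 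The paper's version is shorter because justifying all discards at once avoids the case analysis on a single node. Both are valid; neither has a gap that the other repairs incorrectly.
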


\begin{proof}
Fix any $i$, $0 \leq i < N_q$ and let $Q = queue[i]$. 
Deletions from $Q$ may occur only by executing line~\ref{process:dlmin} of Algorithm~\ref{ProcessQueue}.
Concurrent deletions from $Q$ are serialized by acquiring and releasing the lock for $Q$
(lines~\ref{process:dloc}-\ref{process:duloc}, Algorithm~\ref{ProcessQueue}). Note that this lock is distinct for each queue.
This and Lemma~\ref{lem:priority queues} imply that $Q$ respects the semantics of a priority queue. 
Therefore, when a node $nd$ with its $dist$ field being equal to $d$ is deleted from $Q$, 
all other nodes of $Q$  have higher values than $d$ in their $dist$ field.

Let $t_f$ be the first point in time at which $Q.finished$ is set to \emph{true}. 
Then, lines~\ref{process:1if} and~\ref{process:else} imply that the distance between the PAA of the query $QR$
and the iSAX summary of the last node $nd$ deleted from $Q$ is greater than or equal to $BSF$.
Property~\ref{iSAX property2} then implies that for every leaf node $nd$ contained in the queue at $t_f$, 
the minimum real distance between the query $QR$ and all the data series stored in $nd$ is
larger than the value of $BSF$ at $t_f$ (let this be $BSF_f$). This and Observation~\ref{obs:strictly dec} imply that 
none of the data series of $nd$ may result in a real distance to $QR$ smaller than $BSF_f$ (or future values of $BSF$),
and therefore none of them needs to be further examined. Note that as soon as the $finished$ bit
of $Q$ changes to \emph{true}, any future update of this field does not change its value (i.e.,
it simply re-writes \emph{true} to it); this is why writes into this field are not protected
by a lock.

Lemma~\ref{lem: pruning} implies that by processing just the leaf nodes in the heaps
of the $queue$ array (and not all leaf nodes of the index tree), 
the correctness of $QR$'s response is not jeopardized. Every such heap
is processed by at least one search worker. This is ensured by the fact that a search worker stops 
processing heaps of the $queue$ array only if it discovers that the $finished$ bits
of all of them have been set to \emph{true} (lines~\ref{eshqw:sgetqueue}-\ref{eshqw:egetqueue}, Algorithm~\ref{eshqw}).
\end{proof}

\remove{
\commentnote{R2D6:As the paper is mainly for optimizingthe parallel execution. It will be better to point out how many operations need to be locked for concurrency control. e.g., updating BSF, "stores the result in the appropriate iSAX buffer of index". It is be clear for system guys to understand the cost.
	It is also better to give some complexity analysis or cost modeling.  
	e.g., when creating time, the cost is reduce to T/w where w is the number of workers, and T is the time cost when creating the time series using one thread. But some operations may be T/w + f(w) where
	f(w) is the cost of contention.
}

{\bf ??? maybe we need a new section/subsection for this? ???}

for complexity of query answering, see end of page 4 here:
\url{http://helios.mi.parisdescartes.fr/~themisp/publications/sigmod20-progressive.pdf}

see also sec5 here:
\url{http://helios.mi.parisdescartes.fr/~themisp/publications/vldbj16-ads.pdf}

}

\section{Experimental Evaluation}
\label{sec:experiments}

We use synthetic 
and real datasets in order to compare the performance of MESSI with that of competitors from the literature and baselines we developed. 

We demonstrate that, under the same settings, MESSI is able to construct the index up 
to 4.2x faster, and answer similarity search queries up to 11.2x faster than the competitors. 
Overall, MESSI exhibits robust performance across datasets and settings, and 
enables for the first time the exploration of very large data series collections 
at interactive speeds, and leads to complex analytics that execute more than 1 order of magnitude faster than before.

\subsection{Setup}
\label{setupsec}
\noindent\textbf{[Environment]} 
We used a server with two Intel Xeon E5-2650 v4 2.2Ghz CPUs and 256GB RAM; each one of the two CPUs comprises 12 cores/24 hyper-threads.
All algorithms were implemented in C, and compiled using GCC v6.2.0 on Ubuntu Linux v16.04.

\noindent\textbf{[Algorithms]} 
We compared MESSI to the following algorithms: \\
(i) ParIS+~\cite{parisplus}, 
the state-of-the-art modern hardware data series index.\\
(ii) ParIS+TS, our extension of ParIS+, where we implemented in a parallel fashion the traditional tree-based 
exact search algorithm~\cite{shieh2008sax}.
In brief, this algorithm traverses the tree, and concurrently (1) inserts 
in the priority queue the nodes (inner nodes or leaves) that cannot be pruned based on the lower bound distance,
and (2) pops from the queues nodes for which it calculates the real distances to the candidate series~\cite{shieh2008sax}. 
In contrast, MESSI (a) first makes a \emph{complete pass} over the index using lower bound distance 
computations and then proceeds with the real distance computations; 
(b) it only considers the \emph{leaves} of the index for insertion in the priority queue(s); and (c) performs a \emph{second} filtering step 
using the lower bound distances when popping elements from the priority queue (and before computing 
the real distances). The performance results we present later justify the choices we have made in MESSI, 
and demonstrate that a straight-forward implementation of tree-based exact search leads to sub-optimal 
performance. \\
(iii) UCR Suite-P, our parallel implementation of 
the state-of-the-art optimized serial scan technique, UCR Suite~\cite{rakthanmanon2012searching}, which implements all the known optimizations for exact data series similarity search.
In UCR Suite-P, every thread is assigned a part of the in-memory data series array, and all threads  
concurrently and independently process their own parts, performing the real distance calculations in SIMD, 
and only synchronize at the end to produce the final result.
(We do not consider the non-parallel UCR Suite version in our experiments, since it is almost 300x slower.)

In all cases, the algorithms operated exclusively in main memory (the datasets were already loaded in memory, as well).
The code for all algorithms used in this paper is available online~\cite{sourcescode}.
%The code for all algorithms is in~\cite{sourcescode}.

\noindent\textbf{[Datasets]} 
In order to evaluate the performance of the proposed approach, 
we use several synthetic datasets for a fine grained analysis, 
and two real datasets from diverse domains.
Unless otherwise noted, the series have a size of 256 points, 
which is a standard length used in the literature, 
and allows us to compare our results to previous work.
We used synthetic datasets of sizes 50GB-200GB (with a default size of 100GB).
For the synthetic datasets, we used 
a random walk data series generator that works as follows: 
a random number is first drawn from a Gaussian distribution N(0,1), 
and then at each time point a new number is drawn from this distribution 
and added to the value of the last number. 
This kind of data generation has been extensively used in the past 
(and has been shown to model real-world financial data)~\cite{yi2000fast,shieh2008sax,wang2013data,isax2plus,zoumpatianos2016ads}.
We used the same process to generate 100 query series. 

%100GB(100 million) synthetic data set as default during index creation. In query answering, we use 25GB(25 million) synthetic dataset during vary core test and 50GB(50 million) - 250GB(250 million) dataset on scalable test. 

For our first real dataset, \emph{Seismic}, 
we used the IRIS Seismic Data Access repository~\cite{iris} to gather 100M series representing seismic waves 
from various locations,  
% using a sliding window with a resolution of 1 sample per second, sliding every 4 seconds. 
%The complete dataset size was 110 GB. 
for a total size of 100GB.
%We used 100 series outside of the dataset as the query workload (produced using our synthetic series generator).
The second real dataset, \emph{SALD}, includes neuroscience MRI data series~\cite{url:SALD}, 
for a total of 200M series of size 128, of size 100 GB.
%We obtained an additional 100 data series from the raw dataset using the same technique to be used as a query workload.
%We once again used 100 synthetic data series as the query workload.
In both cases, we used as queries 100 series out of the datasets (chosen using our synthetic series generator).

%\noindent\textbf{[Algorithms]} 
%We experiment with our MESSI in-memory algorithms, and compare those to the state-of-the-art parallel data series index, ParIS+~\cite{parisplus}.
%We also compare to UCR Suite-P, our parallel implementation of UCR Suite~\cite{rakthanmanon2012searching}, the state of the art, 
%optimized serial scan technique,
%which implements all the known optimizations 
%for exact 
%data series 
%similarity search.
%%All algorithms are available online~\cite{paradssources}.
%In all cases, the algorithms operate exclusively in main memory.

We repeated all experiments 10 times and we report the average values. 
We omit the error bars, since all runs gave results that were very similar (less than 3\% difference).
The queries were always run in a sequential fashion, one after the other, 
in order to simulate an exploratory analysis scenario, 
where users formulate new queries after having seen the results of the previous one.

\begin{figure*}[tb]
	\begin{minipage}[b]{0.33\textwidth}
		\includegraphics[page=1,width=0.8\columnwidth]{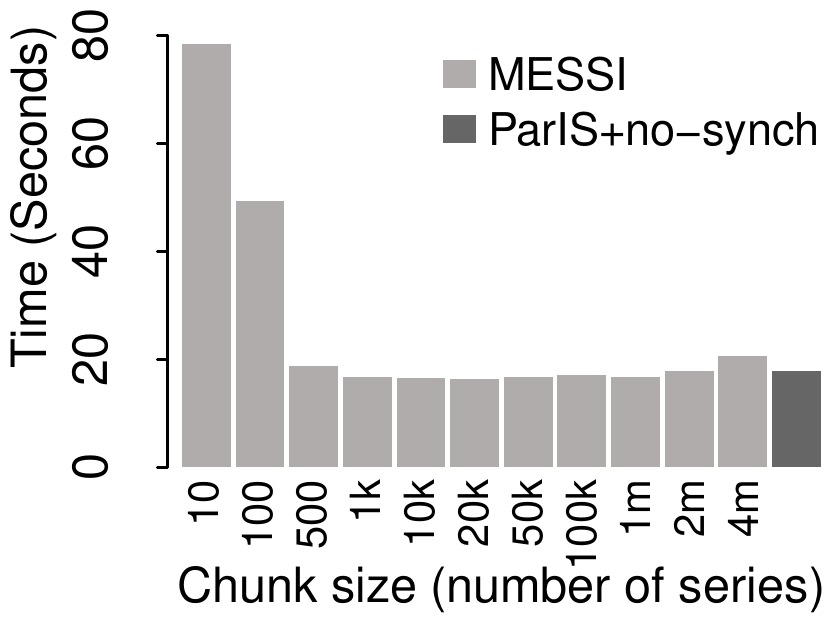}
		\caption{Index creation, vs. chunk size}		\label{fig:chunksize}
	\end{minipage}
	\begin{minipage}[b]{0.33\textwidth}
		\includegraphics[page=1,width=0.8\columnwidth]{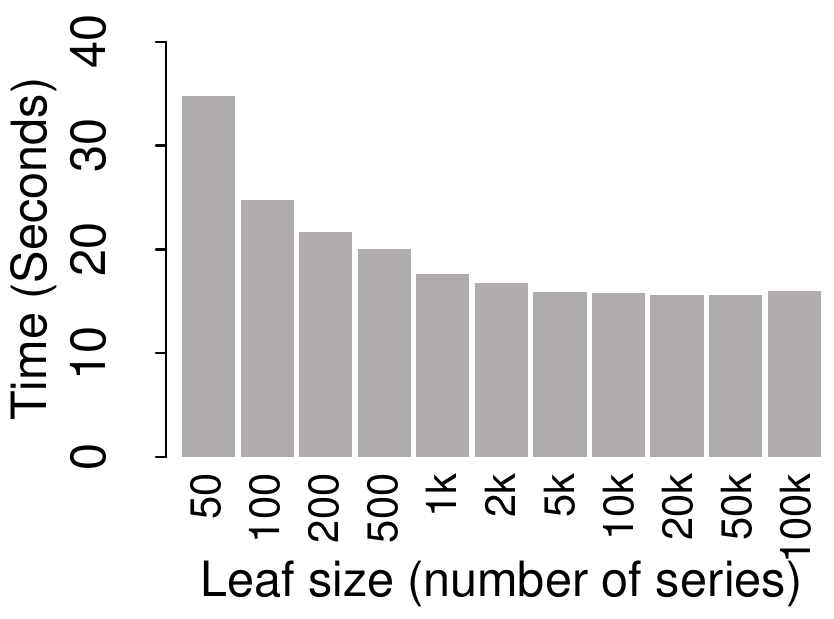}
		\caption{Index creation, vs. leaf size}
		\label{fig:indleaf}
	\end{minipage}
	\begin{minipage}[b]{0.33\textwidth}	
		\includegraphics[page=1,width=0.8\columnwidth]{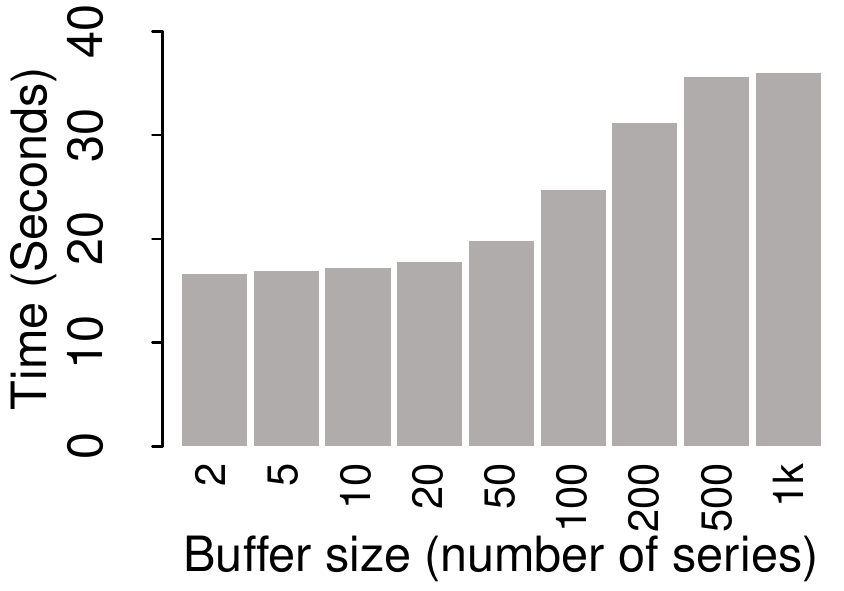}
		\caption{Index creation, vs. initial iSAX buffer size}
		\label{fig:initialisaxbuffersize}
	\end{minipage}
\end{figure*}

	\begin{figure}[tb]
	\centering
	\includegraphics[page=1,width=0.8\columnwidth]{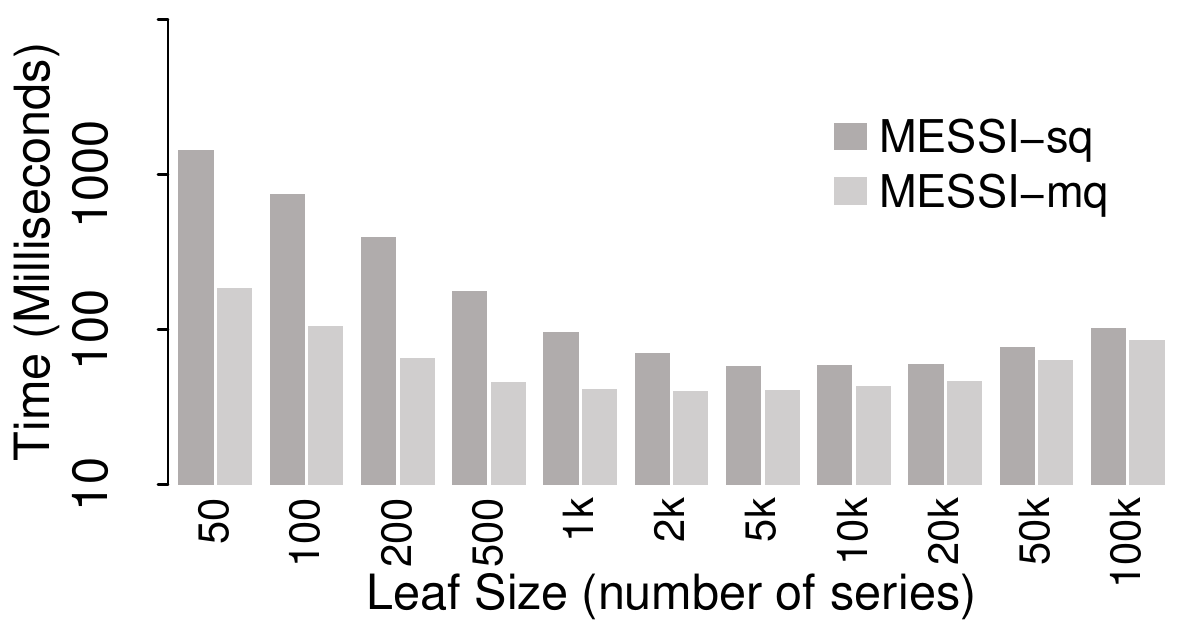}
	\caption{Query answering, vs. leaf size}
	\label{fig:dleaf}
\end{figure}

%\noindent\textbf{[Parameter Tuning]} 
\subsection{Parameter Tuning Evaluation}
\label{parameter}
%In this section, we justify the parameterization choices mentioned in Section~\ref{setupsec}. 
In all our experiments, we use 24 index workers and 48 search workers.
We have chosen the chunk size to be 20MB (corresponding to 20K series of length 256 points).
Each part of any iSAX buffer, initially holds a small constant number of data series, 
but its size changes dynamically depending on how many data
series it needs to store. 
The capacity of each leaf of the index tree
is 2000 data series (2MB). 
For query answering, MESSI-mq utilizes 24 priority queues (whereas MESSI-sq
utilizes just one priority queue). 
In either case, each priority queue is implemented 
using an array whose size changes dynamically based on how many elements must be stored in it.
Below we present the experiments that justify the choices for these parameters.

	\begin{figure}[tb]
	\centering
	\hspace*{-0.4cm}
	\includegraphics[page=1,width=\columnwidth]{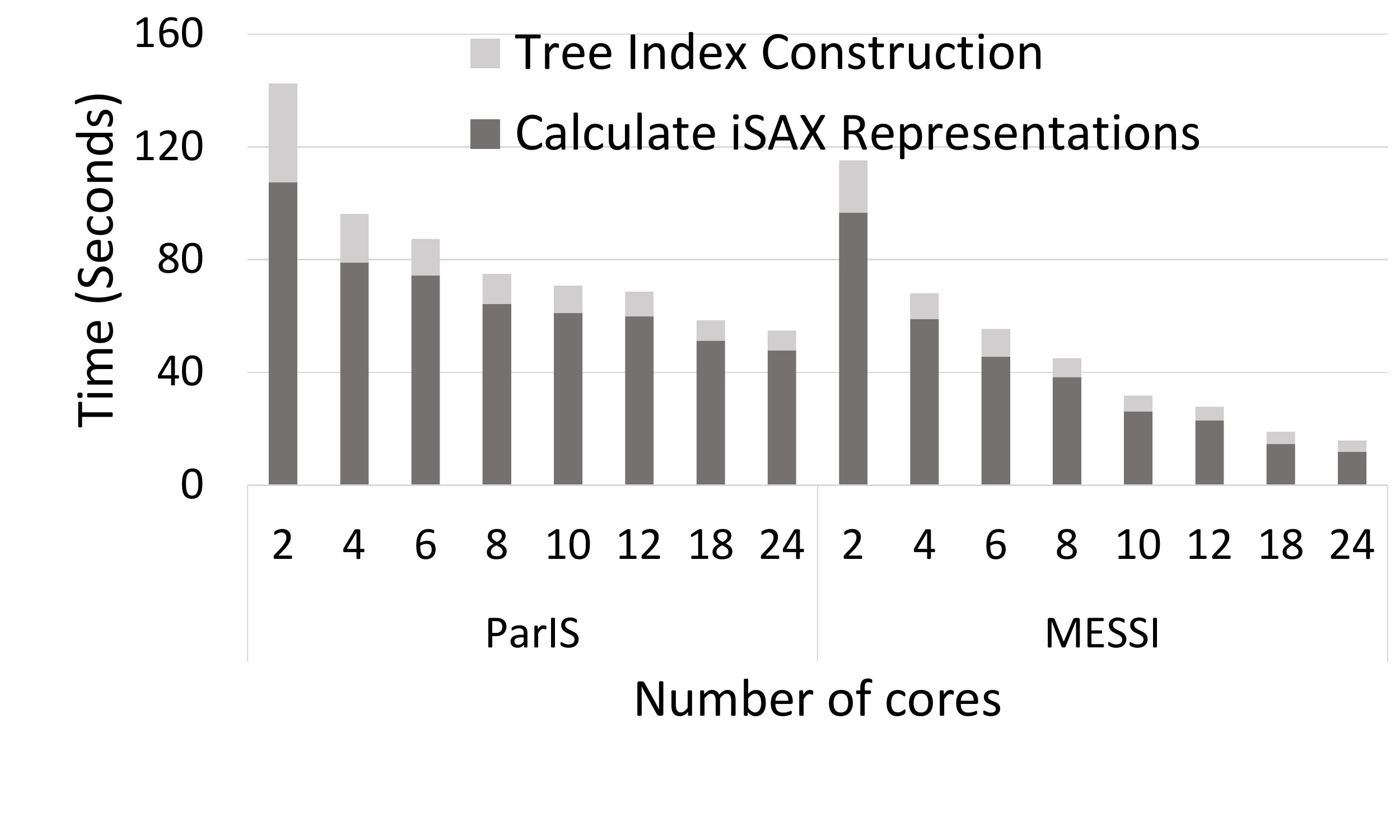}
	\caption{Index creation, varying number of cores}
	\label{fig:pRecBuf}
\end{figure}

\begin{figure*}[tb]
			\centering
	\begin{minipage}[b]{0.31\textwidth}
		\centering
		\includegraphics[page=1,width=\columnwidth]{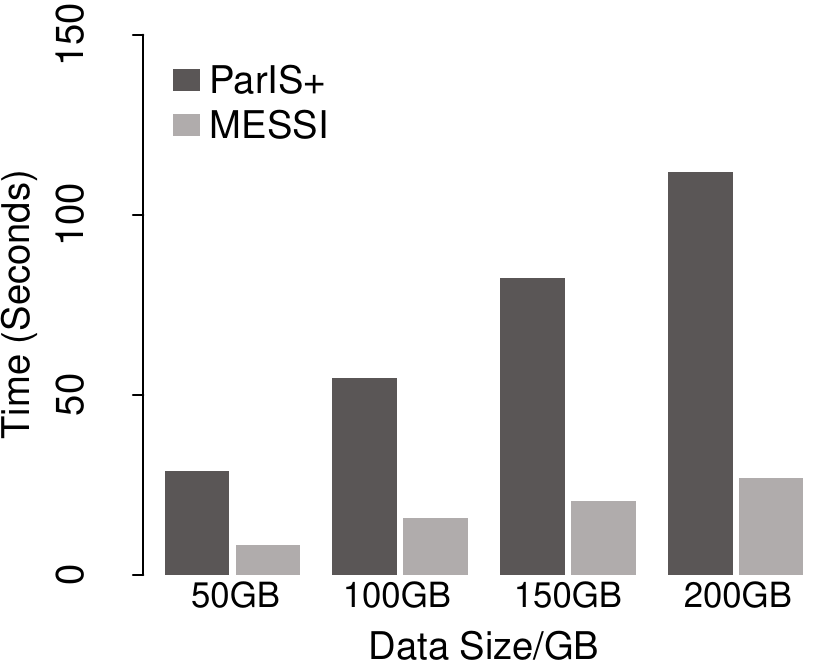}
		\caption{Index creation, vs. data size}
		\label{fig:incvarysize}
	\end{minipage}
	\begin{minipage}[b]{0.34\textwidth}
		\centering
		\includegraphics[page=1,width=\columnwidth]{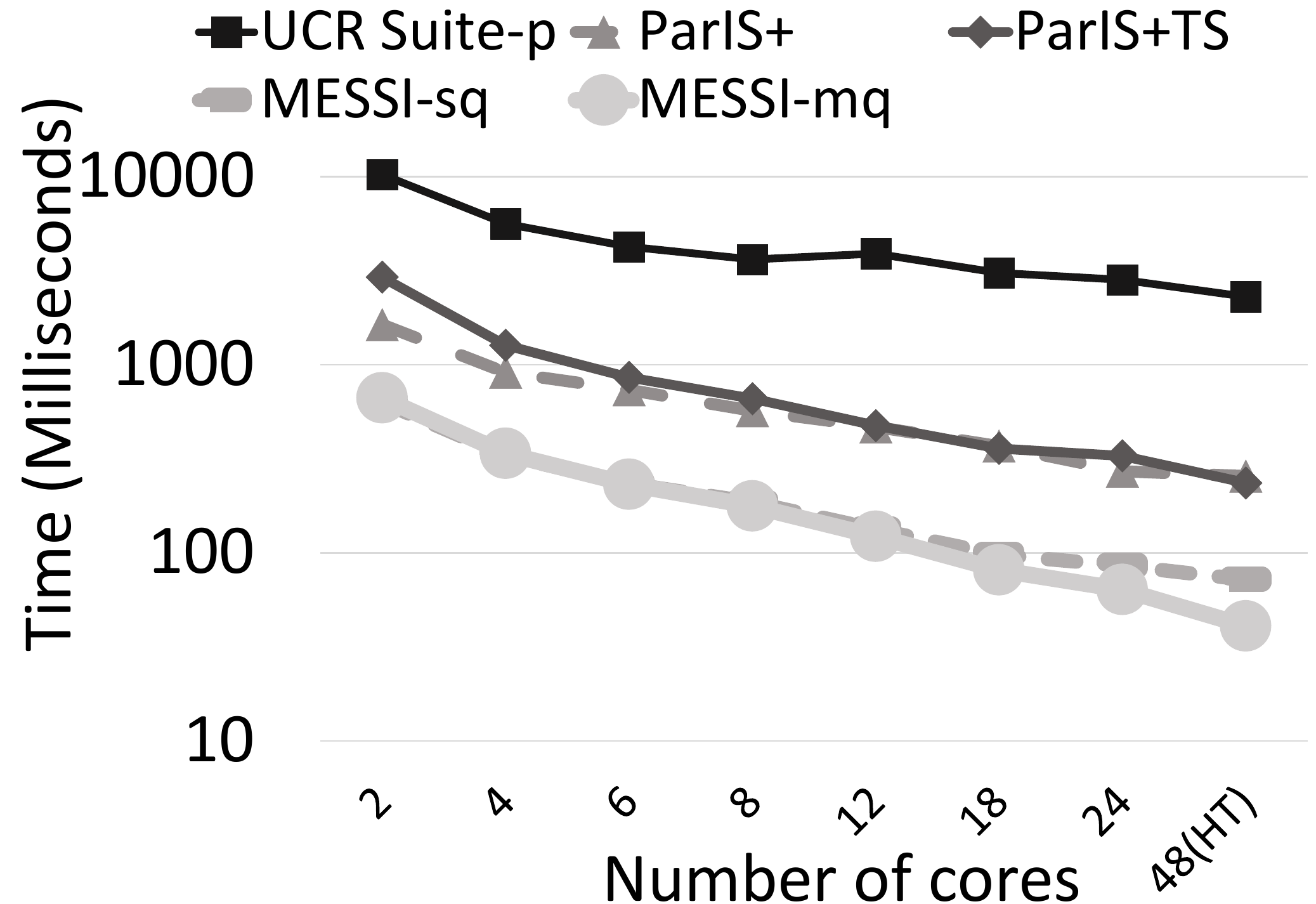}
		\caption{Query answering, vs. number of cores}
		\label{fig:varycore}
	\end{minipage}
	\begin{minipage}[b]{0.34\textwidth}
		\centering
		\includegraphics[page=1,width=0.98\columnwidth]{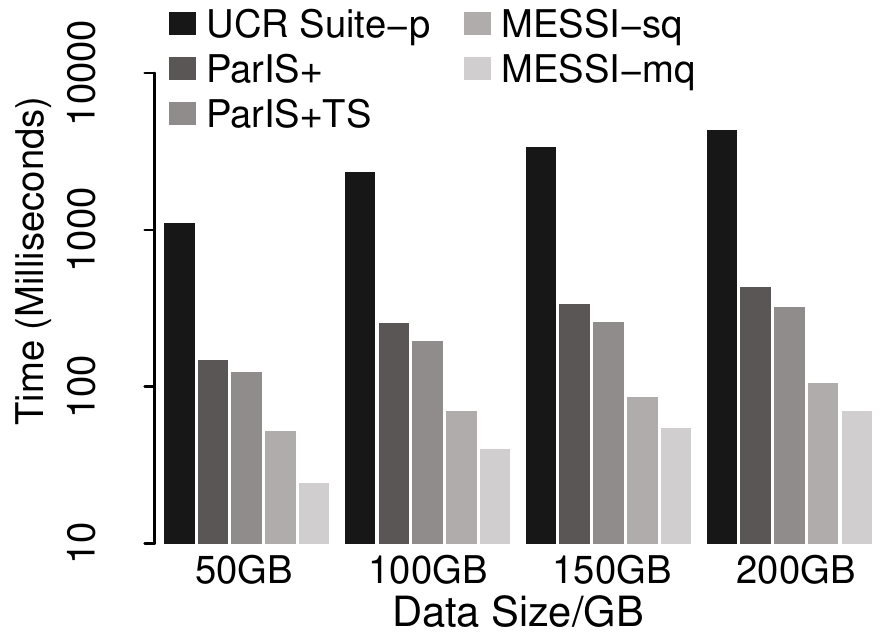}
		\caption{Query answering, vs. data size}
		\label{fig:scalquery}
	\end{minipage}

\end{figure*}

%We first study the impact in performance of the chunk size of the raw data array. 
Figure~\ref{fig:chunksize} illustrates the time it takes MESSI to build the tree index for different chunk sizes on a random dataset of 100GB. 
%As we see in the figure, 
The required time to build the index decreases when the chunk size is small and does not have any big influence in performance after the value of 1K (data series). 
Chunk sizes smaller than 1K result in high contention when accessing the fetch\&increment object used to assign chunks to index workers.
In our experiments, we have chosen a size of 20K, as this gives slightly better performance. % than setting it to 1K.      

Figures~\ref{fig:indleaf} and~\ref{fig:dleaf} show the impact of varying the index tree leaf size on the time cost of index creation and query answering, respectively. 
As we see in Figure~\ref{fig:indleaf}, the larger the leaf size is, the faster index creation becomes.
However, once the leaf size becomes 5K or more, this time improvement
is insignificant. On the other hand, Figure~\ref{fig:dleaf} shows that the query answering time takes its minimum value when the leaf size is set to 2K (data series). 
So, we have chosen this value for our experiments.  

Figure~\ref{fig:dleaf} indicates that the influence of varying the leaf size is significant for query answering. Note that when the leaf size is small, there are more leaf nodes in the index tree and therefore, it is highly probable that more nodes will be inserted in the queues, and vice versa. 
As the leaf size increases, the number of real distance calculations performed to process each one of the leaves in the queue is larger. 
This causes
load imbalance among the different search workers that process the priority queues.    
For these reasons, we see that at the beginning the time goes down as the leaf size increases, it reaches its minimum value for leaf size 2K series, and then it goes up again as the leaf size further increases. 

Figure~\ref{fig:initialisaxbuffersize} shows the influence of the initial iSAX buffer size during index creation. This initialization cost is not negligible given that we allocate $2^w$ iSAX buffers, each consisting of $24$ parts (recall that 24 is the number of index workers in the system).  
As expected, smaller initial sizes for the buffers result in better performance. 
We have chosen the initial size of each part of the iSAX buffers to be a small constant number of data series. % (i.e., a few KB).
(We also considered a design that collects statistics and allocates the iSAX buffers right from the beginning, but was slower.)

We finally justify the choice of using more than one priority queues
for query answering. 
%We compare the performance of MESSI-sq to MESSI-mq
%and we show that  MESSI-mq with 24 queues achieves the best performance. 
%
As Figure~\ref{fig:varycore} shows, MESSI-mq and MESSI-sq have similar performance when the number of threads is smaller than 24. 
However, as we go from 24 to 48 cores, the synchronization cost for accessing the single priority queue in MESSI-sq has negative impact in performance.
Figure~\ref{fig:diffq}  presents the breakdown of the query answering time for these two algorithms. 
The figure shows that in MESSI-mq, the time needed to insert and remove nodes from the list is significantly reduced. 
As expected, the time needed for the real distance calculations and for the tree traversal are about the same in both algorithms. 
This has the effect that the time needed for the distance calculations becomes the dominant factor. 
The figure also illustrates the percentage of time that goes on each of these tasks.    

Finally, Figure~\ref{fig:dqueue} shows the impact of the number of priority queues on query answering performance.
As the number of priority queues increases, the time goes down, and is minimized for 24 queues. 
So, we have chosen this value for our experiments.  
We also note that each one of these queues handles almost the same number of elements. 
Our experiments showed that the standard deviation of the number of elements in the queues was always less than 0.8\% of the mean number of elements over all the queues used.

\begin{figure}[tb]
	\centering
	\includegraphics[page=1,width=0.9\columnwidth]{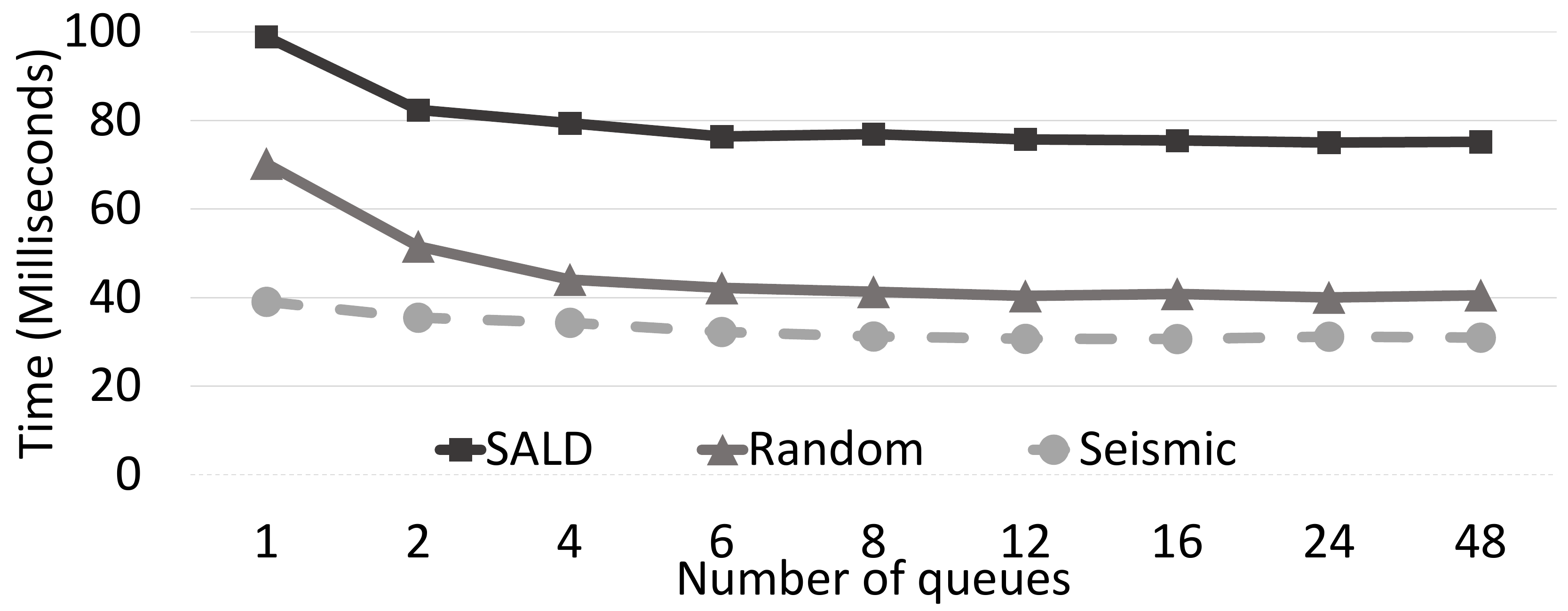}
	\caption{Query answering, vs. number of queues}
	\label{fig:dqueue}
\end{figure}

\begin{figure}[tb]
	\centering
	\includegraphics[page=1,width=\columnwidth]{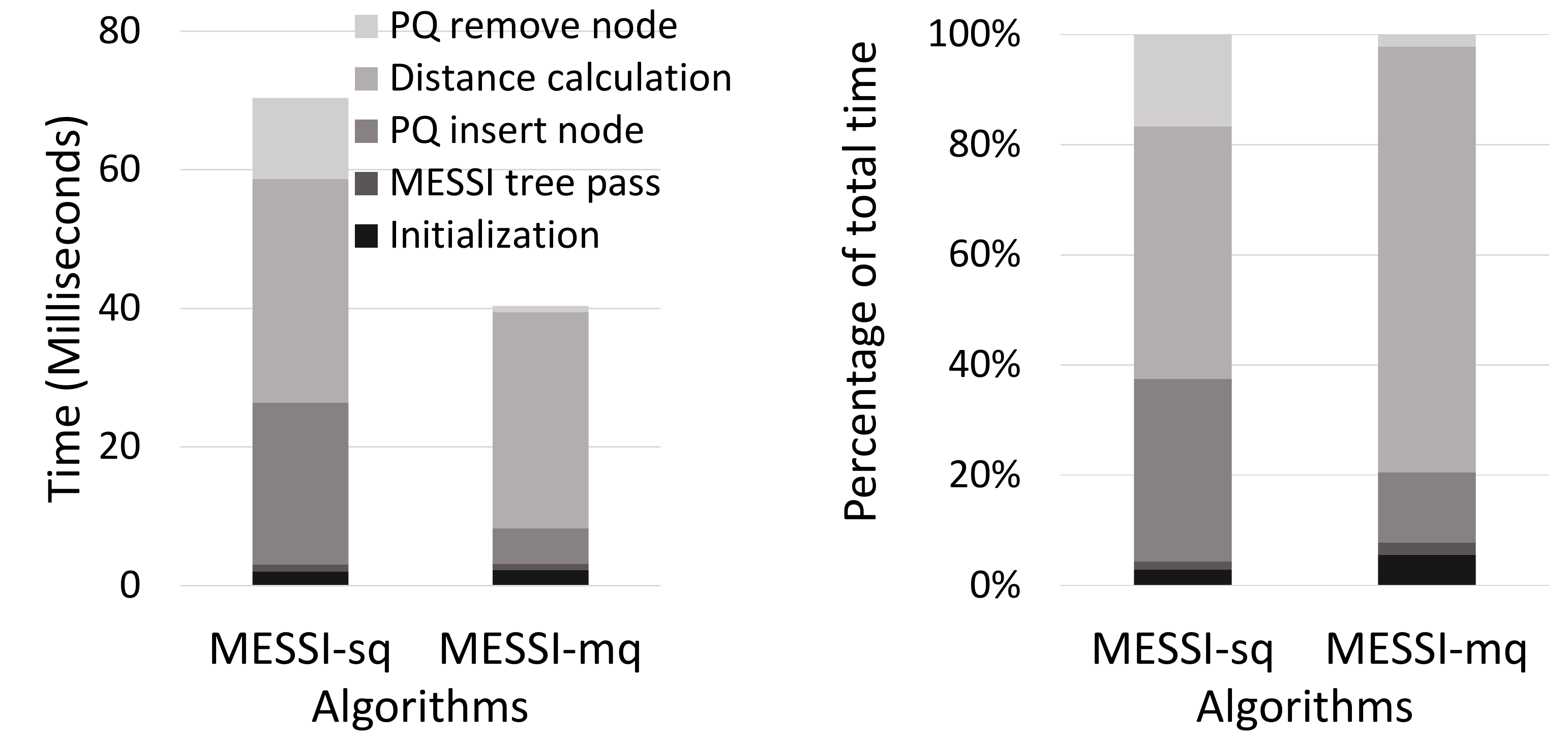}
	\caption{Query answering with different queue type}
	\label{fig:diffq}
\end{figure}

\begin{figure}[tb]
	\begin{minipage}[tb]{0.37\columnwidth}
		\includegraphics[page=1,width=0.91\columnwidth]{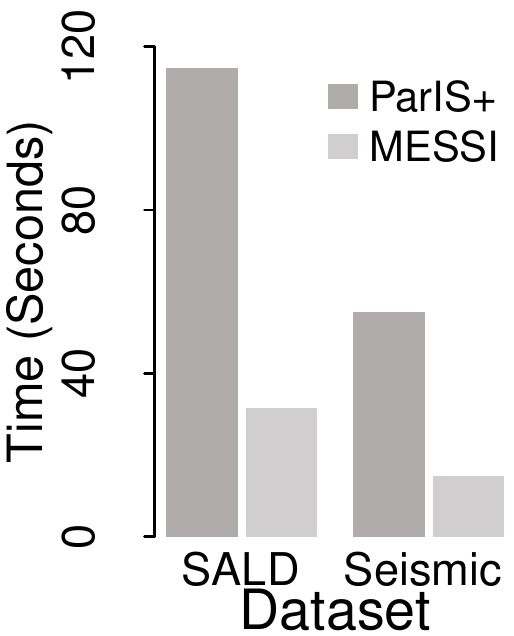}
		\caption{Index creation for real datasets}
		\label{fig:realinc}
	\end{minipage}
	\hspace*{0.1cm}
	\begin{minipage}[tb]{0.62\columnwidth}
		\includegraphics[page=1,width=\columnwidth]{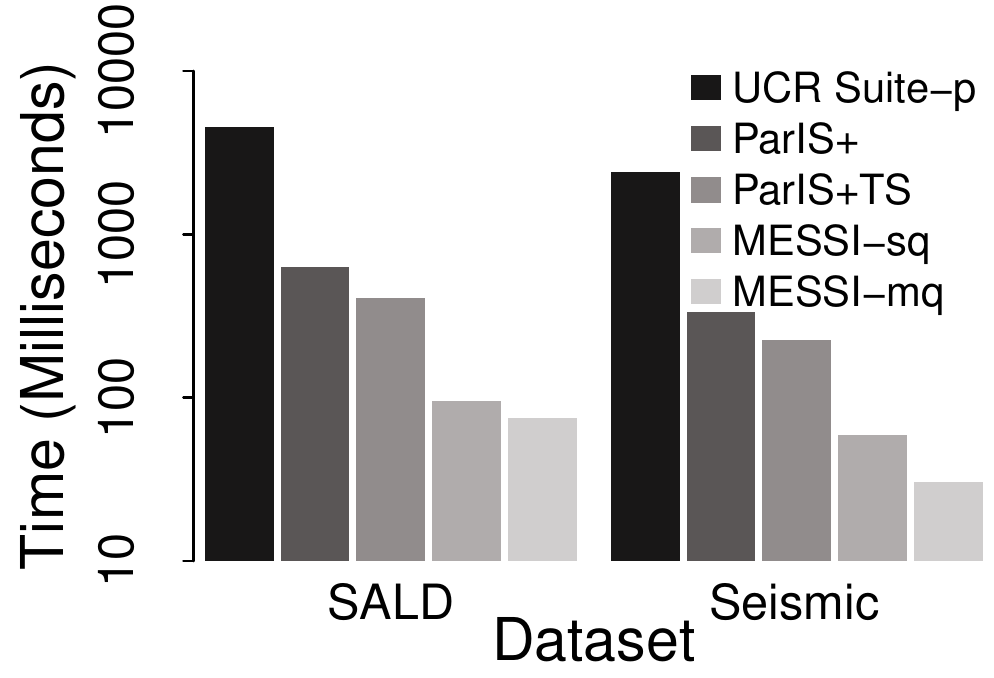}
		\caption{Query answering for real datasets}
		\label{fig:realqa}
	\end{minipage}
\end{figure}

\subsection{Comparison to Competitors}
\noindent{\bf [Index Creation]}
Figure~\ref{fig:pRecBuf} compares the index creation time of MESSI with  that 
of ParIS+ as the number of cores increases for a dataset of 100GB. 
%As illustrated in the figure, 
The time MESSI needs for index creation 
is significantly smaller than that of ParIS+. 
Specifically, MESSI is 3.5x faster than ParIS+. 
The main reasons for this are on the one hand 
that MESSI exhibits lower contention cost when accessing the iSAX buffers
in comparison to the corresponding cost paid by ParIS+ to fill in the Receiving Buffers,
and on the other hand, that MESSI achieves better load balancing when performing the computation
of the iSAX summaries from the raw data series.
Note that due to synchronization cost, 
the performance improvement that both algorithms exhibit decreases 
as the number of cores increases; this trend is more prominent in ParIS+, while MESSI manages to exploit to a larger degree the available hardware.

In Figure~\ref{fig:incvarysize}, we depict the index creation time as the dataset size 
grows from 50GB to 200GB. 
We observe that MESSI performs up to 4.2x faster than ParIS+ (for the 200GB dataset), 
with the improvement becoming larger with the dataset size.

% The simple serial scan approach, UCR Suite-P, performs much worse.
%\begin{figure}[tb]
%	\subfigure[index creation
%	\label{fig:indexresult}]{
%		\includegraphics[height=4cm]{picture/damonindexresult.pdf}
%	}
%	\subfigure[query answering (100GB)
%	\label{fig:queryresult}]{
%		\includegraphics[height=4cm]{picture/damonqueryresult.pdf}
%	}
%	\vspace*{-0.2cm}
%	\caption{Execution time for index creation and query answering (1-NN queries).}
%	\vspace*{-0.7cm}
%\end{figure}

\begin{figure}[tb]
	\centering
\begin{minipage}[b]{\columnwidth}
	\subfigure[Lower bound dist. calculations
	\label{fig:ndist1}]{
		\includegraphics[page=1,width=0.49\columnwidth]{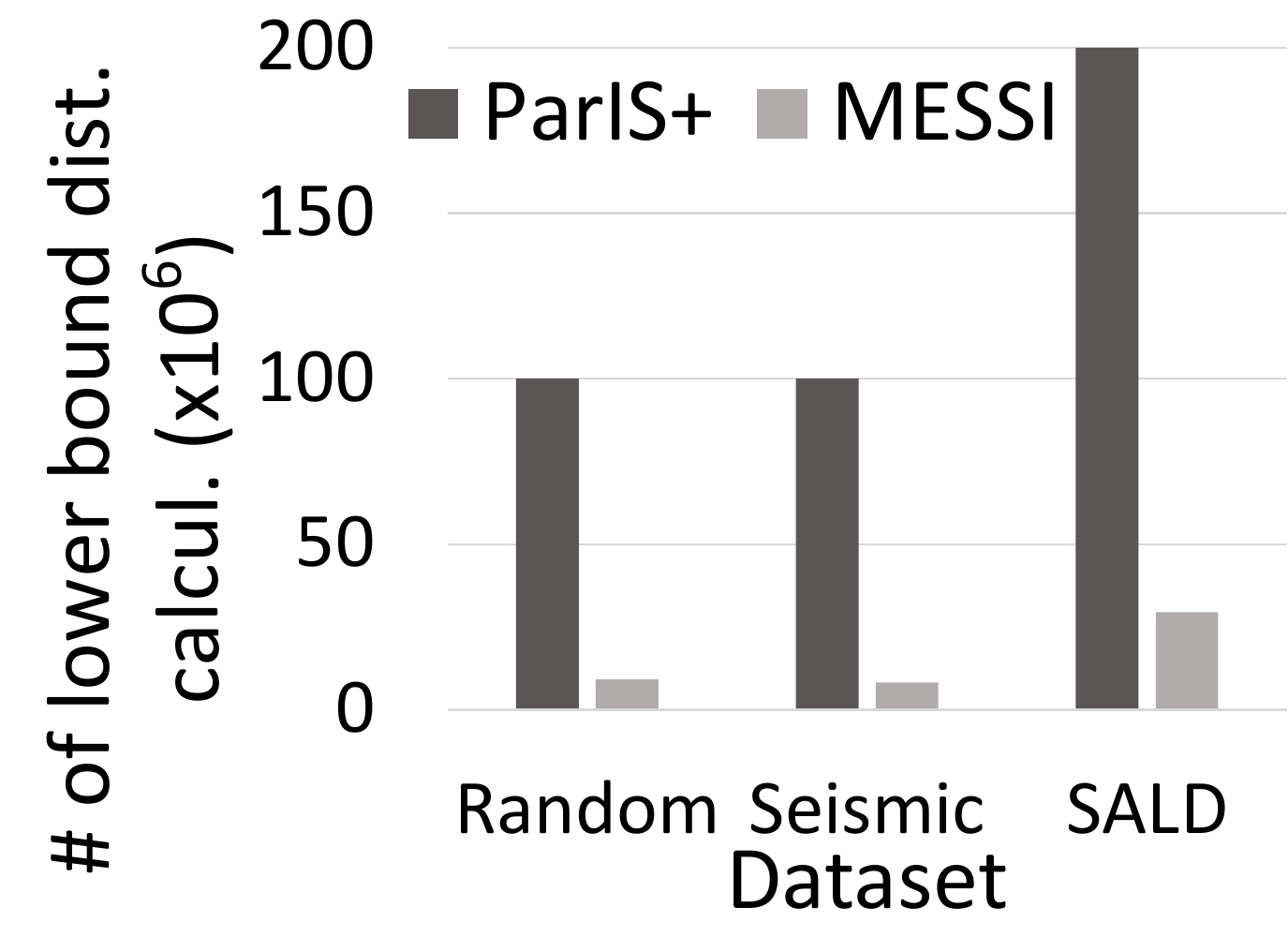}
	}
	\subfigure[Real distance calculations 
	\label{fig:ndist2}]{
		\includegraphics[page=2,width=0.49\columnwidth]{picture/calculationnumber2}
	}
	\caption{Number of distance calculations}
	\label{fig:ndist}
\end{minipage}

%	\begin{minipage}[b]{0.44\columnwidth}
%	\centering
%	\includegraphics[page=1,width=\columnwidth]{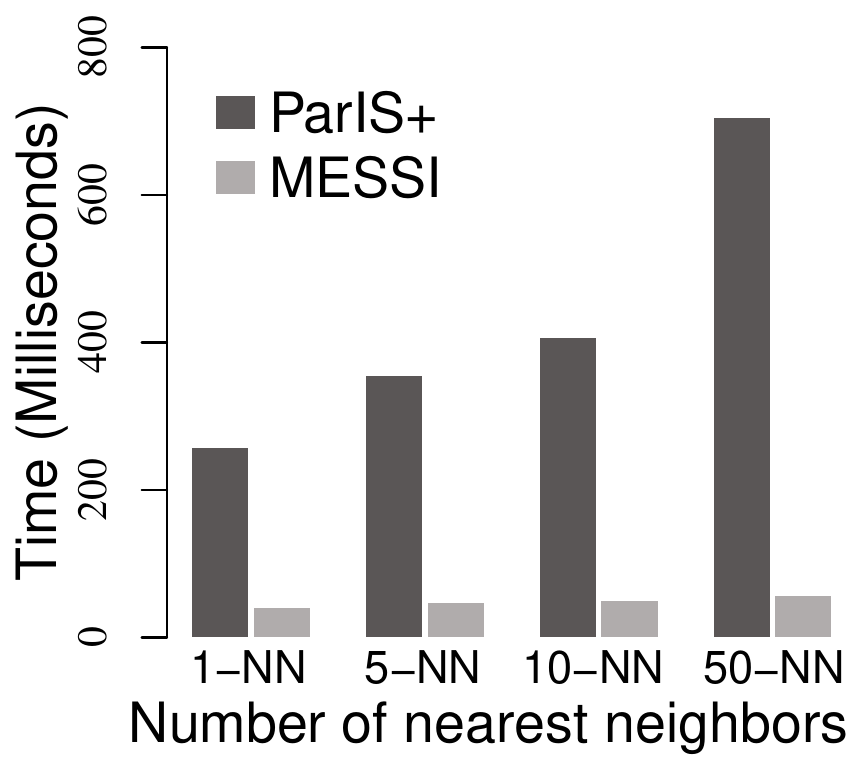}
%	\caption{Time for a k-NN Classifier that uses ParIS/MESSI to classify one object (100GB dataset).}
%	\label{fig:knn}
% 	\end{minipage}
\end{figure}

%\begin{figure}[ht]
%	\centering
%	\includegraphics[page=1,width=\columnwidth]{picture/balancedindexcreation}
%	\caption{\textcolor{red}{Time cost of Balance index construction}}
%	\label{fig:binc}
%\end{figure}

%\begin{figure}[ht]
%	\centering
%	\includegraphics[page=1,width=\columnwidth]{picture/imbalancetest}
%	\caption{\textcolor{red}{Imbalance time during index construction}}
%	\label{fig:imbinc}
%\end{figure}

\noindent{\bf [Query Answering]}
%We present the results that demonstrate MESSI efficiency in query answering: 
%MESSI is up to 11x faster than ParIS (Figure~\ref{fig:realqa}) 
%%the state-of-the-art index solution for parallel data series similarity search, 
%and up to 110x faster than our parallel implementation of UCR Suite-P (Figure~\ref{fig:realqa}).
%%, a parallel implementation of the state-of-the-art serial scan method.
%
Figure~\ref{fig:varycore} 
compares the performance of the MESSI query answering algorithm
to its competitors, as the number of cores increases, for a random dataset of 100GB
(y-axis in log scale). 
The results show that both MESSI-sq and MESSI-mq perform much better
than all the other algorithms. 
Note that the performance of MESSI-mq is better than that of MESSI-sq,
so when we mention MESSI in our comparison below we refer to MESSI-mq. 
MESSI is 55x faster than UCR Suite-P and 6.35x faster than ParIS+ 
when we use 48 threads (with hyperthreading). In contrast to ParIS+, 
MESSI applies pruning when performing the lower bound distance calculations
and therefore it executes this phase much faster. 
Moreover, the use of the priority queues result in 
even higher pruning power. As a side effect, MESSI also performs 
less real distance calculations than ParIS+. 
Note that UCR Suite-P does not perform any pruning, thus resulting 
in a much lower performance than the other algorithms. 

Figure~\ref{fig:scalquery} shows that
this superior performance of MESSI is observed across  
various data set sizes: MESSI is up to 61x faster than 
UCR Suite-p (for 200GB), up to 6.35x faster than ParIS+ (for 100GB), 
and up to 7.4x faster than ParIS+TS (for 50GB).

In the following experiment, we studied the query answering performance as we varied the length of the data series.
We measured the query answering time for data series, whose length ranged from 128 to 2048 points. 
The five datasets with different data series length that we used in this experiment are random, and in order to factor out the effect of dataset size (following previous work~\cite{lernaeanhydra,lernaeanhydra2}), they are all 100GB in size. 
This means that the datasets with longer series contain a smaller number of series overall. 
In all cases, the iSAX summaries were built using 16 segments. 
	
Figure~\ref{fig:varylength} shows that the query answering performance of all algorithms increases with the length of the series.
This is to be expected, since the total number of series, and therefore distances that should be computed, is decreasing. 
We observe that as we increase the series length, the lower bounds become looser. 
Consequently, the proportion of lower bound and real distance calculations increase, as reported in Figures~\ref{fig:varylengthlbdp} and~\ref{fig:varylengthrdp}, respectively. 
These results also show that ParIS+ spends time to perform lower bound calculations for all series in the dataset in order to save on the real distance calculations, while ParIS+TS ends up performing a large number of both lower bound and real distance calculations. 
On the other hand, the query answering strategy of MESSI proves very efficient in terms pruning, leading to a small number of lower bound and real distance calculations.
We also observe that when compared to ParIS+TS, MESSI is much more efficient in handling the priority queues. 
Figure~\ref{fig:varylengthqueuetime} shows the time spent on priority queue insertion and deletion operations for the two algorithms.
The time to handling queue become stable for both 2 algorithms. 
The slow performance of ParIS+TS is due to the fact that it inserts in the queue not only the leaf nodes (like MESSI), but also the inner nodes. 
%And we only use single priority queue (no multiple priority queues for paris+ts). 
%This part of work can't be reduced by the decrease of the data.
All the above performance characteristics make MESSI the overall winner in terms of query answering time, across all data series lengths we tried (Figure~\ref{fig:varylength}).

%	increasing the series length from 128 to 2048 points, UCR Suite-p becomes 4.71  times faster, ParIS+ 8.25 times faster, ParIS+TS 1.96 times faster, and MESSI 5.57 times faster. 
%		
%	MESSI achieves good (second best) performance improvement as we increase the series length, 
%	because the overhead of processing the tree structure and insert/delete from the priority queue for MESSI is smaller than ParIS+TS (which also employs the priority queues). 
%	ParIS+TS does not calculate the lower bound distances between query and candidate series in a leaf (MESSI does), therefore the influence of loose lower bound is bigger than MESSI. 
%	
%	paris+ts 2x more real distance calculations going from 128 to 2048
%	
%	messi also 2x
%	
%	Because of the structure of ParIS+ is simpler than MESSI so that the stable overhead of ParIS+ is less than MESSI. 
%	Thus the improvement of ParIS+ is better than MESSI but MESSI always faster than ParIS+.
%	
%	We can see that MESSI win in all situations,  
%	along with the increase of the dimension, ParIS+ won ParIS+TS. 
%	The reason is that ParIS+TS insert and delete not only the leaf nodes but also the internal nodes (MESSI only inserts leaf nodes). 
%	The concurrent is huge for ParIS+TS.
%	Thus the time is not decreased as fast as ParIS along with the number of the series reduced (with same size of dataset). 
%	{\bf ??? this para explains why paris+ts is not decreasing as fast as messi ???}

\begin{figure}[tb]
	\centering
	\includegraphics[page=1,width=\columnwidth]{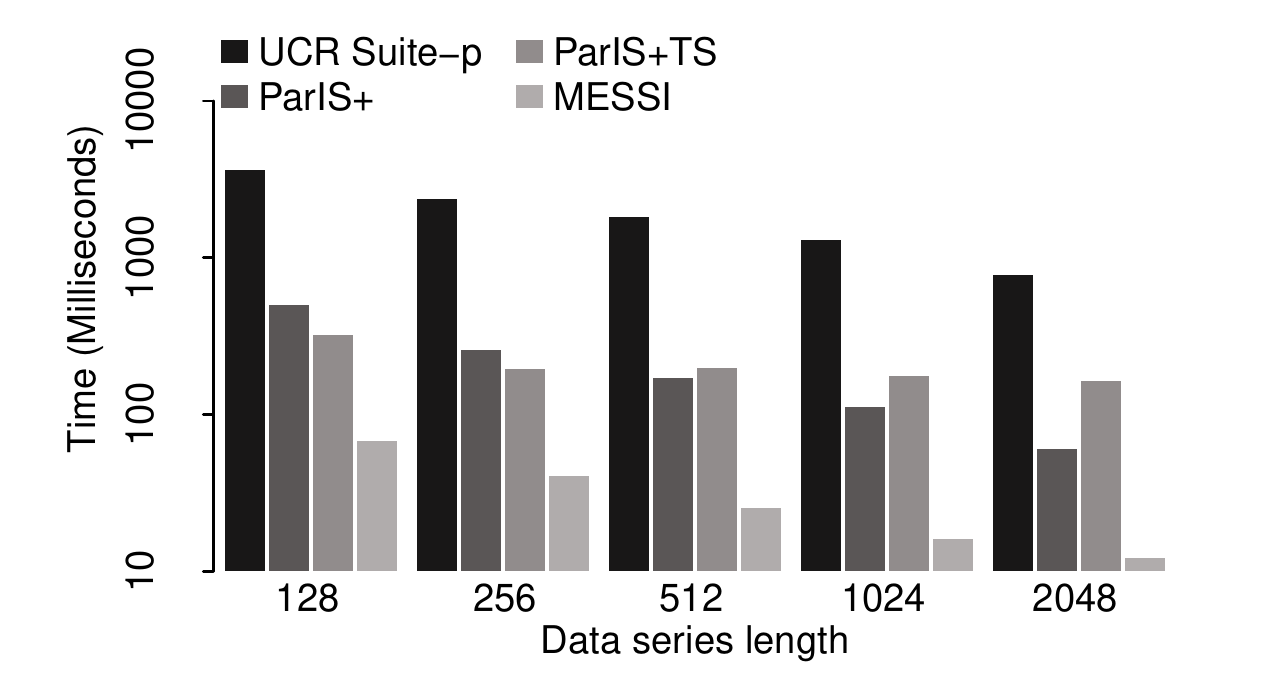}
	%\vspace*{-0.6cm}	
	\caption{Query answering, vs. data series length.}
	\label{fig:varylength}
\end{figure}

%\begin{figure}[tb]
%	\centering
%	\includegraphics[page=3,width=\columnwidth]{picture/lbdrdvarylength}
%	%\vspace*{-0.6cm}	
%	\caption{\textcolor{red}{Number of lower bound distance calculation, vary series length.}}
%	\label{fig:varylengthlbdn}
%\end{figure}
\begin{figure}[tb]
	\centering
	\includegraphics[page=1,width=\columnwidth]{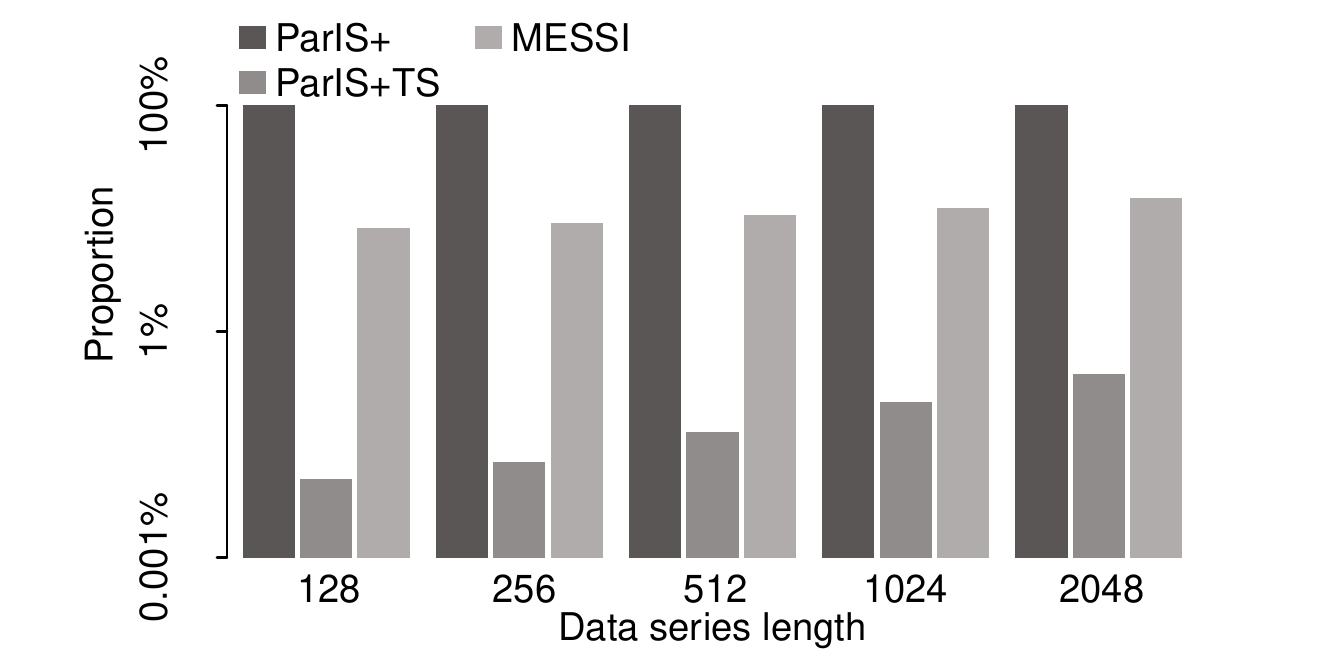}
	%\vspace*{-0.6cm}	
	\caption{Proportion of lower bound distance calculations, vs. data series length.}
	\label{fig:varylengthlbdp}
\end{figure}

%\begin{figure}[tb]
%	\centering
%	\includegraphics[page=1,width=\columnwidth]{picture/lbdrdvarylength}
%\vspace*{-0.6cm}	
%	\caption{\textcolor{red}{Number of real distance calculation, vary series length.}}
%	\label{fig:varylengthrdn}
%\end{figure}
\begin{figure}[tb]
	\centering
	\includegraphics[page=1,width=\columnwidth]{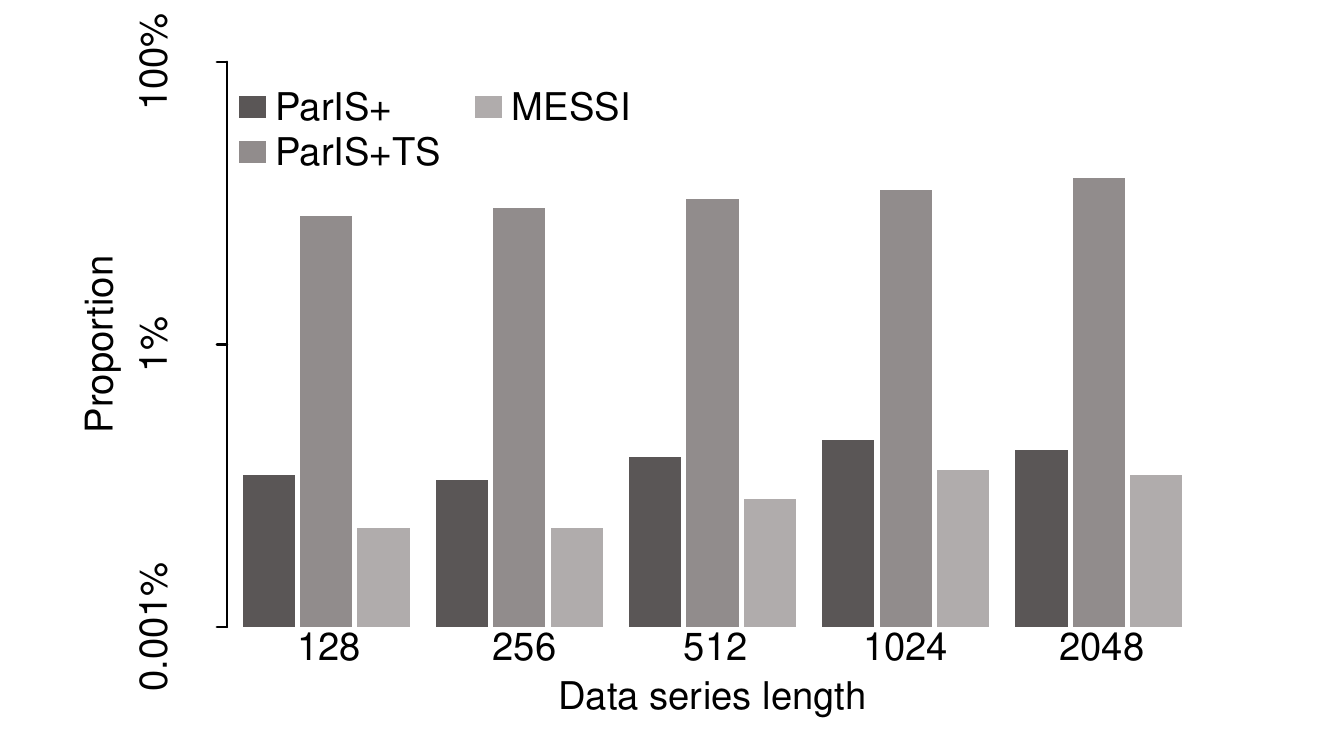}
	%\vspace*{-0.6cm}	
	\caption{Proportion of real distance calculations, vs. data series length.}
	\label{fig:varylengthrdp}
\end{figure}

\begin{figure}[tb]
	\centering
	\includegraphics[page=1,width=\columnwidth]{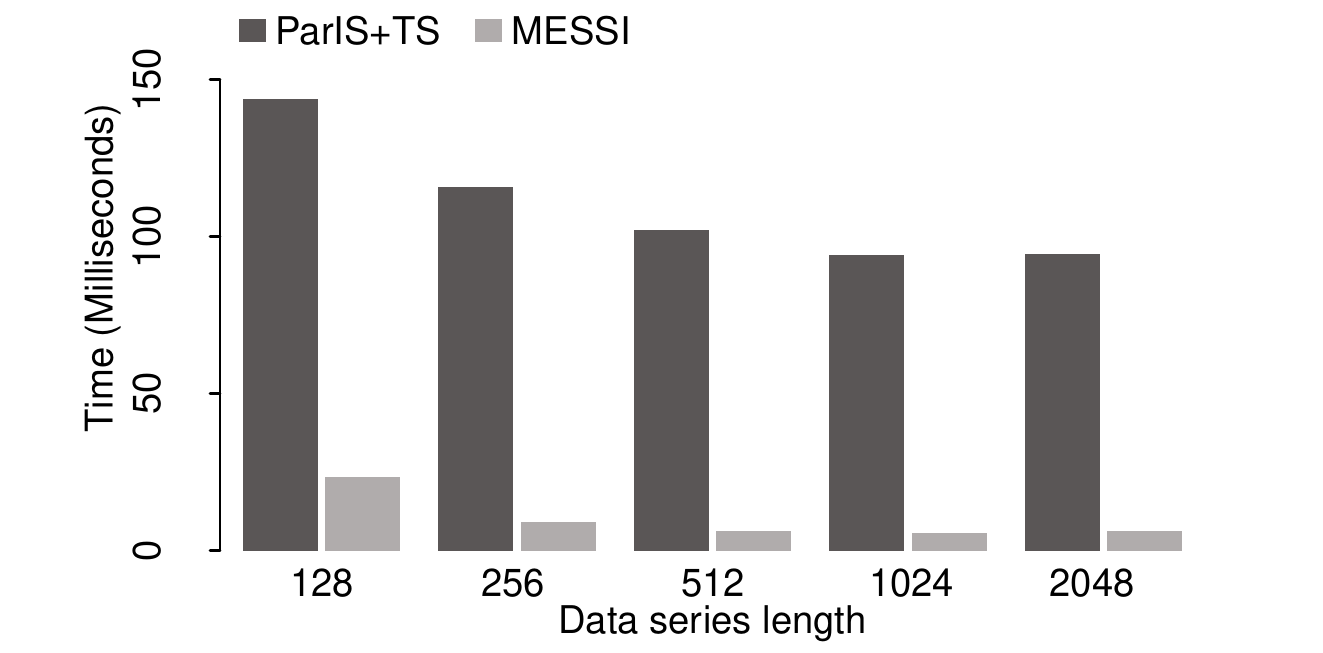}
	%\vspace*{-0.6cm}	
	\caption{Time spent in priority queue insertions and deletions, vs. data series length.}
	\label{fig:varylengthqueuetime}
\end{figure}

\noindent{\bf [Performance Benefit Breakdown]}
%Given the above results, 
We now evaluate each of the design choices of MESSI in isolation. 
This evaluation serves as a methodological analysis that can help understand the benefit of individual design decisions, and if/how they apply to other indices.

Note that some of our design decisions stem from the fact that in our index 
the root node has a large number of children. 
%This allows us to introduce a lot of concurrency at a low synchronization cost. 
Thus, the same design ideas are applicable to the iSAX family of indices (e.g., iSAX2+~\cite{isax2plus}, ADS+~\cite{zoumpatianos2016ads}, \\ULISSE~\cite{ulisse}). 
Other indices however~\cite{lernaeanhydra}, use a binary tree (e.g., DSTree), or a tree with a very small fanout (e.g., SFA trie, M-tree), so new design techniques are required for efficient parallelization. 
However, some of our techniques (e.g., 
the use of SIMD, priority queues, and some of the data structures designed to reduce the syncrhonization cost) 
can be applied to all other indices.

We first examine index creation (refer to Figure~\ref{fig:stepbystepindex}). 
The main performance benefit comes from removing the synchronization cost of ParIS+ when filling up the receiving buffers.
%To reduce the space needed for the iSAX buffers,
%ParIS stores pointers in them (and not $<$isax summary, position$>$ pairs).
%%	as is the case in MESSI). 
%Each of these pointers points to the appropriate element 
%of an array where the pair is stored. To implement this technique, 
%ParIS employs a shared counter (implemented using Fetch\&Add), which is frequently accessed by the threads. 
%This introduces a lot of synchronization cost. 
%In the fast in-memory setting, we re-designed this part of the algorithm, giving up this technique. 
%This algorithm, ParIS-no-synch in Figure~\ref{fig:stepbystepindex}, saved a lot in
%synchronization cost and resulted in better performance. 
In accessing the buffers, we completely eliminated 
the synchronization cost by splitting each such buffer into as many chunks as the number of worker threads.
Then each thread inserts and removes elements without encountering any contention, leading to a 2.5x speedup (shown as ParIS+no-synch in the graph) when compared to ParIS+.
To achieve better load balancing, 
MESSI splits the array into smaller chunks and uses a Fetch\&Add object to 
assign chunks to threads, resulting to a further performance improvement of 11\%.  	
%Figure~\ref{fig:chunksize} shows that the performance of MESSI remains stable for chunk sizes between 1K-1M (we use 20K in our experiments). The chunk size for ParIS-no-synch was in this case 4.2M.

Then, we examine the query answering performance (refer to Figure~\ref{fig:stepbystepquery}).
The leftmost bar (ParIS+SISD) shows the performance 
of ParIS+ when SIMD is \emph{not} used. 
By employing SIMD,
ParIS+ becomes 60\% faster than ParIS+SISD. % (relevant results are presented also in [5]).
%(The influence of SIMD in ParIS performance is also studied in [5]). 
We then measure the performance for ParIS+TS, which is about 10\% faster than ParIS+.
This improvement comes form the fact that using 
the index tree (instead of the SAX array that ParIS+ uses) to prune the search space and determine the data series for which 
a real distance calculation must be performed, significantly reduces the number of lower bound distance 
calculations. 
ParIS+ calculates lower bound distances for all the 
data series in the collection, and pruning is performed only when calculating
real distances, whereas in ParIS+TS pruning also occurs when calculating lower bounds. 
%Moreover, in ParIS-TS, we use a priority queue (instead of
%the candidates list used in ParIS) to store the nodes whose
%lower bound distance is smaller than BSF. %When extracting nodes from the priority
%	queue, ParIS-TS calculates the real distance calculation for all the data series
%	that correspond to the elements stored in the node. 

%\begin{figure}[tb]
%	\begin{minipage}{0.42\columnwidth}
%		\centering
%		\includegraphics[width=\columnwidth]{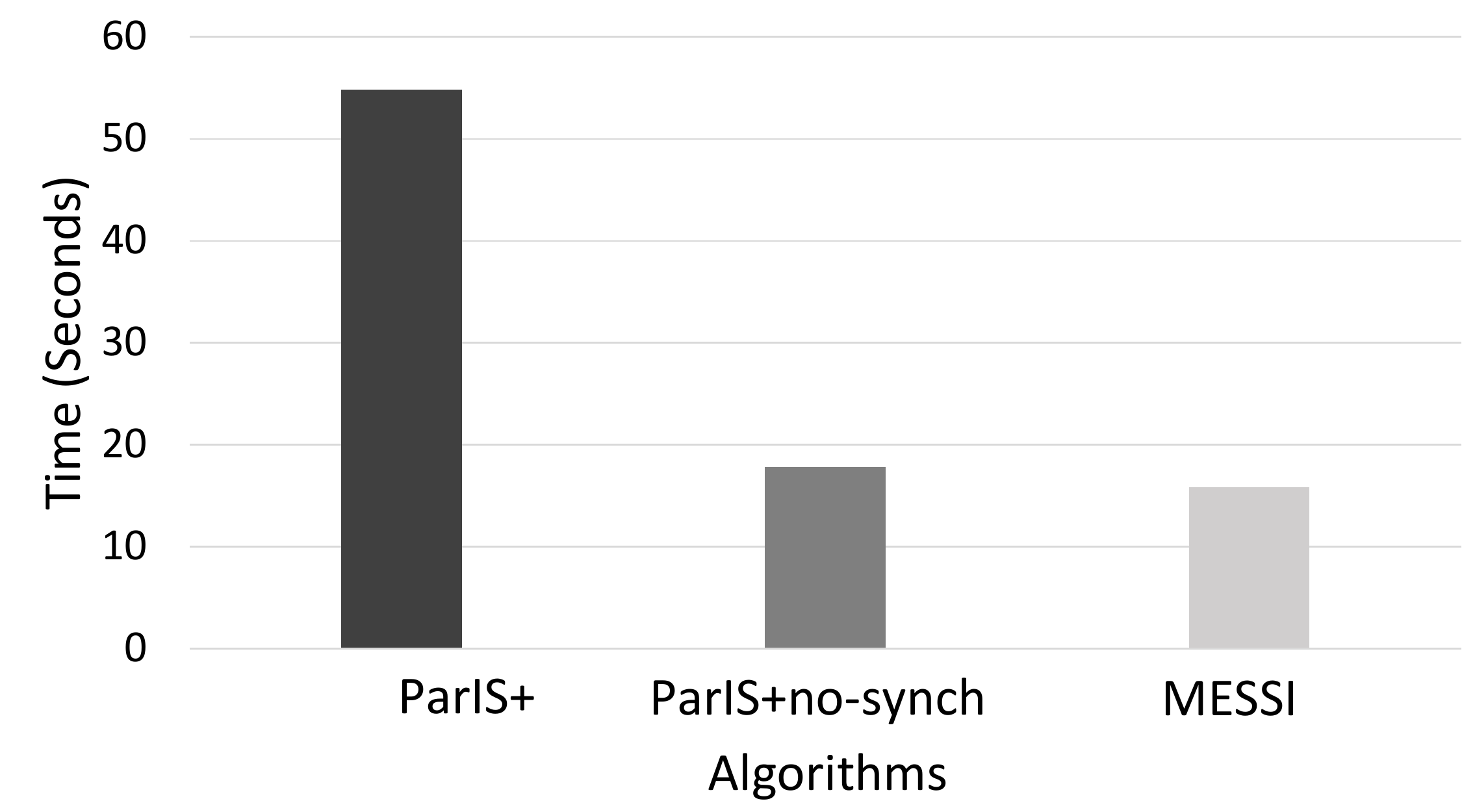}
%		\caption{{\CR{Index creation time}}}
%		\label{fig:stepbystepindex}
%	\end{minipage}

%\end{figure}

Next, we apply the following technique to reduce the number of real distance
calculations we perform: for each node extracted from the priority queue (where ParIS+TS stores the nodes it needs to process),
we first calculate the lower bound distance and only if it is smaller than the BSF,
the algorithm calculates the real distance. 
We call this algorithm ParIS+TS-LB, which  
%ParIS-TS-LB prunes more than ParIS-TS, 
results in a further performance improvement 
of 13\%.
MESSI-sq improves upon ParIS+TS-LB in that it inserts in the priority queue
only leaf nodes. This reduces the number of nodes that are inserted in the
priority queue and therefore also the size of the queue, as well as the contention 
incurred when accessing it. 
Given that in a priority queue, all threads need to synchronize at the root node,
this contention is usually rather high. Figure~\ref{fig:stepbystepquery} shows that MESSI-sq is 65\% faster 
than ParIS+TS-LB.
MESSI-mq further reduces the synchronization cost by maintaining more than one queues
and having different threads choose on which queues to work on. This makes MESSI-mq 42\% faster
than MESSI-sq.

In Table~\ref{table4}, we report a breakdown of the number of operation executions that the different algorithms perform. 
These numbers help explain the observations (and design choices) mentioned above. 
Observe that ParIS+TS performs many real distance calculations, because it does not use the second-level filter opportunity offered by the full-length iSAX representation of each data series. 
ParIS+TS-LB improves on this aspect. 
When compared to ParIS+, ParIS+TS-LB only performs 9\% of the lower bound distance calculations, because it uses the index tree and the priority queue in order to prune. 
This also means that ParIS+TS-LB performs less than 50\% of the real distance calculations. 
%It is because of the benefit of the priority queue. 
However, ParIS+TS-LB still executes many insert/delete node operations on the priority queue. 
MESSI sq/hq are much more efficient in handling the priority queue, since it only inserts leaf nodes in the queue. 
%\vspace*{-.3cm}
\begin{table*}[tb]
	\centering
	\caption{Query answering algorithms comparison: number of times an operation is executed (average over 100 queries).}\label{table4}
\hspace*{-0.3cm}	
	\begin{tabular}{|c|c|c|c|c|c|}
		\hline
		& {\bf ParIS+} & {\bf ParIS+TS} & {\bf ParIS+TS-LB} & {\bf MESSI-sq} & {\bf MESSI-mq} \\
		\hline
		PQ ins. node &n/a&69,117&	69,134&	14,620&	14,611\\
		\hline
		PQ del. node&n/a&20,051	&20,111&	11,152&	10,747\\
		\hline
		LBD calcul. &100 M&	69,117	&9,173,401&	9,175,400	&9,170,162\\
		\hline
		RD calcul. &112,321	&9,183,312	&52,139	&54,207	&53,919\\
		\hline
	\end{tabular}
\end{table*}

\begin{table}[tb]
	\centering
	\caption{Index expansion rate (index size as a percentage of the original data size).}\label{table5}
	\hspace*{-0.3cm}	
	\begin{tabular}{|c|c|c|c|}
		\hline
		&Synthetic&Seismic&SALD\\
		& {100GB} & {100GB} & {100GB} \\
		& {100M series} & {100M series} & {200M series} \\
		\hline
		index expansion rate&5.7\%&	5.1\%& 10.5\%\\
		\hline
	\end{tabular}
\end{table}
%\begin{table*}[tb]
%	\centering
%	\caption{Index size\textcolor{red}{???or Index expansion rate? the reviewer use index expansion rate???}, varying dataset.}\label{table5}
%	\hspace*{-0.3cm}	
%	\begin{tabular}{|c|c|c|c|c|c|c|}
%		\hline
%		&\multicolumn{4}{|c|}{Synthetic}& \multirow{2}*{Seismic}&\multirow{2}*{SALD}\\
%		\cline{2-5}
%		& {50GB} & {100GB} & {150GB}&{200GB} &~&~  \\
%		\hline
%		Index size&7.18\%&5.70\%&5.15\%&4.86\%&	5.06\%& 10.45\%\\
%		\hline
%	\end{tabular}
%\end{table*}

%\begin{table}[tb]
%	\centering
%	\scriptsize
%	\caption{Query answering algorithms comparison: number of times an operation is executed (average over 100 queries). }\label{table4}
%	\begin{tabular}{|c|c|c|c|c|c|}
%		\hline
%		& {\bf ParIS} & {\bf ParIS-TS} & {\bf ParIS-TS-LB} & {\bf MESSI-sq} & {\bf MESSI-mq} \\
%		\hline
%		Insert node &n/a&69K&	69K&	15K&	15K\\
%		\hline
%		Delete node&n/a&20K	&20K&	11K&	11K\\
%		\hline
%		LBD calculation &100M&	69K	&9M&	9M	&9M\\
%		\hline
%		RD calculation &112K	&9M	&52K	&54K	&54K\\
%		\hline
%	\end{tabular}
%\end{table}

\noindent{\bf [Real Datasets]}
Figures~\ref{fig:realinc} and~\ref{fig:realqa} reaffirm that MESSI
exhibits the best performance for both index creation and query answering, 
even when executing on the real datasets, SALD and Seismic (for a 100GB dataset), for the reasons listed in the previous paragraphs. 
Regarding index creation, MESSI is 3.6x faster than ParIS+ on SALD and 3.7x faster than ParIS 
on Seismic, for a 100GB dataset. 
Moreover, for SALD, MESSI query answering is 60x faster than UCR Suite-P 
and 8.4x faster than ParIS+, whereas for Seismic,
it is 80x faster than UCR Suite-P, and almost 11x faster than ParIS+.
%Note that MESSI exhibits better performance than UCR Suite-P in the case of real datasets. 
%This is so because working on random data results in better pruning than that on real data. 

Figures~\ref{fig:ndist1} and~\ref{fig:ndist2} illustrate the number of lower bound and real distance calculations,
respectively, performed by the different query algorithms on the three datasets. 
ParIS+ calculates the distance between the iSAX summaries of every single data series 
and the query series (because, as we discussed in Section~\ref{sec:prelim}, it implements the SIMS strategy for query answering). 
In contrast, MESSI performs pruning even during the lower bound distance calculations, 
resulting in much less time for executing this computation. 
Moreover, this results in a significantly reduced number of series whose 
real distance to the query must be calculated. 

The use of the priority queues lead to even less real distance calculations, 
because they help the BSF to converge faster to its final value. 
MESSI performs no more than 15\% of the lower bound distance calculations
performed by ParIS+. 

In the next experiment, we report results with query workloads of increasing difficulty (similarly to earlier work~\cite{DBLP:journals/vldb/ZoumpatianosLIP18}). 
For these workloads, we select series at random from the collection, add to each point Gaussian noise ($\mu=0$, $\sigma=0.01$-$0.1$), and use these as our queries. 
Finally, we also select series at random and remove them from the collection, and use these as our \emph{Real} workload. 

Figure~\ref{fig:RDnumber} shows that the pruning proportion of all algorithms increases as we increase the level of noise in the query workloads, while \emph{Real} is even more difficult: for the Seismic dataset, we can only prune 40-55\% of the real distance calculations. 
Nevertheless, MESSI achieves in all cases the best pruning, thanks to its use of the priority queue, where the BSF is always updated as early as possible. 
The query answering time performance for these workloads is depicted in Figure~\ref{fig:noisequery2}. 
The results show that as the queries get harder, ParIS+ becomes worse that UCR Suite-p. 
ParIS+ pays the penalty of having to generate and process the candidate list, which grows very large when pruning is small.
ParIS+TS has an advantage in this respect, because it only needs to handle the priority queue (of the non-pruned nodes), which is smaller in size, resulting in an overhead that is much less than ParIS+. 
MESSI is always better than all competitors: it performs 3.5x-100x faster than UCR Suite-p on the Seismic dataset, and 16x-135x faster on SALD. 
(ParIS+ was much slower, and we terminated its execution after 10K milliseconds per query.)

In Table~\ref{table5}, we report the MESSI index expansion rate (i.e., the index size as a percentage of the original data size) for the synthetic and real datasets in our study.
We observe that for our 100GB datasets, the MESSI index occupies $\sim$5GB of space for Synthetic and Seismic, and $\sim$10GB for SALD.
Note that the series in SALD have a length of 128 points (compared to the 256 of Synthetic and Seismic); hence, this dataset contains double the number of series than the other two datasets. 
This means that the index contains double the number of iSAX summaries.
Overall, we conclude that the MESSI index expansion rate is small, rendering
MESSI a space efficient index.

\begin{figure}[tb]
		\begin{minipage}[b]{\columnwidth}
		\centering
		\includegraphics[width=0.8\columnwidth]{}
%\vspace*{-0.7cm}
		\caption{Index creation time}
		\label{fig:stepbystepindex}
	\end{minipage}
\end{figure}

	\begin{figure}[tb]
	\centering
	\includegraphics[width=0.9\columnwidth]{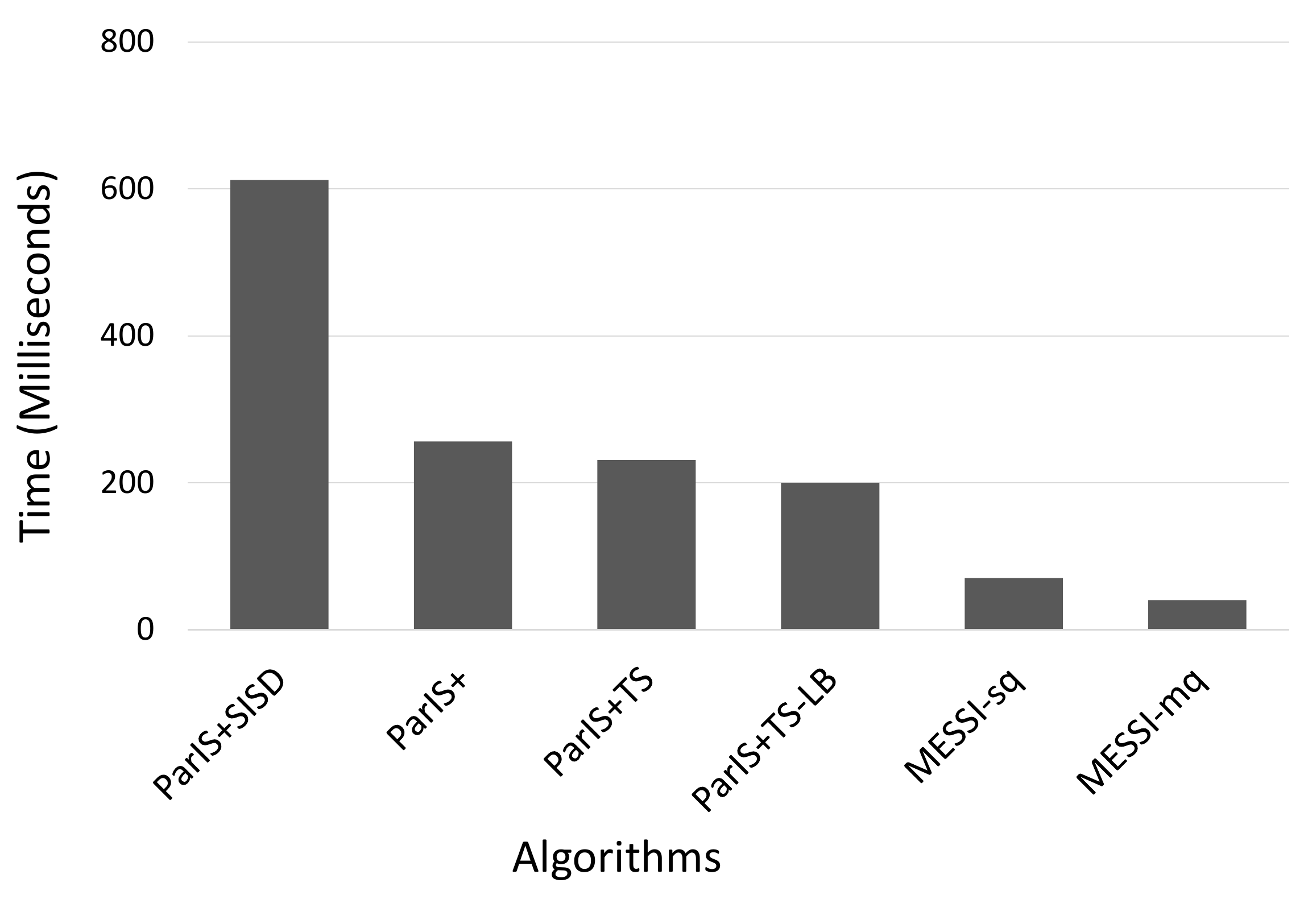}
	%\vspace*{-0.7cm}
	\caption{Query answering time}
	\label{fig:stepbystepquery}
\end{figure}

\begin{figure*}[tb]
	\centering

	\includegraphics[page=1,width=1.6\columnwidth]{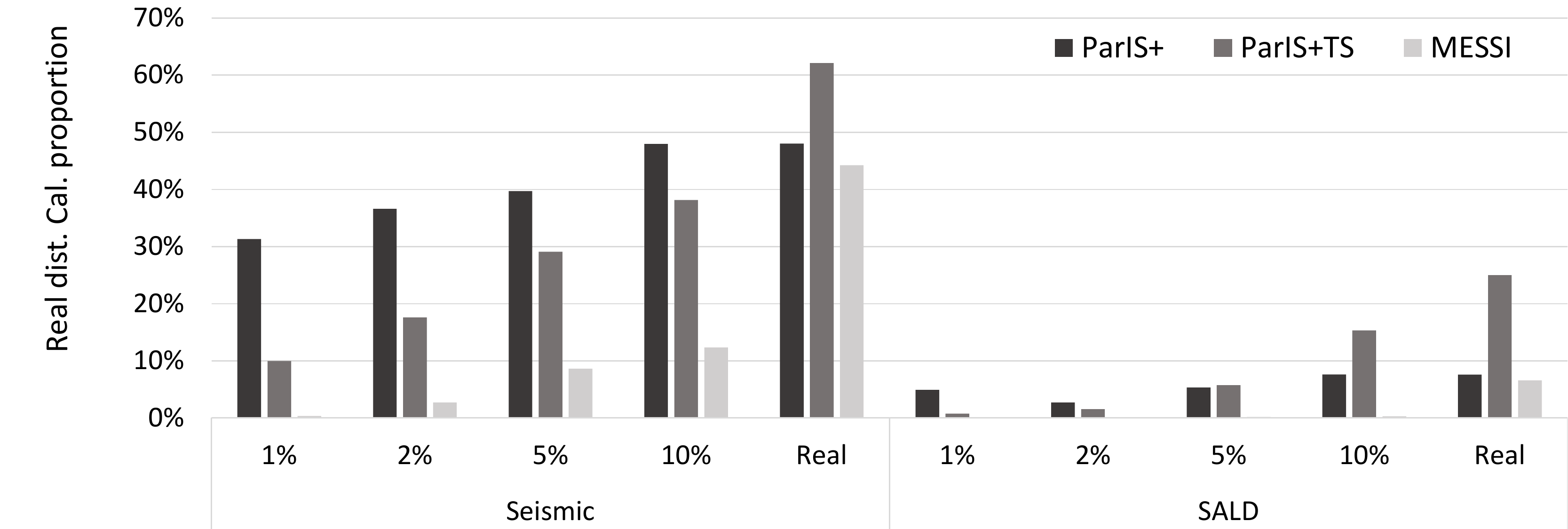}
	\caption{Number of real distance calculations: real datasets, various query workloads.}
	\label{fig:RDnumber}
\end{figure*}

\begin{figure*}[tb]
	\centering
	\hspace*{-0.2cm}
\includegraphics[page=1,width=1.6\columnwidth]{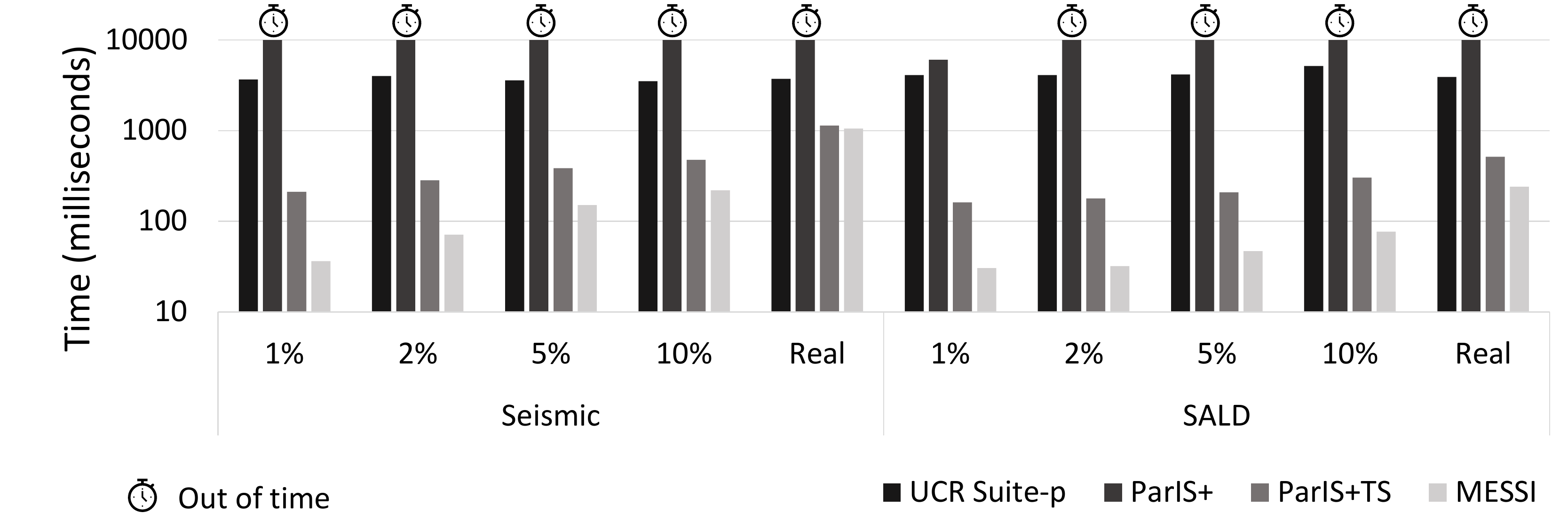}
%\vspace*{-0.7cm}
\caption{Query answering time: real datasets, various query workloads.}
\label{fig:noisequery2}
\end{figure*}

\noindent{\bf [MESSI DTW]}
Figures~\ref{fig:dtwwindow} and~\ref{fig:dtwscalability} compare the performance of MESSI and UCR Suite for the case of DTW distance. 
Overall, MESSI is up to 2.5 % 667x
orders of magnitude faster than UCR Suite-p, a parallel version of UCR Suite that uses SIMD and supports DTW.
(For comparison, we also report the performance of the single-core implementation of UCR Suite, which is 1-2 orders of magnitude %56x-68x 
slower than UCR Suite-p.) 

The experimental results on a 100GB dataset show that as we increase the warping window size from 1\% to 20\% of the data series length, the query answering time of MESSI increases as well: the LB\_Keogh envelope of the query becomes wider, and consequently, pruning in the index is smaller (refer to Figure~\ref{fig:dtwwindow}). 
%On the other hand, the performance of UCR Suite-p remains stable for different warping window sizes, since most part of the time is used to pass all data series and calculate the lower bound distance.
However, MESSI is in all cases at least 9x faster than UCR Suite-p, % for warping window size 20\%
while for the most common warping window sizes of 5-10\%~\cite{keogh2005exact}, the speedup is between 35-170x. % real numbers: 34-168x.
Figure~\ref{fig:dtwscalability} shows query answering time when varying the dataset size (warping window size: 10\%). 
As we increase the size of the data series collection from 50GB to 200GB, MESSI remains 25-35x faster than UCR Suite-p. 

\begin{figure*}[tb]
		\centering
\begin{minipage}[b]{0.9\columnwidth}
	\centering
	\includegraphics[width=0.65\columnwidth]{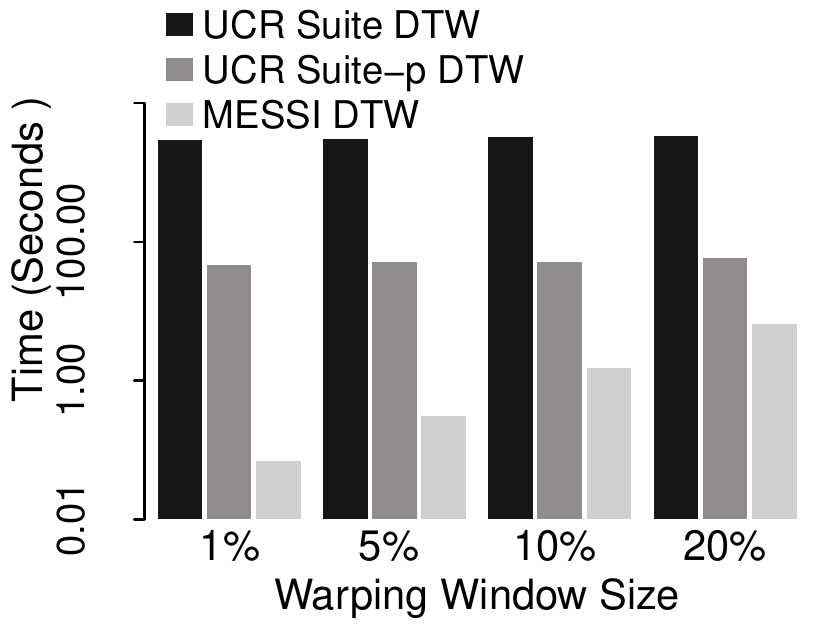}
%\vspace*{-0.6cm}	
	\caption{DTW %query answering 
		time (synthetic data, varying warping window size, 100GB dataset).}
	\label{fig:dtwwindow}
\end{minipage}
\hspace*{0.9cm}
\begin{minipage}[b]{0.9\columnwidth}
	\centering
		\includegraphics[width=0.65\columnwidth]{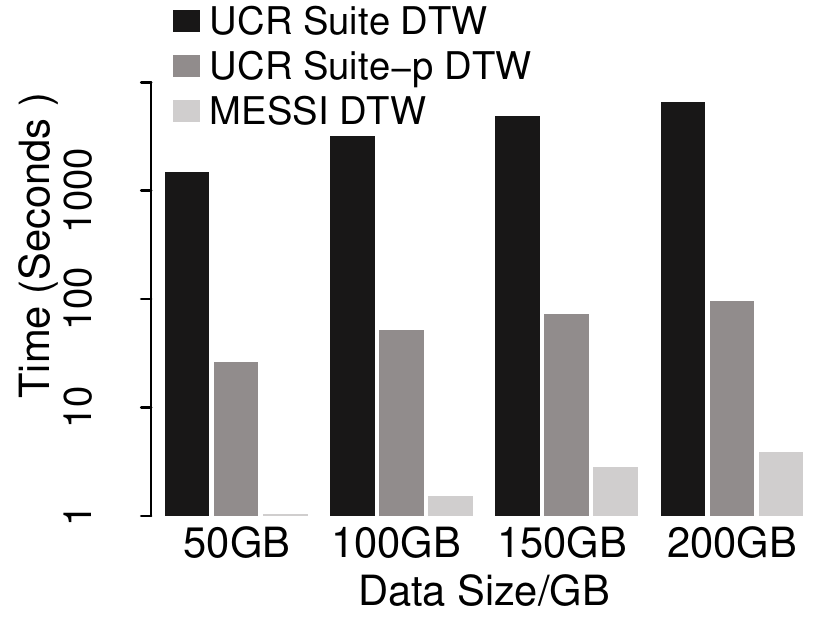}
%\vspace*{-0.6cm}	
	\caption{DTW %query answering 
		time (synthetic data, varying data size, 10\% warping window).}
		\label{fig:dtwscalability}
\end{minipage}
%\hspace*{0.2cm}
%\begin{minipage}[b]{0.5\columnwidth}
%	\centering
%	\includegraphics[page=1,width=1.04\columnwidth]{picture/Rknn}
%%\vspace*{-0.6cm}	
%\caption{Time for a k-NN Classifier %with ParIS/MESSI 
%	to classify one object (100GB dataset).}
%	\label{fig:knn}
%\end{minipage}
\end{figure*}

\begin{figure*}[tb]
	\begin{minipage}[t]{0.3\textwidth}
		\centering
		\includegraphics[page=1,width=0.8\textwidth]{picture/Rknn}
		%\vspace*{-0.6cm}	
		\caption{Time for a k-NN Classifier %with ParIS/MESSI 
			to classify one object using the Euclidean distance (100GB dataset).}
		\label{fig:knn}
	\end{minipage}
	\hspace*{0.7cm}
	\begin{minipage}[t]{0.3\textwidth}
		\centering
		\includegraphics[width=0.8\textwidth]{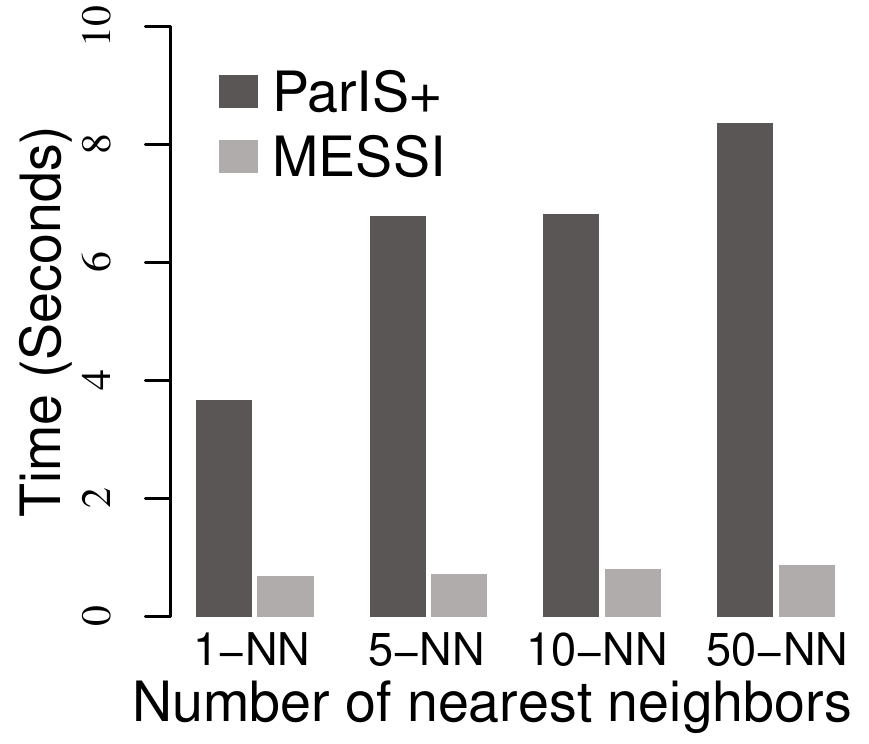}
		%\vspace*{-0.6cm}	
		\caption{Time for a k-NN Classifier to classify one object using the DTW distance using a warping window size of 5\% of the series length (100GB dataset).}
		\label{fig:dtw5knn}
	\end{minipage}
	\hspace*{0.7cm}
	\begin{minipage}[t]{0.3\textwidth}
		\centering
		\includegraphics[page=1,width=0.8\textwidth]{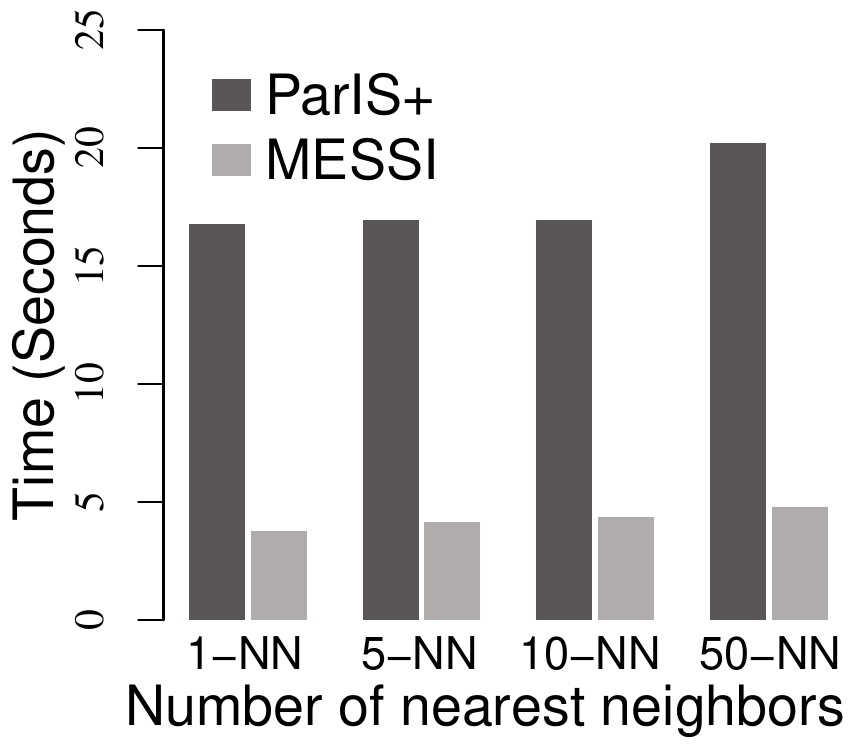}
		%\vspace*{-0.6cm}	
		\caption{Time for a k-NN Classifier to classify one object using the DTW distance using a warping window size of 10\% of the series length (100GB dataset).}
		\label{fig:dtw10knn}
	\end{minipage}
\end{figure*}

\begin{table}[tb]
	\centering
	\makeatletter\def\@captype{table}\makeatother \caption{Update Frequency of the BSF array (Euclidean distance).}
	{
		\begin{tabular}{|c|c|c|c|c|}
			\hline
			%Number of nearest neighbors
			& {\bf 1-NN} & {\bf 5-NN} & {\bf 10-NN} & {\bf 50-NN}\\
			\hline
			number of BSF& & & & \\
			%updates/query&11.97&20.92&45.58&258.04\\
			updates/query&11.9&20.9&45.6&258.1\\
			\hline
			BSF update time& & & &\\ %$\mu$sec/query&0.51&5.07&19.12&186.47\\
			$\mu$sec/query&0.5&5.1&19.1&186.5\\
			\hline
			%		BSF update time\\ query time \% &0.001\% & 	0.011\% & 	0.038\% &	0.331\%  
			BSF update time& & & &\\ query time \% &0.001\% & 	0.01\% & 	0.04\% &	0.3\%  
			\\
			\hline
		\end{tabular}	
	} % font size
	\vspace*{0.1cm}
	
	\label{table2}
\end{table}

\subsection{Complex Analytics Task: Classification}

In the following experiment, we tested MESSI on a complex analytics task. 
In particular, we evaluated its performance in a classification task, and measured 
the benefit it would bring to a k-NN Classifier. 
This classifier assigns a new object to the majority class of the k NN of that object (a data series, in our case).

The results, depicted in Figure~\ref{fig:knn}, report the performance of MESSI and ParIS+ for different values of $k$ on a 100GB dataset (100M series of size 256 values, generated with our synthetic data generator). 
The results show that a k-NN Classifier using MESSI can finish a classification task
up to 13x faster than when using ParIS+, which can reduce the total processing time 
for classifying 100K objects from 1 day down to 93min.
%MESSI is up to 13x %(ParIS 704 ms MESSI 56.25ms) 
%faster than ParIS+ in answering a k-NN query.

We note that the purpose of this experiment was to measure the time performance of executing a k-NN classification task. % (for different values of k). 
Even though we did not study a real classification
problem, the results are useful in that they report the expected time performance of using MESSI
in such a task with a large data series collection. % (e.g., the one studied in~\cite{DBLP:conf/sdm/TanWP17}).

We evaluated the overhead of executing k-NN queries.
MESSI implements k-NN by simply maintaining a sorted array of the best k distances seen so far (i.e., BSF is now an array of k elements).
The elements of the array are initialized by performing a single approximate search (like in 1-NN):
%Note that the approximate search ends up to a single leaf,
%which contains isax summaries for a number of data series. 
we choose the k series with the smallest distances to the query and initialize the BSF array with their distances. 
Whenever a smaller distance than the biggest element of this array is calculated, the array is updated. 
This process does not result in more operations on the priority queues, or more tree traversals. 
Table~\ref{table2} shows that the additional time
for executing k-NN instead of 1-NN is negligible (times reported in microseconds). 

Finally, we repeated the previous k-NN classification experiment using the DTW distance, which is slower, but can lead to more accurate classifications~\cite{DBLP:journals/datamine/BagnallLBLK17}. 
Figures~\ref{fig:dtw5knn} and~\ref{fig:dtw10knn} show the results (the y-axis is expressed in seconds). 
Observe that when compared to the previous experiment, the execution time (as expected) is now approximately 10x and 30x larger for the 5\% and 10\% warping window sizes, respectively.
Nevertheless, MESSI is up to 9.5x faster than ParIS+ for the 5\% warping window size, and up to 4.5x faster for the 10\% warping.
Therefore, MESSI can considerably reduce the k-NN classification time for large sequence collections. 
Tables~\ref{tabledtw5} and~\ref{tabledtw10} report the number of BSF updates and the time needed to update the BSF during k-NN similarity search queries using the DTW distance with 5\% and 10\% warping window sizes, respectively. 
Similarly to the case of Euclidean distance, we observe that the overhead as the number k of nearest neighbors increases is negligible, even for $k=50$. 
Moreover, since the DTW is computationally more expensive than Euclidean, the percentage of the total query answering time dedicated to updating the BSF shrinks significantly (now expressed as \emph{per thousand}).

In Tables~\ref{lbcost} and~\ref{realcost}, we report the execution time of lower bound and real distance calculations for both the Euclidean and DTW distance measures. 
The results show that the average time cost per lower-bounding calculation is 6.6x slower for DTW than for Euclidean (Table~\ref{lbcost}).
This is due to the fact that DTW needs to pay the cost of computing the LB\_Keogh envelope (for every query).
As expected, the time cost difference between Euclidean and DTW is much larger for the real distance calculation, with DTW being 16x slower than Euclidean (Table~\ref{realcost}).

\begin{table}[tb]
	\centering
	\makeatletter\def\@captype{table}\makeatother \caption{Update Frequency of the BSF array (DTW distance, 5\% warping).}
	{
		\begin{tabular}{|c|c|c|c|c|}
			\hline
			%Number of nearest neighbors
			& {\bf 1-NN} & {\bf 5-NN} & {\bf 10-NN} & {\bf 50-NN}\\
			\hline
			number of BSF& & & & \\
			%updates/query&22.71&	83.94&	160.2&	672.35\\
			updates/query&22.7&	83.9&	160.2&	672.4\\
			\hline
			BSF update time& & & &\\ %$\mu$sec/query&4.92&19.95&		50.21&473.27
			$\mu$sec/query&4.9&19.9&		50.2&473.3
			\\
			\hline
			%		BSF update time\\ query time \% &0.001\% & 	0.011\% & 	0.038\% &	0.331\%  
			BSF update time& & & &\\ query time \textperthousand 
%			&0.0072\textperthousand  
%			&0.027\textperthousand
%			&0.062\textperthousand
%			&0.54\textperthousand
			&0.007\textperthousand  
			&0.03\textperthousand
			&0.06\textperthousand
			&0.5\textperthousand
			\\
			\hline
		\end{tabular}	
	} % font size
	\vspace*{0.1cm}
	
	\label{tabledtw5}
\end{table}

\begin{table}[tb]
	\centering
	\makeatletter\def\@captype{table}\makeatother \caption{Update Frequency of the BSF array (DTW distance, 10\% warping).}
	{
		\begin{tabular}{|c|c|c|c|c|}
			\hline
			%Number of nearest neighbors
			& {\bf 1-NN} & {\bf 5-NN} & {\bf 10-NN} & {\bf 50-NN}\\
			\hline
			number of BSF& & & & \\
			%updates/query&45.83&	124.73&	221.36&	854.04\\
			updates/query&45.8&	124.7&	221.4&	854.1\\
			\hline
			BSF update time& & & &\\ 
			%$\mu$sec/query&11.91 &31.46&72.51&574.22
			$\mu$sec/query&11.9 &31.5&72.5&574.2
			\\
			\hline
			%		BSF update time\\ query time \% &0.001\% & 	0.011\% & 	0.038\% &	0.331\%  
			BSF update time& & & &\\ query time \textperthousand 
%			&0.0031\textperthousand&
%			0.0076\textperthousand&
%			0.0016\textperthousand&
%			0.12\textperthousand
			&0.003\textperthousand&
			0.008\textperthousand&
			0.002\textperthousand&
			0.1\textperthousand
			\\
			\hline
		\end{tabular}	
	} % font size
	\vspace*{0.1cm}
	
	\label{tabledtw10}
\end{table}

\begin{table}[tb]
	\centering
	\caption{Time cost of lower bound distance calculations.}\label{lbcost}
	\begin{tabular}{ccc}
		\toprule
		Distance measure& SISD&SIMD \\
		\midrule
		Euclidean & 107.5ns&31.4ns\\
		DTW& 122.7ns&30.5ns\\ %with query envelope calculation %30.5 without envelope, only the simd part
		\bottomrule
	\end{tabular}
\end{table}
%ED lowerbound 31.42ns
%DTW lowerbound: 30.483898ns

\begin{table}[tb]
	\centering
	\caption{Time cost of real distance calculations.}
	\label{realcost}
	\begin{tabular}{ccc}
		\toprule
		Distance measure& Query& Time (milliseconds)\\
		\midrule
		Euclidean& 1-NN& 40.3\\
		& 50-NN& 56.3\\
		DTW& 1-NN& 679.3\\
		& 50-NN& 879.7\\
		\bottomrule
	\end{tabular}
\end{table}
%ED-1-NN ParIS+: 256.4252588 milliseconds          
%ED-1-NN MESSI: 40.3220225 milliseconds   
%ED-50-NN ParIS+: 704.4499738 milliseconds         
%ED-50-NN MESSI: 56.25768125 milliseconds   
%DTW-1-NN ParIS+: 3.6720443 seconds          
%DTW-1-NN MESSI: 0.6792702 seconds          
%DTW-50-NN ParIS+: 8.3637321 seconds           
%DTW-50-NN MESSI: 0.8796757 seconds          

\ignore{\subsection{VS. Delta\_top Index}
	We conducted experiments, where we compared the performance of MESSI to that of the DeltaTop index approach~\cite{piatov2019interactive}.
We recall here that DeltaTop has been designed to solve the problem of self-join subsequence similarity search (given a very long series and a short query sequence that belongs in that series, identify other subsequences similar to the query in the very long series).
The above problem is different from the problem that MESSI solves, namely, whole matching (given a query series and a collection of a large number of series of the same length as the query, identify the series in the collection that are similar to the query).
In addition, DeltaTop works only with Euclidean distance and for non Z-normalized data (which means that it cannot detect similar matches after scaling/shifting).
The solution described in [52] means that there is no straight-forward way to make DeltaTop work for whole-matching, and it can definitely not work for neither Z-normalized sequences, nor with the DTW distance. 
On the contrary, MESSI supports both the Euclidean and DTW distances, for both Z-normalized and non Z-normalized data.\\
Following the above observations, we compared the two approaches in the following setting: subsequence similarity matching on non Z-normalized data using the Euclidean distance.
We adapted MESSI to support subsequence matching as follows: given the long series (in which we need to identify the most similar subsequence to the query), we extract subsequences from the long series by sliding a window window (of the same length as the query) over the entire length of the series, and then index all these subsequences.
In our experiments, we used the DeltaTop code provided by the authors of [52].\\
In Table~\ref{tabler1} (below), we report the index creation time for DeltaTop and MESSI, as we grow the length of the long series from 20,000 to 65,000 points.
We observe that DeltaTop needs close to 5 minutes in order to build the index for a dataset of 65,000 points, while MESSI can complete the same task 4 orders of magnitude faster, in 66 milliseconds.\\
The reason why we conducted these experiments with such small datasets is shown in Table~\ref{tabler3} (below), where we report the amount of main memory needed by each approach.
DeltaTop needs 240GB of main memory in order to build the index for the 65,000 point dataset. 
Since our server has 256GB of main memory, we could not run larger experiments for DeltaTop.
Note that for the same dataset of 65,000 points, MESSI build the index occupying 50MB of main memory, 4 orders of magnitude less than DeltaTop.
This is why in our paper we could scale to datasets with up to 51 billion points (200GB), 6 orders of magnitude larger than the largest dataset we could support in our server for DeltaTop.\\
In Table~\ref{tabler2} (below), we report the query answering time (averaged over 100 random queries) for DeltaTop and MESSI. 
In this case, DeltaTop proves competitive: it answers queries on the 65,000 point dataset in 0.3ms, while MESSI needs 8.5ms.
The good performance of DeltaTop stems from the fact that this algorithm builds the index by having knowledge of the query, and by pre-computing all possible distances between the query and the candidate answers (which also explains the large memory requirements).
This means that in order for DeltaTop to answer a new query (that does not already belong to the indexed long series), we need to build a new DeltaTop index.
Therefore, the true cost of query answering using DeltaTop needs to include the index creation cost, as well.\\
Given the very high DeltaTop index creation cost, it is evident that this approach is not viable in practice. 
For example, in the time that DeltaTop needs to build its index and answer a single query, MESSI builds the index and answers more than 32,000 queries.
Moreover, as we discussed earlier, the main memory requirements of DeltaTop mean that it can only be applied to very small scale problems.
For all these reasons, and the fact thatDerta has been designed to solve a different problem than the one we address in our paper, we believe that it would not be a fair comparison to include in our paper. 
\begin{table}[h!]
\centering
\textcolor{blue}{
\caption{Index creation time.}\label{tabler1}
%\hspace*{-0.3cm}	
\begin{tabular}{|c|c|c|c|c|}
	\hline
	& {\bf 20k} & {\bf 35k} & {\bf 50k} & {\bf 65k} \\
	\hline
	DeltaTop index &26.7s&	77.12s&	169.35s&	270.5s\\
	\hline
	MESSI&53ms	&56ms&	63ms&	66ms\\
	\hline
\end{tabular}}
\end{table}
\begin{table}[h!]
\centering
\textcolor{blue}{
\caption{Memory size.}\label{tabler2}
\begin{tabular}{|c|c|c|c|c|}
	\hline
	& {\bf 20k} & {\bf 35k} & {\bf 50k} & {\bf 65k} \\
	\hline
	DeltaTop index &23GB&	66.94GB&	142.7GB&	239.2GB\\
	\hline
	MESSI&32.81MB	&38.22MB&	46.09MB&	50.11MB\\
	\hline	
\end{tabular}
}
\end{table}
\begin{table}[h!]
\centering
\textcolor{blue}{
\caption{Query answering time.}\label{tabler3}
%\hspace*{-0.3cm}	
\begin{tabular}{|c|c|c|c|c|}
	\hline
	& {\bf 20k} & {\bf 35k} & {\bf 50k} & {\bf 65k} \\
	\hline
	DeltaTop index &	0.23ms&	0.28ms&	0.3ms&	0.27ms\\
	\hline
	MESSI&9.7ms	&8.95ms&	8.3ms&	8.45ms\\
	\hline
\end{tabular}
}
\end{table}
} % ignore deltatop

\section{Related Work}
\label{sec:rel}

%\noindent{\bf [Data series summarization and indexing]}
Various dimensionality reduction techniques exist 
%to transform data series into summarizations that enable approximating and lower bounding the distance between any two data series.  Examples include generic Discrete Fourier Transforms (DFT)~\cite{Agrawal1993,DBLP:conf/sigmod/FaloutsosRM94,Rafiei1997,Rafiei1998}, Piecewise Linear Approximation (PLA)~\cite{Keogh1997}, Singular Value Decomposition (SVD)~\cite{Korn97,RaviKanth1998}, Discrete Haar Wavelet Transforms (DHWT)~\cite{ChanF99, DBLP:conf/kdd/KashyapK11}, Piecewise Constant Approximation (PCA), and Adaptive Piecewise Constant Approximation (APCA)~\cite{Chakrabarti2002}, as well as data series specific techniques such as Piecewise Aggregate Approximation (PAA)~\cite{Keogh2000}, Symbolic Aggregate approXimation (SAX)~\cite{Lin2003} and the indexable Symbolic Aggregate approXimation ($i$SAX)~\cite{shieh2008sax,Camerra2010}.
%These smaller summarizations can
for data series, which can then 
be scanned and filtered~\cite{DBLP:conf/kdd/KashyapK11,Li1996} or indexed and pruned~\cite{evolutionofanindex, wang2013data, shieh2008sax,Shieh2009, zoumpatianos2016ads, peng2018paris, DBLP:journals/pvldb/KondylakisDZP18, DBLP:conf/ssd/Chatzigeorgakidis19, ulissevldb, DBLP:journals/vldb/KondylakisDZP19, DBLP:conf/gis/Chatzigeorgakidis19, coconutpalm, DBLP:conf/edbt/Chatzigeorgakidis21}
%,DBLP:conf/icdm/YagoubiAMP17,ulisse} 
%to avoid accessing parts of the data that do not contain the nearest neighbor. 
during query answering, including deep-learned methods~\cite{seanet}; for a complete discussion of such techniques, we refer the reader to two recent tutorials on the subject~\cite{bigdatatutorial,DBLP:conf/edbt/EchihabiZP21}.

We follow the same approach of indexing the series based on their summaries,
%a smaller summarization to enable pruning, 
though our work is the first to exploit the parallelization opportunities offered by modern hardware, 
%namely, the SIMD, multi-core and multi-socket architectures, 
in order to accelerate in-memory index construction and similarity search for data series. 
The work closest to ours is Paris/ParIS+~\cite{peng2018paris, parisplus}, which exploits modern hardware, but was designed for disk-resident datasets (see also Section~\ref{sec:prelim}).
%We discussed this work in more detail in Section~\ref{sec:prelim}.

FastQuery is an approach used to accelerate search operations in scientific data~\cite{chou2011fastquery}, 
%This approach is geared towards scientific applications that produce and consume huge volumes of data, and avoids the time-consuming process of uploading these data into a data management system.
%%FastQuery exploits the many-core architecture, and more specifically the threading technology, for parallelizing processing using the array data model.
%%Instead, the focus of our work is on indexes for data series similarity search.
%FastQuery is 
based on the construction of bitmap indices.
% on different features of the data. 
In essence, the iSAX summarization used in our approach is an equivalent solution, 
though, specifically designed for sequences (which have high dimensionalities).

%\noindent{\bf [Modern Hardware]}
%\subsection{Distance calculation in SIMD}
%\textbf{Distance calculation in SIMD: }
%\noindent{\bf [Data structures for SIMD]}
The interest in using SIMD instructions for improving the performance of data management solutions is not new~\cite{zhou2002implementing}.
However, it is only more recently that relatively complex algorithms were extended in order to take advantage of this hardware characteristic. 
Polychroniou et al.~\cite{polychroniou2015rethinking}  introduced design principles for efficient vectorization of in-memory database operators (such as selection scans, hash tables, and partitioning). 
%and fundamental vector operations that are frequently reused. 
%Based on that they design and implement vectorized selection scans,hash tables, and partitioning algorithm~\cite{polychroniou2015rethinking}. 
%We use the fundamental vector operations in our distance computation.
%\textcolor{green}{The problem of developing a SIMD-friendly B+-Tree index was 
%%only very 
%recently studied~\cite{ZeuchFH14}, 
%%The authors of this study focused 
%with a focus on a basic B+-Tree method, the k-ary search algorithm.}
%%The main idea was to make the inner node of the index as large as the SIMD register, and use bit-mask conditional branch instructions to perform the search in the SIMD. 
%%%\textcolor{red}{And they present two algorithms for linearizing a sorted list of keys to discovery the good search strategies base on SIMD. 
%%%In my paper, I use the idea of bit mask to do the conditional branch calculation in lower bounding distance calculation.}
%%%
For data series in particular, previous work has used SIMD for Euclidean distance computations~\cite{tang2016exploit}. 
Following~\cite{peng2018paris}, in our work we use SIMD both for the computation of Euclidean distances, 
as well as for the computation of lower bounds, 
which involve branching operations.

Multi-core CPUs offer thread parallelism through multiple cores and simultaneous multi-threading (SMT).
Thread-Level Parallelism (TLP) methods, 
like multiple independent cores and hyper-threads are used to increase efficiency~\cite{gepner2006multi}.

%Modern multi-core CPUs have individual L1- and L2-caches, and shared L3-cache~\cite{prinslow2011overview}, which is beneficial for the cooperation among cores~\cite{zhang2010does}.
%\subsection{High performance temporal indexing}
%\textbf{High Performance Temporal Indexing: }
A recent study proposed a high performance temporal index similar to time-split B-tree (TSB-tree), 
called TSBw-tree, which focuses on transaction time databases~\cite{LometN15}. 
%\textcolor{red}{It use LLAMA subsystem~\cite{levandoski2013llama} which only recycle the previously flushed part of main memory(PS: the original sentence is: LLAMA can reclaim main memory by dropping only previously flushed portions of pages from memory,thus not requiring any I/O, even when swapping out “dirty” pages.) and use log-structuring to cut the data swapping between memory and storage, avoid random write and reduce the cost of write to handle the conflict of reader and writer.
%It remove the conflict of reader and writer because it layers on LLAMA subsystem. 
%So that }this system can avoid the splitting during moving historical node and trouble splitting index pages by their mapping table architecture. {\bf ??? rewrite the last two sentences; it is not clear what we mean; readers do not know LLAMA; explain the idea that makes it work ???}
%The proposed solution is based on a latch-free, page-oriented cache and storage manager, which is optimized for modern hardware.
Binna et al.~\cite{binna2018hot}, present the Height Optimized Trie (HOT), 
a general-purpose index structure for main-memory database systems, while Leis et al.~\cite{leis2013adaptive} describe an in-memory adaptive Radix indexing technique that is designed for modern hardware. 
%It use data dependent adapt span span of each node depending on the data distribution 
%and fixed maximum fanout k (adaptive less than k) to achieve space efficiency and high speed. 
%We use in-memory iSAX representation based tree 
%and focus on reduce the workload of distance calculation.
Xie et al.~\cite{xie2018comprehensive}, 
study and analyze five recently proposed indices, 
i.e., FAST, Masstree, BwTree, ART and PSL 
and identify the effectiveness of common optimization techniques, 
including hardware dependent features such as SIMD, NUMA and HTM. 
They argue that there is no single optimization strategy that fits all situations, due to the differences in the dataset and workload characteristics. 
Moreover, they point out the significant performance gains that the use of modern hardware features %, such as SIMD processing and multiple cores 
bring to in-memory indices. 

We note that the indices described above are not suitable for data series (or very high-dimensional data), which is the focus of our work, and which pose very specific data management challenges with their hundreds, or thousands of dimensions (i.e., the length of the sequence).
%
%\subsection{Similarity Searches on the GPU}
%\textbf{Similarity Searches on the GPU: }
%\textcolor{green}{Graphics Processing Units (GPUs) are another modern hardware option, 
%which allows for massively parallel computations. 
%A recent study described the use of a GPU in order to accelerate similarity search in a Trajectory Indexing system~\cite{GowanlockC16}.
%%This study proposed an efficient way to handle the memory and the communication between CPU and GPU. 
%%It also proposed an indexing technique for efficient distance threshold search in the GPU. 
%In our work, we do not use GPUs; though, it is a very interesting research direction, 
%and deserves to be studied in its own right.}
%
%\textcolor{red}{
%Yong Zhang et al., design a B+-tree like index to answer Exact Set Similarity Search 
%and define a global order for the set records to efficiently import them into the index~\cite{zhang2017efficient}. 
%In their index, each set record x in the data collection is represented as a vector and mapped into an integer 
%to show its relative “position” in the global order. Therefore,
%it only need to locate the lower and upper bound in the global order 
%according to the given query and perform range query on the tree index. 
%they also devise a bucket-based method 
%to reduce the dimension of the vectors so as to improve the query performance.
%} {\bf ??? why do need this reference? is the author in the PC? does it use modern hardware? is it in-memory? ???}
%
Techniques specifically designed for modern hardware and in-memory operation have also been studied in the context of adaptive indexing~\cite{alvarez2014main}, and data mining~\cite{tatikonda2008adaptive}.

%Victor Alvarez et al., compare several parallel main memory adaptive indexing
%(Standard cracking, Hybrid crack sort, Coarse granular index, Radix sort, STL sort(C++), Parallel standard cracking). 
%They indicate that when we choose the standard cracking algorithm or build a full index base on the performances. 
%Moreover, they conclude we need minimize synchronization among threads 
%to designs a fast main memory adaptive indexing for multi-core systems~\cite{alvarez2014main}.
%
%Shirish Tatikonda et al., 
%implement an adaptive memory conscious approach for frequent subtree mining~\cite{tatikonda2008adaptive}. 
%They mention the strategy to design the efficient in-memory parallel algorithm with the trees. 
%PS: This is 2 pages paper I think we can only mention in one sentence that it is helpful in general.

Piatov et al. propose DeltaTop, a fast time series subsequence matching method~\cite{piatov2019interactive}, where the query sequence is itself part of the dataset (i.e., self-join). 
Their method uses a prefix-sum Euclidean distance matrix to accelerate subsequence matching, and supports search in multi-variate time series. 
The authors provide a parallel (multi-core) implementation of their method.
Compared to our work, we observe that they solve the problem of (self-join) subsequence similarity matching, only for non Z-normalized sequences using the Euclidean distance. 
In contrast, we solve the whole-matching problem~\cite{lernaeanhydra}, supporting non Z-normalized and Z-normalized sequences, using both the Euclidean and the DTW distances.
Even when the two approaches are compared in the specific setting of subsequence similarity search\footnote{MESSI can be adapted to support subsequence matching as follows: given a long series (in which we need to identify the most similar subsequence to the query), we extract subsequences from the long series by sliding a window window (of the same length as the query) over the entire length of the series, and then index all these subsequences.} on non Z-normalized data using the Euclidean distance, we observed that DeltaTop's high index creation time and memory cost did not allow it to scale to sequence collections with more than 100K points (i.e., considerably smaller than the datasets we used for evaluating MESSI).

Finally, KV-Match~\cite{wu2019kv} and its improvement, L-Match~\cite{DBLP:journals/access/FengWWW20}, are index structures that can support similarity search in a distributed setting.
Nevertheless, we note that they were developed for subsequence matching on disk-resident data, while the focus of MESSI is on whole-matching for in-memory data.

\section{Conclusions} % and Future Work} YOULA: NO FUTURRE WORK IS PROVIDED
\label{sec:conclusions}
Data series are a very common data type, with increasingly larger collections being generated by applications in many and diverse domains.
In many exploration and analysis pipelines, similarity search is a key operation, which is nevertheless challenging to efficiently support over large data series collections.
%In this work, 
%we observed that no data series indexing approach exploits the added functionality of modern hardware.
%As a result, 

In this work, we proposed MESSI, 
%an extension of the current state-of-the-art data series index, 
a data series index designed for in-memory operation by %and for 
exploiting the parallelism opportunities of modern hardware. 
MESSI is up to 4x faster in index construction and up to 11x faster in query answering than the state-of-the-art solution, and is the first technique to answer 
answering 
exact similarity search queries on 100GB datasets in $\sim$50msec.
This level of performance enables for the first time interactive %data exploration 
analytics on very large data series collections.
%We proposed parallel algorithms for index creation and query answering.
%First, we optimized the process of index creation, using parallelism and double buffering in such way that we %(almost) completely mask the CPU cost. 
%This technology can shade the calculating during the read time. 
%Then, we greatly accelerated the execution time of the query answering procedure, by developing algorithms that perform multiple, ordered disk access requests in parallel. 
%We experimentally validated the efficiency of the proposed approach using several synthetic and real datasets from diverse domains. 
%The results show the significant benefits of our solution, when compared to the state of the art techniques, and demonstrate our ability to manage and analyze the existing, very large data series collections, in a scalable fashion.

%Part of our future work is to study in more depth parallel I/O techniques~\cite{ghodsnia2014parallel}, to combine our approach with solutions developed for distributed systems~\cite{dpisaxjournal}, and extend it to support other distance measures, such as DTW.

{Finally, we note that the ideas presented in this work are applicable to other indices that have a root node with a large fanout degree. 
This is true for other iSAX-based indices. 
For example, we could parallelize in a way similar to MESSI the ULISSE index~\cite{DBLP:journals/vldb/LinardiP20}, which supports queries of variable length, as well as the DPiSAX index~\cite{dpisaxjournal}, which is a distributed index operating on top of Spark (but currently not supporting parallel execution within each node of the Spark cluster).
It is an interesting open problem to study whether there exist 
efficient parallelization techniques for indexing schemes whose tree index does not satisfy this large fanout property that would result in better perfomance than MESSI.

%\vspace*{0.20cm}
%\noindent{\bf Acknowledgments}
%This work was partially supported by the Chinese Scholarship Council, FMJH Program PGMO, EDF, Thales and HIPEAC 4.
%Part of the work was performed while P. Fatourou was visiting LIPADE, Paris Descartes University, and while B. Peng was visiting CARV, FORTH ICS.

\vspace*{0.5cm}
\noindent{\bf [Acknowledgements]}
Work supported by  Investir l’Avenir, Univ. of Paris IDEX Emergence en Recherche ANR-18-IDEX-000, CSC, FMJH PGMO, EDF, Thales, HIPEAC 4.
Partly performed when P. Fatourou visited LIPADE and B. Peng visited CARV, FORTH~ICS.

\bibliographystyle{spmpsci} 
\bibliography{parisinmemory}

\end{document}